\newtheorem{theorem}{Theorem}
\newtheorem{lemma}[theorem]{Lemma}
\newtheorem{corollary}[theorem]{Corollary}
\newtheorem{claim}[theorem]{Claim}
\theoremstyle{remark} 
\newtheorem*{remark}{Remark}
\def\defn#1{\textit{\textbf{\boldmath #1}}}
\newcommand{\NN}{\ensuremath{\mathbb{N}}\xspace}
\newcommand{\RN}{\ensuremath{\mathbb{R}}\xspace}
\newcommand{\EQ}{\ensuremath{\exists\mathbb{Q}}\xspace}
\newcommand{\ER}{\ensuremath{\exists\mathbb{R}}\xspace}
\newcommand{\VR}{\ensuremath{\forall\mathbb{R}}\xspace}
\newcommand{\problemname}[1]{\textnormal{\textsc{#1}}\xspace}
\newcommand{\ETR}{\problemname{ETR}}                         
\newcommand{\ETRINV}{\problemname{ETR-INV}}                  
 \newcommand{\ISO}{\problemname{ISO}}                         
 \newcommand{\RAC}{\problemname{RAC}}                           
\newcommand{\hide}[1]{}
\newcommand{\remove}[1]{{}}
\title{Recognizing Penny and Marble Graphs is Hard \\ for Existential Theory of the Reals}
\author[1]{Anna Lubiw\thanks{alubiw@uwaterloo.ca}}
\author[2]{Marcus Schaefer
\thanks{mschaefe@depaul.edu}}
\affil[1]{University of Waterloo}
\affil[2]{DePaul University}
\begin{document}

\maketitle

\begin{abstract}
We show that the recognition problem for penny graphs 
(contact graphs of unit disks in the plane) is \ER-complete, that is, computationally as hard as the existential theory of the reals, even if a combinatorial plane embedding of the graph is given.
The exact complexity of the penny graph recognition problem has been a long-standing open problem. 

We lift the penny graph result to three dimensions and show 
that the recognition 
problem for marble graphs (contact graphs of unit balls in three dimensions) is \ER-complete. 

Finally, we show that rigidity of penny graphs is \VR-complete and look at grid embeddings of penny graphs 
that are trees. 
\end{abstract}

\section{Introduction}
\label{sec:intro} 

\defn{Penny 
graphs} are contact graphs of unit
disks in the plane, or equivalently, contact graphs of disks of equal radius in the plane.
For example, $K_4-e$ is a penny graph as the realization in Figure~\ref{fig:K4me} shows, but $K_4$ is not.
\begin{figure}[htb]
    \centering
\includegraphics[height=1in]{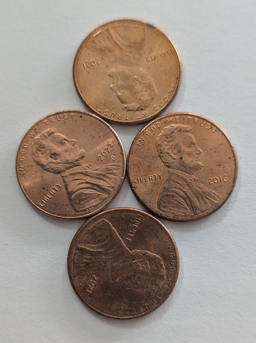}
    \caption{A penny graph realization of $K_4-e$.
    }
    \label{fig:K4me}
\end{figure}

Similarly, Figure~\ref{fig:grid} proves that the $4 \times 4$ grid is a penny graph, but as the middle illustration shows, the $4 \times 4$ grid has many degrees of freedom; this can be fixed by bracing the sides of the grid, as shown in the right illustration; the braced $4\times 4$ grid behaves like a rhombus (a property we will exploit later).

\begin{figure}
    \centering
\includegraphics[height=1.5in]{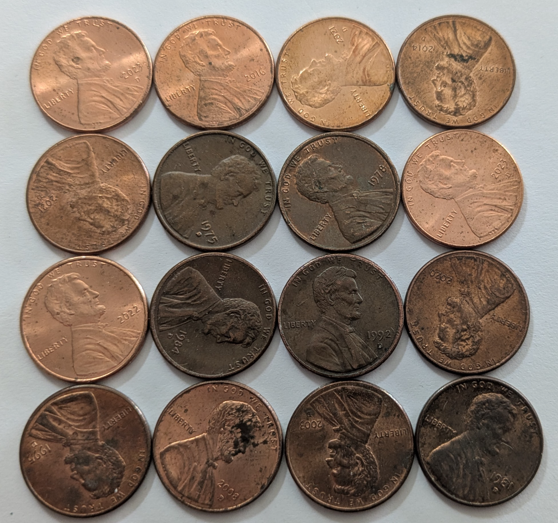}
 \hskip .1in
\includegraphics[height=1.5in]{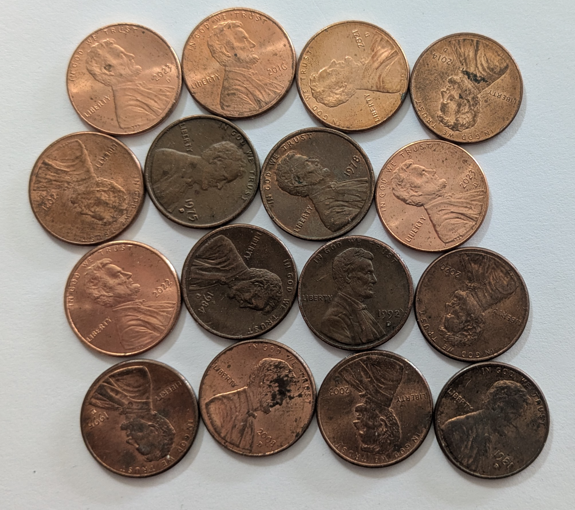}
 \hskip .1in
\includegraphics[height=1.5in]{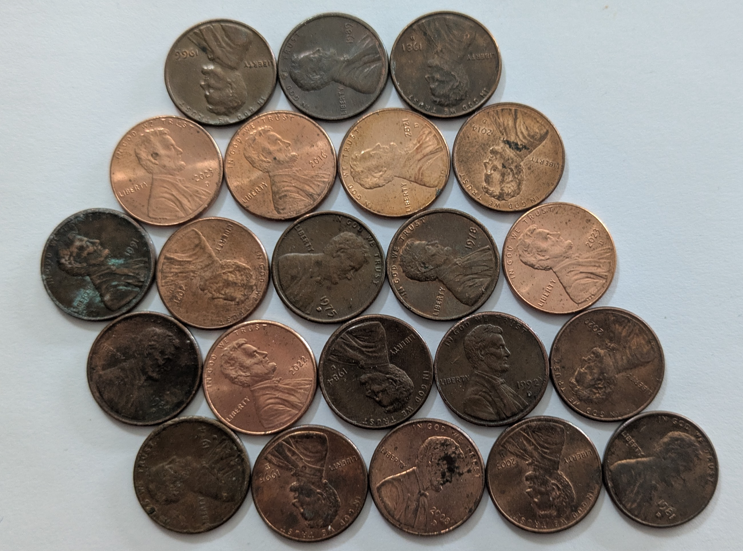}
    \caption{{\em (Left)} the $4 \times 4$ grid as a penny graph; {\em (middle)} a perturbed realization of the $4\times 4$ grid; {\em (right)} a braced $4\times 4$ grid with one degree of freedom. 
    }
    \label{fig:grid}
\end{figure}

Our main result is that the recognition problem for penny graphs 
is complete for \ER, the existential theory of the reals.  
Until now, the best complexity result for penny graphs was due to Breu and Kirkpatrick~\cite{BK96,BK98} who showed 
in 1996
that recognizing penny graphs is \NP-hard, and remains hard even if the radii of the disks can vary slightly.   As noted by them, and
also by Hlin{\v e}n\'y and Kratochv\'il~\cite{HK01},
membership in NP is not at all obvious---for the straight-forward approach, the issue  
is how to represent the coordinates of the disk centers.
The \NP-hardness result of Breu and Kirkpatrick encoded the true/false settings of Boolean variables as different choices in the combinatorial embedding.  This approach left open 
the complexity of the problem 
when the input includes a specified combinatorial embedding. 
Our proof shows that this variant is \ER-complete as well.  

The \ER-hardness of penny graph recognition has the usual consequences: the coordinates in a realization of a penny graph can be algebraic numbers of arbitrary (algebraic) complexity.\footnote{A number is algebraic if it is the root of an integer polynomial; we can take the size of the encoding of such a polynomial as a measure of the complexity of the algebraic number.} We show that for trees this is not the case, rational coordinates with double-exponential precision are sufficient to realize any penny graph which is a tree.

Our proof for penny graphs allows us to prove and/or strengthen \ER-completeness results for recognition problems for several other graph classes including matchstick and unit-distance graphs and, in $\RN^3$, marble graphs and contact graphs of balls (of possibly differing radii). 

Another consequence of our reduction is that testing rigidity of a given penny graph configuration is complete for \VR, the {\em universal theory of the reals}. Previously, \VR-completeness of rigidity testing had been known for linkages (Schaefer~\cite{S13}) and matchstick graphs (Abel, Demaine, Demaine, Eisenstat, Lynch, and Schardl~\cite{ADDELS25}).

A \defn{matchstick graph} is a graph that has a planar straight-line drawing with all edges of unit length.  The class of penny graphs is properly contained in the class of matchstick graphs. See Figure~\ref{fig:unit-dist}, which also shows classes of unit-distance graphs. 
A \defn{strict unit-distance graph} is a graph whose vertices can be placed at distinct points in the plane such that two vertices are distance~1 apart  if and only if there is an edge between them. 
A \defn{weak unit-distance graph} is a subgraph of a unit-distance graph, i.e., the condition is ``if'' rather than ``if and only if''.\footnote{Some nomenclature omits ``strict'' and some omits ``weak''. We use both to avoid confusion.}  
A matchstick graph can equivalently be defined as a \defn{non-crossing} weak unit-distance graph, 
meaning that it has a weak unit-distance representation in which edges do not 
intersect 
except at a common endpoint. 
Figure~\ref{fig:unit-dist} also shows the class of non-crossing strict unit-distance graphs which lies between penny and matchstick graphs.

\begin{figure}
    \centering
    \includegraphics[width=0.6\linewidth]{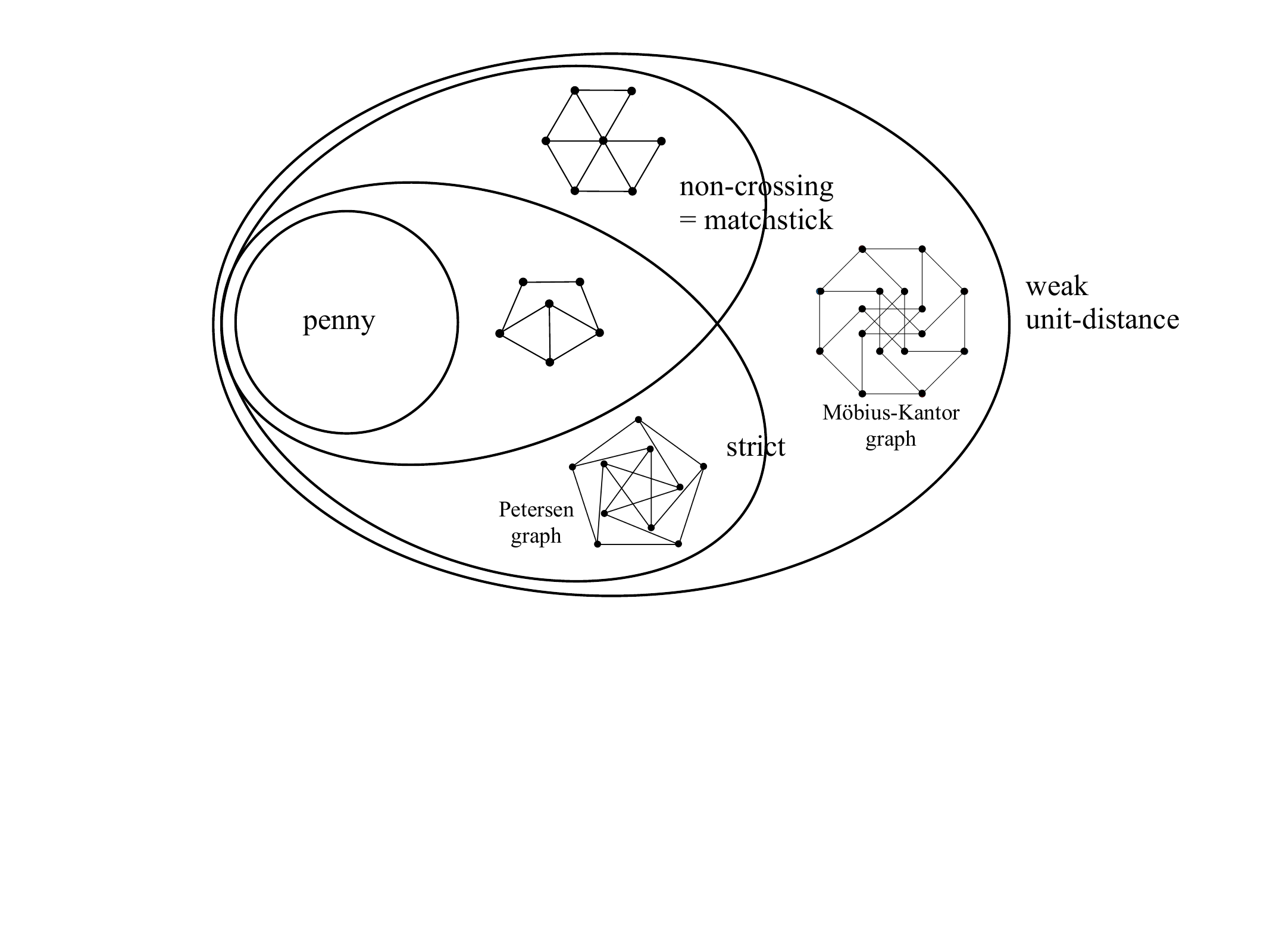}
    \caption{Subclasses of unit-distance graphs with examples to show that containments are proper. 
    }
    \label{fig:unit-dist}
\end{figure}


Schaefer~\cite{S13}
showed that 
weak/strict unit-distance graphs are \ER-complete to recognize~\cite{S13}, but his hardness construction 
requires crossings. 
Abel, Demaine, Demaine, Eisenstat, Lynch, and Schardl~\cite{ADDELS25} proved that the problem 
for weak unit-distance remains \ER-complete for crossing-free configurations, in other words, the recognition problem for matchstick graphs is \ER-complete.
We strengthen their result by showing that the problem remains hard when the combinatorial plane embedding of the graph is specified.\footnote{This result may follow from the construction in~\cite{ADDELS25} as well.} 
We also prove \ER-completeness of recognizing 
non-crossing strict unit-distance graphs, with or without a  specified combinatorial embedding.
Thus all classes shown in Figure~\ref{fig:unit-dist} are \ER-complete to recognize.

Penny graphs generalize to \defn{marble graphs} which are contact graphs of equal radius balls in $\RN^3$. 
Hlin\v en\'y~\cite{H97} proved that the problem of recognizing marble graphs is 
\NP-hard. 
We prove that the problem is \ER-complete. 
We also show \ER-completeness of recognizing contact graphs of balls of possibly different radii in $\RN^3$.  This problem is easy in two dimensions, since the contact graphs of disks
(the \defn{coin graphs})
are precisely the planar graphs~\cite{K36,S03}.

\paragraph{Main Result.}
Our hardness results are proved via a reduction from an \ER-complete problem called \ETRINV which asks for real values for variables in the range $[\frac{1}{2},2]$ to satisfy a formula~$\varphi$ that is a conjunction of constraints of the form $x + y = z$ and $x \cdot y = 1$. 
For more details on \ETRINV, see Section~\ref{sec:ETR} or the survey~\cite{SCM24}.
Our reduction produces a graph $G$ with a combinatorial plane embedding. A \defn{combinatorial plane embedding} is a collection of circular permutations of 
the edges incident to each vertex 
that corresponds to a plane embedding of the graph~\cite[Chapter 4]{MT01}.
The reduction has the following properties. 

\begin{theorem}\label{thm:penniesER}
There is a polynomial time algorithm that takes as input an \ETRINV\ formula $\varphi$ and constructs a graph $G$ and a combinatorial plane embedding $D$ of $G$ such that
\begin{itemize}
    \item 
    if $\varphi$ is true, then $G$ has a penny graph embedding realizing $D$,
    and
    \item if $\varphi$ is false, then $G$ is not a  weak unit-distance graph.
\end{itemize}
\end{theorem}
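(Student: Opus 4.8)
The plan is a gadget reduction from \ETRINV. The graph $G$ will be a large planar structure assembled from three kinds of pieces: a rigid \emph{scaffold} built by bracing patches of the unit triangular grid (exploiting the rhombus-like behaviour of braced grids noted above), \emph{wires} that carry a real value as the position of a slider vertex along a braced track, and \emph{gadgets} (one per constraint of $\varphi$, together with crossing gadgets) that enforce the relations $x+y=z$ and $x\cdot y=1$ among the values on the incident wires. The combinatorial embedding $D$ is simply the rotation system of the intended planar drawing. Everything is designed around the window $[\tfrac12,2]$: wires carry values in $[\tfrac12,2]$ only, which keeps every gadget in a non-degenerate regime, keeps the whole drawing of bounded size, and leaves enough room that distinct disks stay at distance $\ge 1$ in the penny realization — while a gadget's constraint-enforcement will only ever use that its edges have length exactly $1$, i.e., the weak unit-distance property. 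This asymmetry is exactly what lets the same $G$ witness hardness simultaneously for penny, matchstick, and (non-crossing) strict unit-distance graphs.

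I would build the pieces as follows. \emph{Scaffold:} a braced triangular-grid region which, in any weak unit-distance realization, is forced up to a global isometry to its intended shape; this fixes a coordinate frame against which the rest of the construction can be measured. \emph{Wires and copying:} a value $x\in[\tfrac12,2]$ is the displacement of a slider along a horizontal braced track attached to the scaffold; parallelogram/rhombus linkages copy a value from one track to another and turn corners, so wires can be routed anywhere. \emph{Addition:} place two sliders in series on a common track, so that the total displacement realizes $x+y=z$. \emph{Inversion:} realize two similar right triangles with legs of length $x$ and $y$ against a common reference leg of length $1$, the similarity being enforced by a parallelogram that copies the shared angle; the intercept theorem then yields $xy=1$. \emph{Crossings:} since the variable–constraint incidence structure of $\varphi$ need not be planar, include a crossing gadget that lets two wires exchange their positions, so that the whole of $G$ can be laid out with the prescribed rotation system (equivalently, one may first reduce to a planar variant of \ETRINV). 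Each gadget is itself stitched together from braced unit-triangle sub-structures so that it has exactly the intended degrees of freedom.

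The two bullet points are then checked as follows. If $\varphi$ is true, take a satisfying assignment in $[\tfrac12,2]$, set every slider accordingly, and write down the resulting disk centres; using the $[\tfrac12,2]$ bounds and the fact that the drawing is ``spread out'', one verifies that adjacent disks touch, non-adjacent disks are separated by at least $1$, and the induced rotation system is $D$, so $G$ has a penny embedding realizing $D$. Conversely, suppose $G$ has any weak unit-distance realization. By rigidity of the scaffold the coordinate frame is determined up to isometry; then each wire's slider sits at a well-defined displacement $x\in\RN$, and the gadget rigidity lemmas say these displacements satisfy, respectively, $x+y=z$, $xy=1$, and the equalities imposed by the crossing gadgets — and lie in $[\tfrac12,2]$ — hence form a satisfying assignment for $\varphi$. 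Contrapositively, if $\varphi$ is false then $G$ is not a weak unit-distance graph. (Combined with the routine observation that all the recognition problems in question lie in \ER, this yields \ER-completeness.)

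The main obstacle is the converse direction, and specifically the rigidity lemmas it rests on. A weak unit-distance realization is a slippery object: edges must have length exactly $1$ but may cross, non-adjacent vertices may even coincide, and any subframework may be reflected or ``folded'' without changing any edge length. One must therefore prove that the braces really do eliminate every spurious realization of the scaffold, and that no folded or degenerate drawing of any gadget can satisfy its unit-length edges while violating its intended relation. Making all of this hold at once — rigid enough to defeat folds, yet still planar with the prescribed rotation system, and with all pairwise distances $\ge 1$ in the penny realization — is where essentially all the effort goes; the combinatorics of wiring the gadgets together and the arithmetic verifications are comparatively routine.
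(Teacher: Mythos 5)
Your high-level skeleton is right and close in spirit to the paper: reduce from \ETRINV; carry values around on ``wires''; enforce $x+y=z$ by concatenating displacements and $x\cdot y=1$ via similar right triangles; and observe that the $[\tfrac12,2]$ window is exactly what lets the gadgets be simultaneously rigid, planar, and spread out enough to separate non-adjacent disks. The inversion-by-similar-triangles idea is in fact what the paper does, and the paper also addresses non-planarity with cross-over gadgets (not by first planarizing \ETRINV, though that is a plausible alternative).

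There is, however, a genuine gap in the central mechanism. Your proposal encodes a variable as ``the position of a slider vertex along a braced track,'' but in a penny graph (or a unit-distance graph) there is no such thing as a slider: a vertex's position is pinned down by its fixed-length incidences, and a continuous degree of freedom cannot be obtained by letting a vertex wander along a rigid rail. You need an explicit one-degree-of-freedom \emph{linkage} whose mobile part traces a straight segment. The paper builds this from Hart's A-frame (a classical straight-line inversor): two mirrored A-frames form a ``flex gadget'' that maintains a pair of parallel bars whose separation is the variable, and a ``channel-restricting'' gadget clips the separation to a range that forces the flex gadget itself into non-degenerate, non-crossing configurations. Without some such inversor your slider is not realizable, and the step ``the wires can be routed anywhere'' is doing a great deal of unjustified work.

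A related issue is how you propose to kill spurious weak unit-distance realizations. You state, correctly, that in the weak unit-distance world triangles can fold back on themselves and sub-frameworks can reflect, and you then assert that bracing will ``eliminate every spurious realization of the scaffold.'' That assertion is exactly what must be proved and is false for generic braced triangular patches; the paper does not rely on a globally rigid scaffold at all. Instead it (i) works at the level of \emph{angle-constrained linkages}, where the geometry is forced by explicit angle constraints and integer edge lengths, proves that if $\varphi$ is false the linkage has \emph{no} realization, crossing or not (using that the channel-restricting gadget confines Hart's A-frame to its non-crossing regime); and (ii) only then converts the linkage to a penny graph, replacing each bar with a braced line of pennies and each vertex with a vertex gadget whose local rigidity comes from hexagonal-wheel substructures (a degree-$6$ penny surrounded by six mutually touching pennies is uniquely determined up to isometry), with the trickiest case being the apex of the A-frame, which is simulated by a pantograph-like $4\times4$ grid that ``behaves like a rhombus.'' This two-step factorization---linkage realizability first, penny simulation second---is what lets the authors control precisely which degenerations are possible; your one-step plan would have to re-derive all of that rigidity bookkeeping directly in the penny model, and as written it does not.
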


The conditions in the theorem imply that the constructed graph $G$ has a penny graph realization if and only if $\varphi$ is true.  Thus Theorem~\ref{thm:penniesER} implies that recognizing penny graphs---with or without a specified combinatorial plane embedding---is \ER-hard. 
Since in the case that $\varphi$ is false, $G$ cannot even be realized as a 
weak unit-distance graph, we can draw a stronger conclusion.

\begin{corollary}\label{cor:penniesER}
 It is \ER-complete to recognize penny graphs, non-crossing unit-distance graphs, and matchstick graphs, with or without a specified combinatorial embedding.
\end{corollary}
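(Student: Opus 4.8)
The plan is to prove the two directions of \ER-completeness separately, with Theorem~\ref{thm:penniesER} doing essentially all the work on the hardness side.

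\emph{Membership in \ER.} For each class I would write the recognition problem as an \ER-sentence, existentially quantifying over a coordinate pair $(x_v,y_v)\in\RN^2$ for every vertex $v$ and imposing polynomial (in)equalities. For penny graphs: require $(x_u-x_v)^2+(y_u-y_v)^2 = 1$ for every edge $uv$ and $(x_u-x_v)^2+(y_u-y_v)^2 > 1$ for every non-edge $uv$ (the latter also forces all centers distinct, and it makes non-crossing of the segments automatic). For non-crossing strict unit-distance graphs: impose $=1$ on edges, $\ne 1$ on non-edges, distinctness of all coordinate pairs, and for every pair of edges sharing at most one endpoint a quantifier-free predicate (a Boolean combination of signs of $2\times 2$ determinants) asserting that the two closed segments meet only at a common endpoint. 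For matchstick graphs: drop the non-edge distance constraint, keeping $=1$ on edges together with distinctness and the same non-crossing predicate. In the variant where a combinatorial embedding $D$ is prescribed, additionally assert that the counterclockwise angular order of the neighbours around each vertex agrees with the rotation system $D$; this is again a quantifier-free orientation condition in the coordinates. In every case one obtains a polynomial-size \ER-sentence, so each problem lies in \ER.

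\emph{Hardness.} Here I would invoke Theorem~\ref{thm:penniesER} together with the inclusions in Figure~\ref{fig:unit-dist}: every penny graph is a non-crossing strict unit-distance graph, every non-crossing strict unit-distance graph is a matchstick graph, and every matchstick graph is a weak unit-distance graph. Given an \ETRINV\ formula $\varphi$, run the reduction of Theorem~\ref{thm:penniesER} to obtain $(G,D)$ in polynomial time. If $\varphi$ is true, then $G$ has a penny realization inducing $D$; that same drawing witnesses membership of $(G,D)$ --- and a fortiori of $G$ alone --- in each of the three classes, in both the ``with prescribed embedding'' and ``without embedding'' variants. If $\varphi$ is false, then $G$ is not even a weak unit-distance graph, hence not a matchstick graph, not a non-crossing strict unit-distance graph, and not a penny graph, irrespective of any embedding. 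Thus for each of these classes, and for each of the two variants, acceptance is equivalent to the truth of $\varphi$, which gives a polynomial-time many-one reduction from \ETRINV; combined with membership, all the stated problems are \ER-complete.

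\emph{Main obstacle.} Because Theorem~\ref{thm:penniesER} already produces a single instance that is simultaneously a positive instance of the strongest class (penny graphs, with the given embedding) and, in the ``no'' case, a negative instance of the weakest class (weak unit-distance graphs), there is no real mathematical difficulty remaining. The only points needing care are bookkeeping: confirming that the chain of class inclusions is exactly as drawn in Figure~\ref{fig:unit-dist} so that this sandwiching applies uniformly, and checking that the rotation-system constraint used in the ``with embedding'' variant is genuinely expressible over the reals --- which it is, being a conjunction of determinant-sign conditions --- so that membership is not lost. Both are routine.
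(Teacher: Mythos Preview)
Your proposal is correct and follows essentially the same approach as the paper: membership is a routine \ER-formulation, and hardness follows immediately from Theorem~\ref{thm:penniesER} via the sandwiching argument using the class inclusions penny $\subseteq$ non-crossing strict unit-distance $\subseteq$ matchstick $\subseteq$ weak unit-distance. The paper treats the corollary as an immediate consequence and does not spell out the membership details you provide, but your more explicit treatment is sound.
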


Our result also provides an alternate proof of the known result that recognizing weak/strict unit-distance graphs is \ER-complete.

\paragraph{Paper Outline.}
Theorem~\ref{thm:penniesER} is proved in Sections~\ref{sec:linkages} and~\ref{sec:penny-graphs}.  
In Section~\ref{sec:linkages} we 
construct from an ETR-INV formula~$\varphi$ 
a linkage---a graph with specified integer edge lengths and angle constraints between some pairs of incident edges---that has a plane
realization iff $\varphi$ is true. 
In Section~\ref{sec:penny-graphs} we show that the particular linkages we construct can be simulated by penny graphs.
The \ER-completeness results for marble and ball contact in $\RN^3$ graphs are proved in Section~\ref{sec:marbles}.
Section~\ref{sec:R} shows $\VR$-completeness of testing rigidity of penny graphs, and Section~\ref{sec:GE} looks at grid embeddings of penny graphs which are trees.
The remainder of Section~\ref{sec:intro} discusses history and background, and describes \ETRINV\ in more detail.

\subsection{History and Background}\label{sec:HB}

Penny graphs are the special case of coin graphs in which all coins have the same radius; this creates a connection with Koebe's theorem, which states that all planar graphs are coin graphs~\cite{K36}. As far as we can tell, the term ``penny graph'' was first introduced in the context of coin graphs and Koebe's theorem in the 1990s in~\cite{HR94,B96}. Penny graphs, under different names, are older though. 

Erd\"os~\cite{E46} introduced the minimum distance graphs (which are just the penny graphs) and proved that they are planar, and therefore have at most $3n-6$ edges; this led to research---still very active---on edge bounds in penny graphs; Harborth~\cite{H74} showed a sharp upper bound on the number of edges in penny graphs of $\lfloor 3n-\sqrt{12n-3}\rfloor$. 

Penny graphs have also been defined as 
contact graphs of unit (or equal-radius) disks in the context of {\em intersection graphs}. Breu and Kirkpatrick~\cite{BK98} proved that recognizing intersection graphs of unit disks, the so-called \defn{unit-disk graphs}, is \NP-hard.  They mention that their result extends to penny graphs (which they call ``unit disk touching graphs''). 
They make this result explicit in a conference paper~\cite{BK96} and strengthen it to show that the problems of recognizing intersection or contact graphs of disks is \NP-hard even if the radii of the disks are in an interval $[1,\rho]$ for any fixed $\rho \ge 1$.
Hlin{\v e}n\'y and Kratochv\'il~\cite{HK01} later showed that recognizing \defn{disk graphs}, intersection graphs of disks, is \NP-hard as well.

Breu and Kirkpatrick discuss the question of membership in \NP: ``[a]lthough grid-constrained versions of our recognition problems
(like grid graph recognition [...]) are in fact \NP-complete, it is
not clear that the unconstrained versions are in \NP''. 
They recognized the relationship of their problems to the theory of the reals, writing that ``membership in \PSPACE\ follows directly from results of Canny~\cite{C88}''. 
This result is more formally stated by Hlin{\v e}n\'y and Kratochv\'il in~\cite[Proposition 4.4]{HK01}. In this 2001 paper, the authors also write ``An important question is whether the recognition of disk graphs belongs to \NP. The
obvious approach to guess a disk representation and check it, does not work immediately,
since the coordinates and radii of disks in the representation may not be expressible
using polynomial number of bits.''

The \ER-completeness of both disk graphs and unit disk graphs was eventually shown by McDiarmid and M\"{u}ller~\cite{McDM13}, who also gave bounds on the required precision of a realization~\cite{McDM10}. 
Bowen, Durocher, L{\"o}ffler, Rounds, Schulz, and T{\'o}th~\cite{BDLRST15} showed that penny graph recognition remains \NP-hard even for trees (without a given embedding). Most of the \NP-hardness results are based on  
the logic engine of 
Eades and Whitesides~\cite{EW96},
which does not lend itself to \ER-hardness proofs.

Penny graphs generalized to higher dimensions give us unit-ball contact graphs, known as marble graphs in dimension $3$; here we touch on one of the most famous problems in geometry, Kepler's conjecture, now known as Hales' theorem, on the density of optimal sphere packings. Unit-ball contact graphs are known to be \NP-hard to recognize in $\RN^d$ for $d = 3, 4$ by Hlin\v en\'y~\cite{H97} and for $d = 8$ and $d = 24$ by Hlin\v en\'y and Kratochv\'il~\cite{HK01}. The latter paper is also the most recent survey on intersection and contact presentations with disks and balls. Both papers conjecture that the problem is \NP-hard for arbitrary dimension. 

Matchstick graphs, or equivalently, non-crossing weak unit distance graphs, seem to trace back to Harborth's ``Match Sticks in the Plane''~\cite{GW94}, in which he shows that there is a $3$-regular matchstick graph on $8$ vertices and a $4$-regular matchstick graph on $52$ vertices, now known as the {\em Harborth graph}.  Harborth conjectured that the upper bound of  $\lfloor 3n-\sqrt{12n-3}\rfloor$ edges for penny graphs extends to matchstick graphs.
The conjecture was settled only recently (in the affirmative) by Lavoll\'{e}e and Swanepoel~\cite{LS24}.

As mentioned earlier, Abel, Demaine, Demaine, Eisenstat, Lynch, and Schardl~\cite{ADDELS25} show that recognizing matchstick graphs is \ER-complete. Earlier \NP-hardness results are due to Eades and Wormald~\cite{EW90} and Whitesides~\cite{W92}. Cabello, Demaine and Rote~\cite{CDR07} showed that matchstick graphs are \NP-hard to recognize even if the graph is $3$-connected (and infinitesimally rigid).  Matchstick graphs are also known to be \NP-hard to recognize for caterpillars with an embedding~\cite{CCN20} and for trees (without a given embedding)~\cite{C20}. 

Matchstick graphs can also be seen as a special case of graphs with fixed-length embeddings, where each edge is assigned a length that has to be realized. This notion, with and without the planarity constraint, has also been investigated, starting with Yemini~\cite{Y79}, and with linkages in place of graphs (allowing edges and vertices to overlap). The \ER-completeness results from~\cite{S13} belong in this context. \ER-hardness here was foreshadowed by a result of 
Maehara~\cite{M91} who showed that any algebraic number can occur as a distance between two vertices in a rigid unit-distance graph. 

\subsection{The Existential Theory of the Reals and \ER-completeness}\label{sec:ETR}

How do we even determine whether a given graph is a penny graph? The problem is not obviously decidable, as it involves irrational numbers: any representation of $K_3$ as a penny graph requires at least one irrational coordinate (since a triangle with unit sides has irrational height). The answer, maybe surprisingly, comes from logic: We can express that $K_3$ is a penny graph as
\begin{equation}\label{eq:K3penny}
\begin{split}
(\exists x_1,y_1,x_2,y_2,x_3,y_3)\ & (x_1-x_2)^2+(y_1-y_2)^2 = 1\ \wedge \\
 & (x_1-x_3)^2+(y_1-y_3)^2 = 1\ \wedge \\
 & (x_2-x_3)^2+(y_2-y_3)^2 = 1. 
\end{split}
\end{equation}
Here $(x_i,y_i)$, $i \in \{1,2,3\}$, are the three centers of the pennies which we, arbitrarily but conveniently, modeled as having diameter $1$, so that the centers have distance $1$. 

Sentence \eqref{eq:K3penny} is true if interpreted over the real numbers, so $K_3$ is a penny graph. More generally, an $n$-vertex graph $G = (V,E)$ is a penny graph if and only if the sentence
\begin{equation}\label{eq:Gpenny}
\begin{split}
(\exists x_1, y_1, \ldots, x_n, y_n)\ & \bigwedge_{v_iv_j \in E} (x_i-x_j)^2+(y_i-y_j)^2 = 1\ \wedge \\
 & \bigwedge_{v_iv_j \notin E} (x_i-x_j)^2+(y_i-y_j)^2 > 1
\end{split}
\end{equation}
is true over the real numbers. Sentences~\eqref{eq:K3penny} and~\eqref{eq:Gpenny} are written in a language that logicians
refer to as the existential theory of the reals (\ETR). \ETR\ is defined as the set of all existentially quantified Boolean formulas
using constants $\{0,1\}$, arithmetical operators $\{+,\cdot\}$ and comparison operators $\{=, <\}$, and which are true sentences (no free variables) when interpreted over the real numbers. Examples include, $(\exists x)\ x\cdot x = 1+1$, and  $(\exists x_1, \ldots, x_n)\ x_1 = 1 + 1 \wedge \bigwedge_{i=1}^{n-1} x_i\cdot x_i = x_{i+1}$, but not 
$(\exists x)\ x\cdot x +1 = 0$, because it is not true over the real numbers, or $(\exists x)\ x = y$, because it is not a sentence (there is a free variable, $y$).

Tarski~\cite{T48} introduced the method of quantifier elimination and used it to show that \ETR\ is decidable. The decidability of \ETR\ then implies that the penny graph problem is decidable, since we saw how to translate any penny graph recognition problem efficiently into the language of \ETR. Our main result in this paper is that there is a translation in the reverse direction: we can efficiently (in polynomial time) translate a question of the form $\varphi \in \ETR$ into a penny graph recognition problem. So \ETR\ and penny graph recognition have the same computational complexity.

This bi-translatability of computational problems to \ETR\ was first noticed in the early 1990s and eventually led to the definition of a complexity class, \ER, the \defn{existential theory of the reals}, as the set of all decision problems that reduce (translate) in polynomial time to \ETR. It is known that $\NP \subseteq \ER \subseteq \PSPACE$~\cite{S91, C88}; in particular, \ER-hardness implies \NP-hardness. For history and more detailed background, see~\cite{M91,SCM24}. 
The class \ER\ captures the complexity of many computational problems involving real numbers, many of them in computational geometry, such as the segment intersection graphs~\cite{KM94}, the rectilinear crossing number~\cite{B91}, the art gallery problem~\cite{AAM22} and many others. The recent compendium~\cite{SCM24} lists more than 200 problems complete for \ER, with about half of them being classified as coming from computational geometry.

\paragraph{\ETRINV.} Our \ER-hardness reduction will use a restricted form of \ETR\ introduced by 
Abrahamsen, Adamaszek, and Miltzow~\cite{AAM22} as part of their proof that the art gallery problem is \ER-complete. This restricted form is known as \ETRINV; it restricts the range of the variables to the interval $[\frac{1}{2},2]$ and allows only two real operations: addition and inversion, that is, we have \defn{constraints} of the form $x = y+z$ and $x\cdot y = 1$.
Abrahamsen and Miltzow~\cite{AM19} in a separate note worked out the full details of \ER-completeness of \ETRINV, that is, they showed how to reduce \ETR\ to \ETRINV. This allows us to work with \ETRINV\ as our starting problem.

\paragraph{\VR.} Just like \NP\ has \coNP, \ER\ has \VR, the \defn{universal theory of the reals}; a problem belongs to \VR\ if its complement belongs to \ER. While we could avoid \VR, by stating that the negation of a problem is in \ER, it is often more natural to work with \VR.

\section{From \ETRINV\ to Linkages with Angle Constraints}
\label{sec:linkages}

In this section we show how to translate questions of the form $\varphi \in \ETRINV$ into a linkage realizability problem. We introduce our linkage model in Section~\ref{sec:linkage_model}. At the heart of the translation is a historic linkage, {\em Hart's A-frame}, which we describe in Section~\ref{sec:Aframe}. Hart's A-frame allows us to build what we call a flex gadget, introduced in Section~\ref{sec:flexgadget}. With this flex gadget we can construct the more sophisticated gadgets we need for the reduction, see Sections~\ref{sec:overview} and ~\ref{sec:construction-details}. The main properties of the reduction are stated in Section~\ref{sec:linkages-wrapup}.

\subsection{Angle-Constrained Linkages}
\label{sec:linkage_model}

For our purposes, a \defn{linkage} is a graph with positive 
edge-weights. 
The edges of a linkage are referred to as \defn{bars}.
A \defn{configuration} or \defn{realization} of a linkage assigns 
points in $\mathbb{R}^2$ 
to the vertices so that the resulting straight line edges have the specified lengths. 
A configuration is \defn{non-crossing} if edges do not intersect except at a common endpoint; otherwise we say that the configuration is \defn{crossing} (note that this
includes the case
of two vertices assigned to the same point and the case of a vertex assigned to a point in a non-incident embedded edge).

An \defn{angle-constraint} specifies the angle between an ordered pair of incident edges, and 
an \defn{angle-constrained} 
linkage consists of a linkage together with some angle constraints.
A configuration of an angle-constrained linkage must satisfy the angle constraints.
In a realization, 
the  two edges of an angle constraint
need not be consecutive at their common vertex.

We reduce \ETRINV to a question of whether an angle-constrained linkage has a realization; see Theorem~\ref{thm:ETRINVlinkage} below. 
To prove our main result (Theorem~\ref{thm:penniesER}) for penny graphs we show that the particular linkages we construct can be modeled as penny graphs, i.e., from an angle-constrained linkage we construct a penny graph preserving [non-crossing] realizability.
The construction relies on special properties of the linkages.
In particular, our linkages have
integer edge lengths, small degree vertices, and each angle constraint limits a face angle in the combinatorial embedding to some multiple of $30^\circ$.

\subsection{Hart's A-frame}
\label{sec:Aframe}
A bar of a linkage has fixed length so if one endpoint is stationary, the other endpoint moves in a circle. 
Our reduction makes use of a linkage that converts circular motion to straight-line motion.  This allows us to model a variable as the distance between two moving parallel bars of the linkage.

The Peaucellier linkage is the most well-known linkage for straight-line motion, but we find it more convenient to have the fixed points of the linkage in the outer face of the embedding. We therefore use Hart's A-frame, 
aka Hart's second inversor,
as shown in Figure~\ref{fig:A-frame}, with horizontal base $AD$ and with edge lengths as shown in the figure.
In Hart's A-frame, 
$AC$ and $DG$ are single bars with internal vertices $B$ and $E$, respectively. 
This is captured in our angle-constrained linkage model by replacing $AC$ by two
bars $AB$ and $BC$ 
with the angle between them
at $B$ constrained to $180^\circ$, and similarly for $DE$ and $EG$.  

\begin{figure}[htb]
    \centering   \includegraphics[width=.3\textwidth]{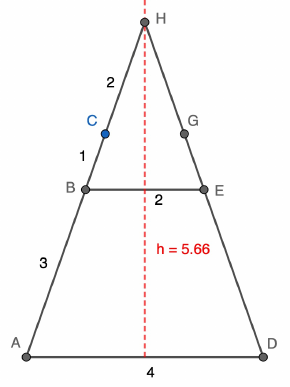}
    \hskip .1in
    \includegraphics[width=.3\textwidth]{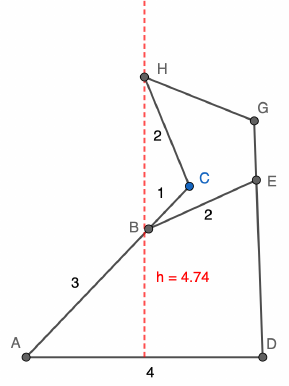}
    \hskip .1in
    \includegraphics[width=.3\textwidth]{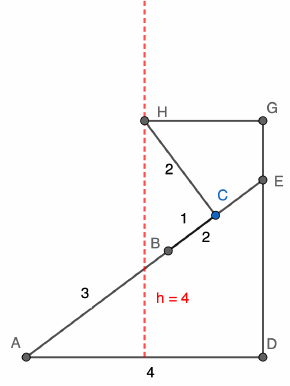}
   
    \caption{
    Hart's A-frame. $AC$ and $DG$ are single bars of length 4.  {\em (left)} When $C$ is collinear with $A$ and $H$, the apex $H$ is at height 
    $h = 4 \sqrt 2 \approx 5.66$.
    {\em (middle)} As $C$ moves in a circle centered at $A$, $H$ moves vertically downward and $h$ decreases.  {\em (right)} In the degenerate configuration when $C$ becomes collinear with $A$ and $E$, bar $DG$ is vertical and bar $HG$ is horizontal, so the limit for 
    non-crossing configurations is $h=4$.}
    \label{fig:A-frame}
\end{figure}

\begin{claim}
\label{claim:A-frame}
In Hart's A-frame, 
as $C$ moves on a circle centered at $A$, the apex $H$ moves on a vertical line bisecting $AD$. For edge lengths as specified in Figure~\ref{fig:A-frame}, 
the height~$h$ 
takes on all values in the interval 
$[-4\sqrt{2}, 4\sqrt{2} \,] \approx [-5.66, 5.66]$.\footnote{Throughout, we round to two decimal places.}
The configuration 
is crossing iff $h \in [-4,4]$.
Every value of $h$ in the open interval 
$(4,4\sqrt{2})$
corresponds to exactly two configurations of the A-frame, a left-leaning one and a right-leaning one, both 
non-crossing.
Furthermore, for right-leaning [or left-leaning] configurations, $h$ and all the angles of the A-frame are  continuous monotone functions of the angle~$CAD$.
\end{claim}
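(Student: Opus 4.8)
The plan is to prove Claim~\ref{claim:A-frame} by an explicit coordinate computation, exploiting the two reflective symmetries of the A-frame. Place the base so that $A=(-a,0)$ and $D=(a,0)$, with $a>0$ fixed by the edge lengths of Figure~\ref{fig:A-frame}; then the ``vertical line bisecting $AD$'' is the $y$-axis, which I call $\ell$. The bar structure of the A-frame is invariant under reflection in $\ell$ (swapping the legs $AC\leftrightarrow DG$ and the internal vertices $B\leftrightarrow E$, and fixing $H$) and under reflection in the line $AD$ (which will account for the symmetry $h\mapsto -h$), and both symmetries preserve non-crossingness. I parametrize the motion by $\theta:=\angle CAD$, so that $C=(-a,0)+4(\cos\theta,\sin\theta)$.

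The first step is to show that $H$ lies on $\ell$ throughout the motion; this is the classical straight-line property of Hart's inversor, which I would either re-derive via the crossed quadrilateral of the frame (its axis of symmetry is perpendicular to $AD$, and $H$ sits on that axis because its two apex bars are equal) or simply cite. Granting $x_H=0$, the one remaining bar-length equation relating $H=(0,h)$ to the driven leg becomes a single equation in $h$ and $\theta$, whose solution is an explicit single-square-root expression $h=h(\theta)$ in $\cos\theta$ (schematically, of the form $4\sin\theta+\sqrt{c^{2}-(4\cos\theta-a)^{2}}$ for the appropriate constant $c$). Substituting the configuration in which $A$, $C$, $H$ are collinear (equivalently $h=a\tan\theta$) gives $h=4\sqrt2$, and its image under the reflection in $AD$ gives $h=-4\sqrt2$.

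Next I would analyze $h(\theta)$. A derivative-sign computation shows that the only critical points of $h$ are the two configurations in which $A$, $C$, $H$ are collinear, where $h=\pm4\sqrt2$; consequently the motion splits into two arcs, interchanged by the reflection in $\ell$, on each of which $h$ is a continuous strictly monotone function of $\theta$ sweeping through all of $[-4\sqrt2,4\sqrt2]$ --- passing from $h=4\sqrt2$ down through $h=4$ (the degenerate ``leg-vertical'' configuration of Figure~\ref{fig:A-frame}(right)), then through the range $[-4,4]$, down to $h=-4\sqrt2$. Hence every value of $[-4\sqrt2,4\sqrt2]$ is attained, and each value in $(4,4\sqrt2)$ is attained exactly twice, once on each arc, by two configurations that are mirror images in $\ell$ --- the left-leaning and right-leaning ones. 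For the crossing/non-crossing dichotomy I would enumerate the finitely many pairs of bars that could overlap (the base, if it is a bar, together with the two legs, the apex bars, and any crossbar) and check from the explicit coordinates that an overlap occurs exactly when $|h|\le 4$, the boundary case $|h|=4$ being the degenerate configuration of the figure. Finally, on each arc every vertex coordinate is an explicit continuous function of $\theta$, hence so is every angle of the A-frame, and strict monotonicity of each such angle (and of $h$) in $\theta$ follows from the explicit formulas, or from a sign-of-derivative check.

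I expect the two genuine obstacles to be (a) the clean verification of the straight-line property for this exact choice of bars and lengths --- the similar-triangle / crossed-quadrilateral bookkeeping --- and (b) the crossing case analysis, which, although a finite check, is delicate near $|h|=4$, where several bars become collinear. The range and monotonicity assertions are then routine calculus applied to the explicit function $h(\theta)$.
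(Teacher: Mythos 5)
Your coordinate-and-calculus route is genuinely different from the paper's: the paper proves the straight-line property synthetically by a chain of similar triangles ($HCA\sim BCH$ and $HGD\sim EGH$ give $HB/HA=HE/HD$; then $BHE\sim AHD$; then $AHB\cong DHE$, whence $AH=DH$), obtains monotonicity of the angles geometrically (in particular it shows $\angle CHG=\angle AHD$, which is visibly monotone, avoiding any derivative computation), and proves the ``exactly two configurations'' part by running the similar-triangle argument in reverse --- placing $H$ on the axis, recovering $C$ and $G$ as circle intersections, and noting $|BE|$ is correct exactly when the two intersections are chosen consistently. Your plan of deriving an explicit $h(\theta)$ and checking signs of derivatives would also work once the straight-line property is in hand, at the price of more bookkeeping, and you are right that the crossing/non-crossing dichotomy reduces to a finite case check in both approaches.

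However, the one concrete derivation of the straight-line property that you propose is not a valid argument. You say the crossed quadrilateral ``has an axis of symmetry perpendicular to $AD$, and $H$ sits on that axis because its two apex bars are equal.'' A generic configuration of the A-frame has \emph{no} reflective symmetry: the quadrilateral $ABED$ has $|AB|=|DE|$ and $|AD|,|BE|$ fixed, which does \emph{not} force it to be an isosceles trapezoid or anti-parallelogram, and the linkage's abstract reflection invariance maps a left-leaning configuration to the distinct right-leaning one, not to itself (this is precisely why each $h\in(4,4\sqrt2)$ is attained by \emph{two} configurations). So there is no configuration-level axis to put $H$ on, and the argument is circular. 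You would need to carry out the actual similar-triangle chain (as above) or a genuine coordinate verification; ``cite'' is the other viable option. Relatedly, your schematic formula $h=4\sin\theta+\sqrt{c^2-(4\cos\theta-a)^2}$ records only the constraint $|HC|=2$ together with $H$ on the axis, which is a \emph{necessary} condition; to establish that every $\theta$ in the relevant interval (equivalently every $h\in[-4\sqrt2,4\sqrt2]$) is actually attained, you still need the existence direction, which the paper supplies by the reverse construction.
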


The first part of this claim is given as an exercise in several sources, e.g.,~\cite{HowRound}.  We include a proof of Claim~\ref{claim:A-frame} in  Appendix~\ref{app:AFrame}.

\subsection{The Flex Gadget}\label{sec:flexgadget}

Using Hart's A-frame we 
construct a \defn{flex gadget} as follows.  See the contents of the light blue rectangle in Figure~\ref{fig:flex}. 
Place two A-frames scaled by factor 2 side-by-side with a gap of length 2 between them at the base. Join their apexes $H$ and $H'$ with a bar of length 10.

\begin{figure}[htb]
    \centering   \includegraphics[width=.48\textwidth]{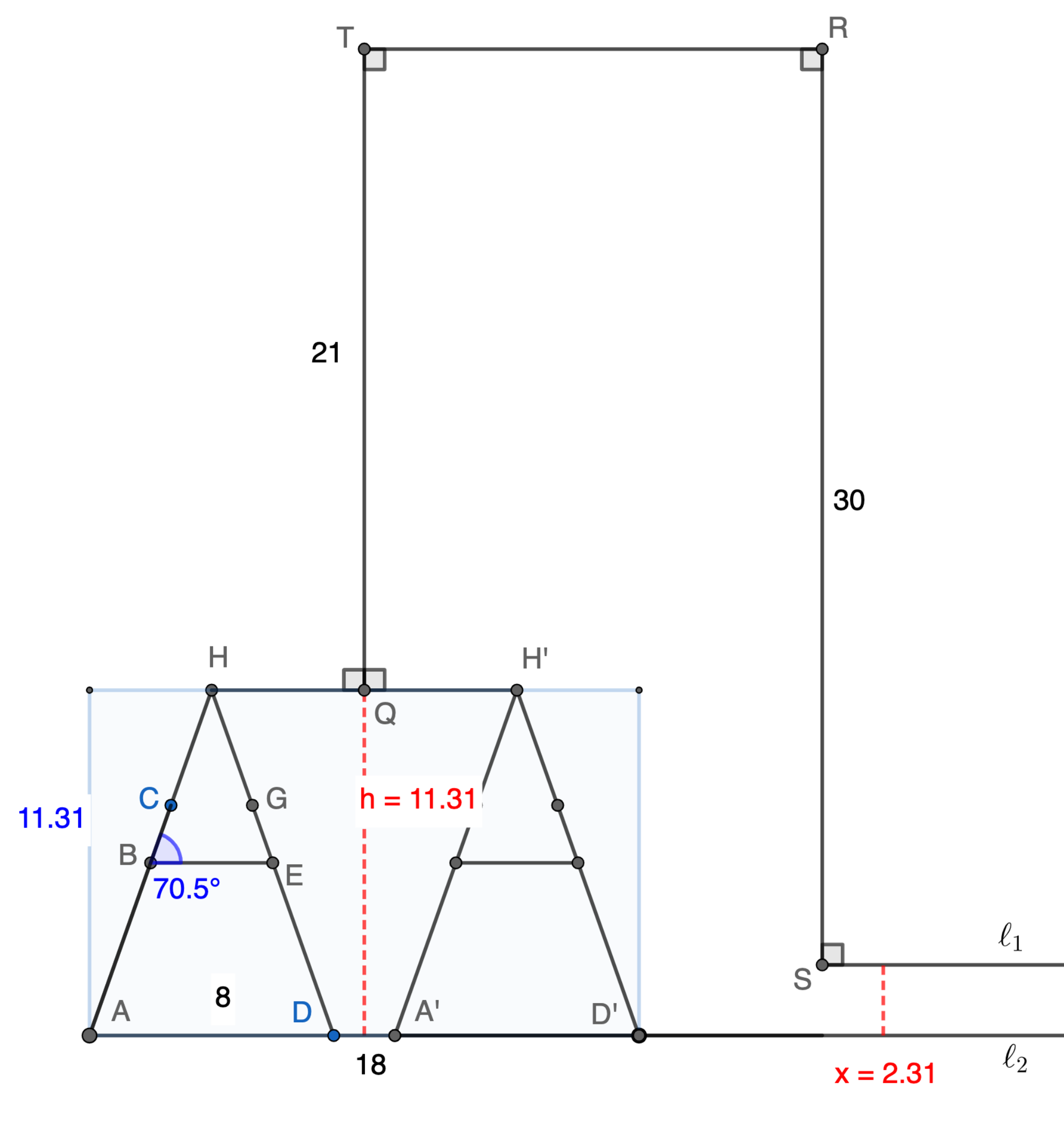}
\hskip .1in
\includegraphics[width=.48\textwidth]{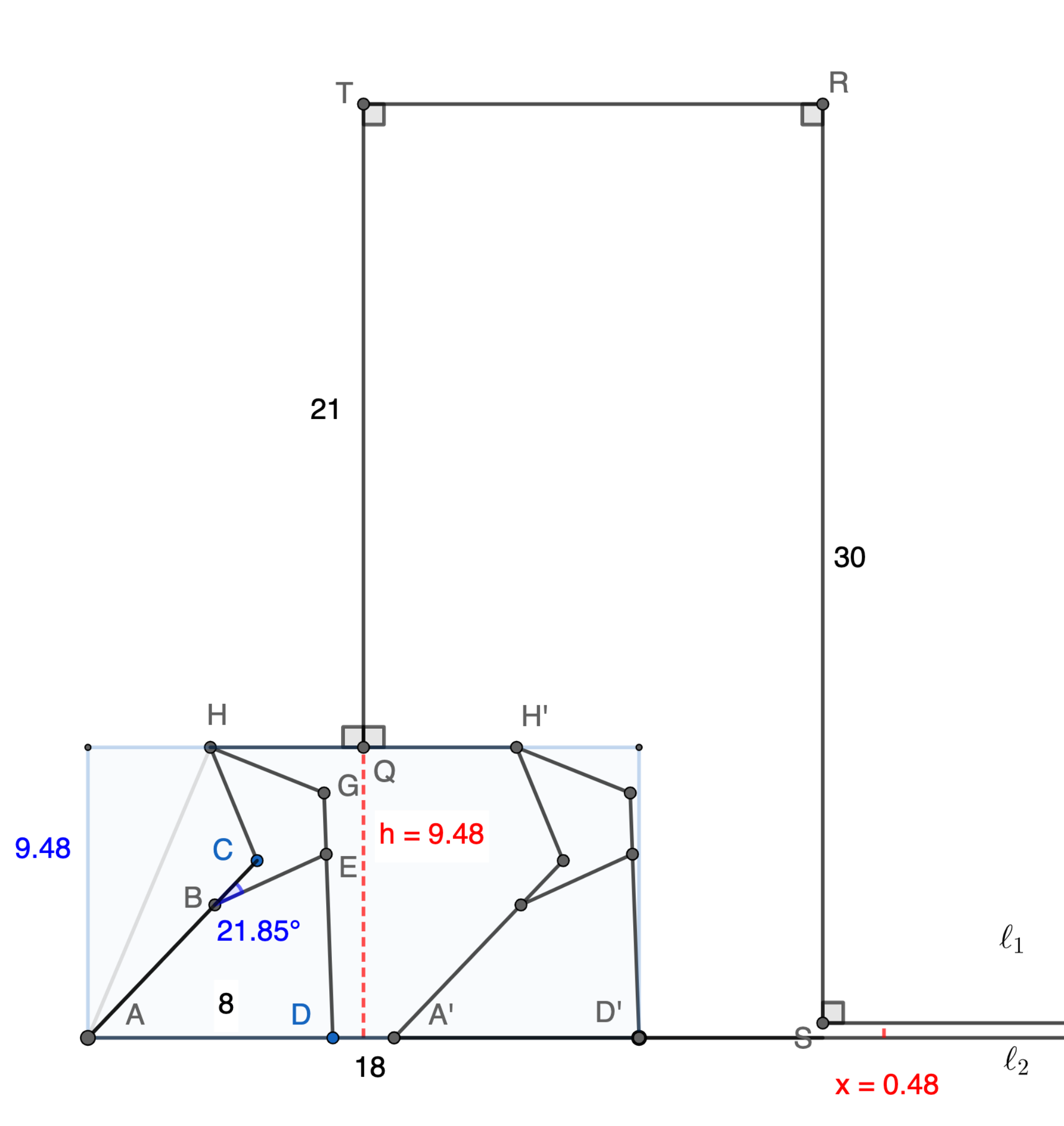}
    \caption{A flex gadget (in the light blue rectangle) determines a channel formed by two parallel bars, $\ell_1$ and $\ell_2$, separated by distance $x$ in a range that includes $[\frac{1}{2},2]$. 
    Drawing conventions used throughout: 
    bars and angles drawn in black are of fixed size; distances and angles drawn in blue or red are not fixed.
}
    \label{fig:flex}
\end{figure}

\begin{claim}
\label{claim:flex}
In a flex gadget the bar $HH'$ joining the apexes remains parallel to the bar $AD'$ of the bases.
A flex gadget has width $18$.
Configurations with the apex bar above the base bar are non-crossing iff $h \in $ 
$(8, 8 \sqrt{2}\, ] \approx (8,11.31]$.
\end{claim}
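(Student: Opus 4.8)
The plan is to analyze the flex gadget by reducing every assertion to facts about a single A-frame, which we already control via Claim~\ref{claim:A-frame}. Recall the gadget consists of two copies of Hart's A-frame, each scaled by a factor of $2$, placed side by side with their bases $AD$ and $A'D'$ collinear and separated by a gap of length $2$, together with a bar of length $10$ joining the two apexes $H$ and $H'$. After scaling, each A-frame has base of length $8$ (since the unscaled base $AD$ has length $4$ in Figure~\ref{fig:A-frame}), its apex moves on the vertical bisector of its own base, and the apex height lies in $[-8\sqrt2,8\sqrt2]$ with the non-crossing right/left-leaning range being $(8,8\sqrt2\,]$.

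First I would verify the width and the parallelism. The total horizontal extent is (base of first A-frame) $+$ (gap) $+$ (base of second A-frame) $= 8+2+8 = 18$, giving width $18$. For parallelism: by Claim~\ref{claim:A-frame}, $H$ lies on the vertical bisector of $AD$ and $H'$ lies on the vertical bisector of $A'D'$; these two vertical lines are a fixed horizontal distance apart, namely $4 + 2 + 4 = 10$ (half of each base plus the gap). Hence the horizontal separation of $H$ and $H'$ is always exactly $10$, which equals the length of the bar $HH'$. Therefore $HH'$ is forced to be horizontal, i.e. parallel to the base line $AD'$, and moreover both apexes are constrained to have the \emph{same} height $h$ (the vertical component of $H-H'$ must be $0$). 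This is the key structural consequence: the bar $HH'$ synchronizes the two A-frames so that they share a common apex height $h$, and it is this $h$ that parametrizes the gadget.

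Next I would pin down the non-crossing range. Each individual (scaled) A-frame is non-crossing, with apex above base, exactly when $h \in (8, 8\sqrt2\,]$, by Claim~\ref{claim:A-frame} applied after scaling by $2$; so a necessary condition for the whole gadget to be non-crossing is $h \in (8,8\sqrt2\,]$. For sufficiency I must check that the two A-frames, plus the bar $HH'$, do not create crossings with each other. The bar $HH'$ sits at height $h$ above the common base line; since each A-frame lies entirely below (or at) its apex and within its own vertical strip of width $8$, while the two strips are disjoint (separated by the gap of length $2$), the only way a crossing could occur is through the bar $HH'$ dipping into one of the A-frames — but $HH'$ is at the common apex height $h$, which is the \emph{topmost} point of each A-frame, so no such crossing occurs. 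Thus for every $h \in (8,8\sqrt2\,]$ one can choose each A-frame to be, say, right-leaning, and the resulting configuration is non-crossing; conversely any value outside this range makes at least one A-frame crossing. This establishes the stated iff.

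The main obstacle I anticipate is the sufficiency direction of the non-crossing claim: one must argue carefully that no part of the left A-frame's mechanism (including its slanted bars $HC$, $HG$, etc.) can poke into the right strip or collide with $HH'$, and that $HH'$ itself, sitting at height $h$, clears both A-frames. This is geometrically intuitive because the A-frame is ``tent-shaped'' with $H$ as its unique apex, but writing it rigorously requires knowing the horizontal extent and vertical profile of a right-leaning A-frame as $h$ ranges over $(8,8\sqrt2\,]$ — information that follows from the explicit coordinates in the proof of Claim~\ref{claim:A-frame} in Appendix~\ref{app:AFrame}. A secondary point to be careful about is the boundary case $h = 8\sqrt2$, where each A-frame is in its ``straight up'' configuration (Figure~\ref{fig:A-frame}, left): here the configuration is still non-crossing, which is why the interval is closed at $8\sqrt2$ and open at $8$ (the latter being the degenerate limit where bar $HG$ becomes horizontal).
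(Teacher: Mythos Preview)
Your proposal is correct and follows essentially the same approach as the paper: reduce everything to Claim~\ref{claim:A-frame}, observe that $H$ and $H'$ lie on vertical lines at horizontal distance $10$ so the length-$10$ bar forces them to the same height, and scale the non-crossing interval by $2$. The paper's own proof is terser---it simply says the height statement ``follows from Claim~\ref{claim:A-frame}'' without spelling out the sufficiency argument you work through (that the two A-frames and the bar $HH'$ do not interfere with one another); your extra care there is fine but not something the paper deems necessary.
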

\begin{proof}
Refer to Figure~\ref{fig:flex}.
By Claim~\ref{claim:A-frame}, $H$ and $H'$ move on vertical lines.  Thus,
if $H$ and $H'$ were at different heights, bar $HH'$ would be longer than 10.  The two A-frames are then forced to act synchronously and $HH'$ remains parallel to the base.
The width of the flex gadget is $18$ by construction. 
The statement about height follows from Claim~\ref{claim:A-frame}.
\end{proof}

In Section~\ref{sec:channelrestrictgadget}, we show how to
limit the height of a flex gadget to the range $[9.5,11]$.
This has the important implication (via Claim~\ref{claim:flex}) that the flex gadget is restricted to non-crossing configurations.

To model a variable $x$, 
we add bars
with angles restricted to $90^\circ$
as shown in Figure~\ref{fig:flex} to create two parallel bars called a \defn{channel} (see $\ell_1, \ell_2$ in the figure) with distance in a range that includes $[\frac{1}{2},2]$.  
By limiting the height of the flex gadget to the range $[9.5,11]$
we limit the channel distance to $[\frac{1}{2},2]$.

In our construction we will often use the larger channel of distance $x + 30$, and narrow to a channel of distance $x$ only when necessary. 
It may seem 
arbitrary that we place the top of the larger channel (at point $T$ in the figure) $21$ units above the top of the flex gadget,  
but we need room to
turn a channel from horizontal to vertical as shown later~on, and a difference of $30$ between the larger and narrower channels makes it easy to read off the value of $x$.

\subsection{Overview of the Construction}
\label{sec:overview}
Having represented each variable $x$ of the formula $\varphi$ as a horizontal channel of height $x + 30$, the overall construction is as shown in Figure~\ref{fig:overall}. 
There is a horizontal channel 
and a vertical channel 
for each variable.
From now on we refer to the distance of a channel as its \defn{width}, whether or not the channel is horizontal or vertical.

The channel widths vary depending on the values of the variables but the gaps between the channels have fixed lengths, though not all the same. 
There are addition and inversion gadgets (drawn here as rectangles) corresponding to the constraints
of~$\varphi$.
For each occurrence of a variable in a constraint we add an extra 
horizontal channel connected to the constraint's gadget.
  
When two channels cross, the width of each channel must be maintained.  If the channels correspond to different variables, the crossing must be ``free'' in the sense that the widths of the horizontal channel and the vertical channel are independent.  
However, in order to transmit the chosen value for a variable to its gadgets, when two channels corresponding to the same variable cross each other, we enforce a one-to-one relationship between their widths.  Enforcing the same width seems tricky.  Instead, we ensure that all the horizontal channels corresponding to variable $x$ have width $x + 30$, and all the vertical channels corresponding to  $x$ have width $\sqrt{3}(x+30)$.   

\begin{figure}[htb]
    \centering   \includegraphics[width=.4\textwidth]{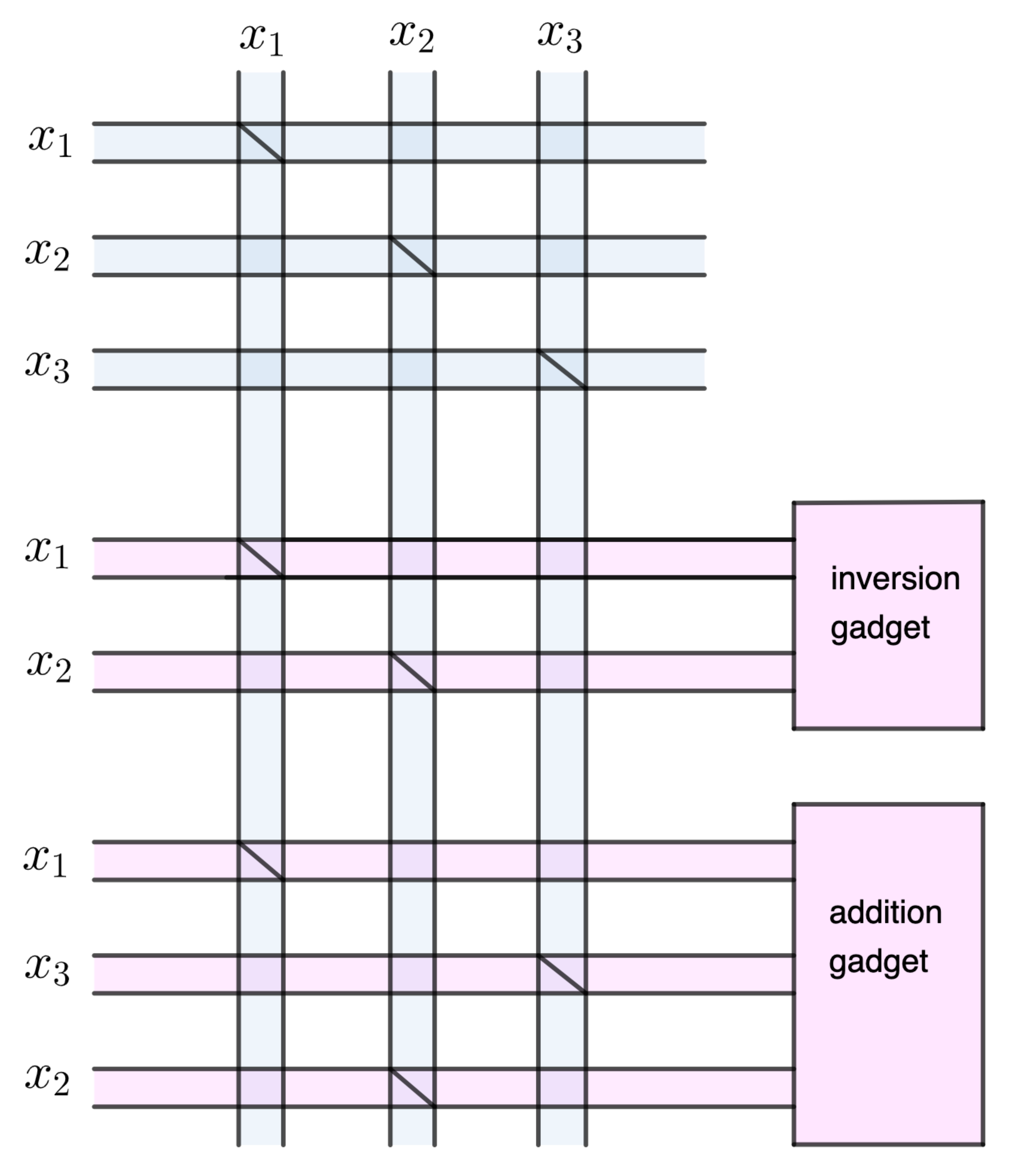}
    \caption{Overall plan for an instance of \ETRINV with three variables $x_1, x_2, x_3$ and constraints $x_1 \cdot x_2 = 1$ and  $x_1 + x_3 = x_2$.
    A diagonal segment where two channels cross indicates a relationship between their widths.}
    \label{fig:overall}
\end{figure}

In the following subsection 
we construct the following gadgets:

\begin{itemize}
\item The \defn{channel-restricting gadget} restricts the 
height of a flex gadget to the interval $[9.5,11]$ 
which restricts the flex gadget to non-crossing configurations and restricts the corresponding variable to $[\frac{1}{2},2]$.
Channel-restricting gadgets are placed at the beginning of the primary horizontal channel representing each variable (the blue horizontal channels in Figure~\ref{fig:overall}).  They limit all subsequent gadgets to non-crossing configurations. 
\item The \defn{turn gadget} enforces the relationship between a horizontal channel of width $w$ and a vertical channel of width $\sqrt{3} w$.  Turn gadgets are used in the cross-over gadgets and the addition and inversion gadgets.
\item \defn{Cross-over gadgets} are used whenever a horizontal channel crosses a vertical channel.  There are two versions depending on whether the two channels correspond to the same variable or not.  
\item The \defn{addition gadget} enforces a constraint $x_i + x_j = x_k$.
\item The \defn{inversion gadget} enforces a constraint $x_i \cdot x_j = 1$.  (To avoid subscripts, we will write this as $x \cdot y = 1$.)
\end{itemize}




\subsection{Details of the construction}
\label{sec:construction-details}

\subsubsection{Channel-restricting gadget and non-crossing properties.}
\label{sec:channelrestrictgadget}
We restrict the width of a large horizontal channel representing a variable $x$ to the interval 
$[30.5, 32]$ (i.e., restricting $x$ to  $[\frac{1}{2},2]$)
by adding bars as shown in Figure~\ref{fig:exact-interval}(middle)
to restrict the height of the flex gadget to $[9.5,11]$. 
The bars are added to the primary horizontal channel representing each variable, i.e., the blue horizontal channels shown in Figure~\ref{fig:overall}.   
This has the important 
consequence of  restricting the flex gadget to non-crossing configurations
by Claim~\ref{claim:flex}. 
(Nothing constrains the  channel-restricting gadget itself to non-crossing configurations.)

\begin{figure}[htb]
    \centering   \includegraphics[width=.32\textwidth]{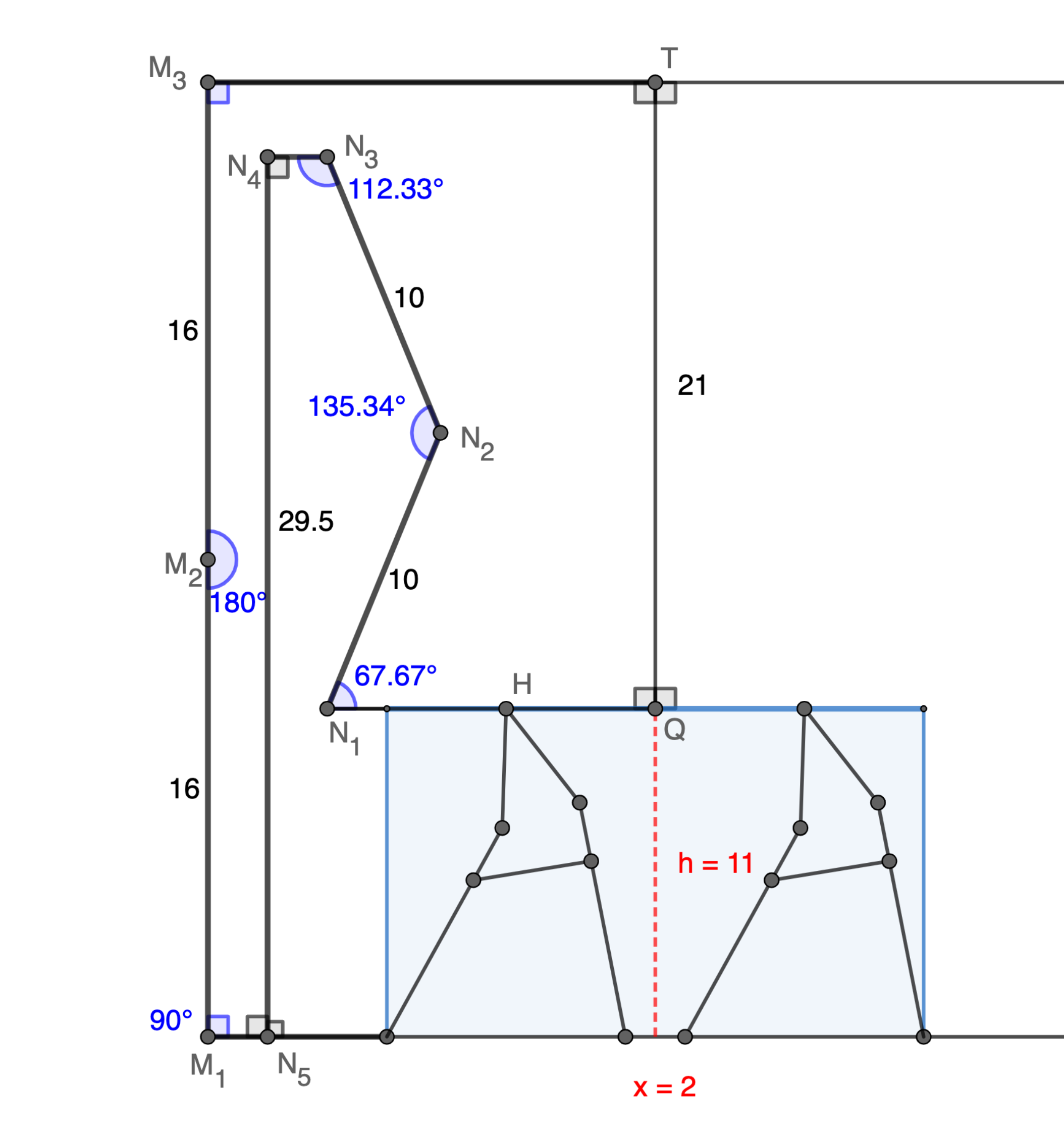}
    \includegraphics[width=.32\textwidth]{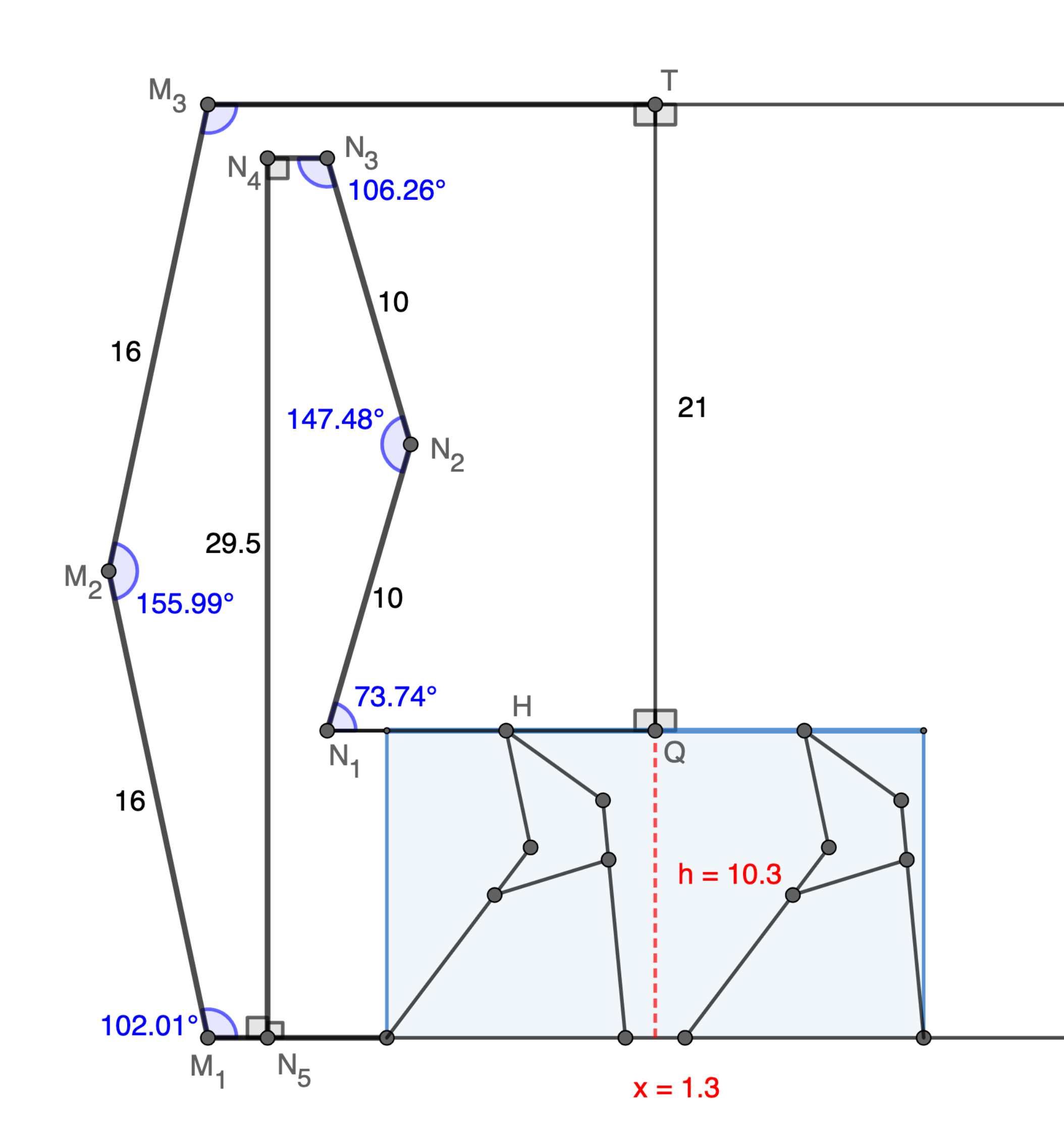}
    \includegraphics[width=.32\textwidth]{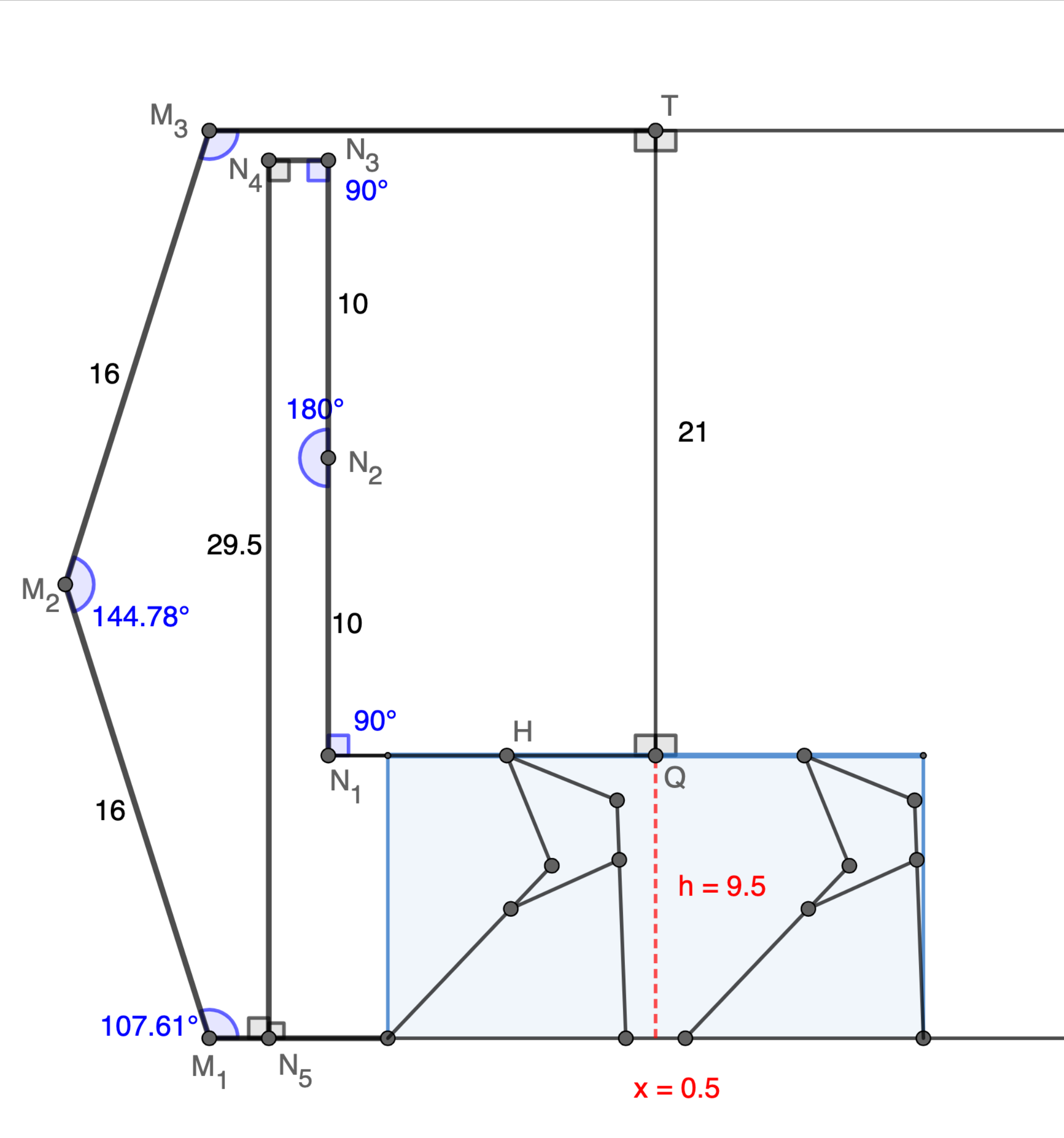}

    \caption{
Adding extra 
bars (drawn thick) at nodes $M_1,M_2,M_3$ and $N_1, \ldots, N_5$ to the left of the flex gadget (in the light blue rectangle) to constrain 
the height of the flex gadget to $[9.5,11]$.
Fixed angles are drawn in black; flexible angles in blue.
}
    \label{fig:exact-interval}
\end{figure}

\begin{claim}
\label{claim:channel-restricting}
The channel-restricting gadget in Figure~\ref{fig:exact-interval} 
allows a flex gadget of height $h$ iff $h \in [9.5,11]$. Furthermore, all these values of $h$ can be realized via non-crossing configurations of the channel-restricting gadget.
\end{claim}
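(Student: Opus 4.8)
The plan is to analyze the extra bars added at $M_1,M_2,M_3$ and $N_1,\dots,N_5$ as a rigid-up-to-one-parameter subassembly whose single degree of freedom is coupled to the height $h$ of the adjacent flex gadget, and to show that the admissible range of that parameter is exactly the interval corresponding to $h\in[9.5,11]$. First I would set up coordinates with the base bar $AD'$ of the flex gadget horizontal, so that (by Claim~\ref{claim:flex}) the apex bar $HH'$ is horizontal at height $h$, and I would read off from Figure~\ref{fig:exact-interval}(middle) the lengths of the thick bars and the $90^\circ$ (or other fixed) angle constraints at the $M_i$ and $N_j$. The intended mechanism is a small linkage (a chain of triangles/parallelograms braced to the flex gadget at top and bottom) that, because of the angle constraints, can only move by sliding one point along a fixed line; the geometric constraint that realizes this slide is a right triangle (or a sum of two such) one of whose legs has fixed length $L$ and whose hypotenuse has fixed length $L'$, forcing the other leg—which equals (a linear function of) $h$—to lie in $[9.5,11]$. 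Concretely I expect the two endpoints of the range to be witnessed by the two degenerate positions of this triangle: when the variable leg is minimal ($h=9.5$) and when it is maximal ($h=11$), with all intermediate values attained continuously by the intermediate-value theorem applied to the (continuous, monotone) dependence of the relevant bar position on $h$.

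The argument then splits into the two directions stated in the claim. For the ``only if'' direction I would argue that any realization of the channel-restricting gadget, combined with the flex gadget it braces, forces the configuration of the braced subchain, and that the fixed bar lengths make the height of the flex gadget satisfy the Pythagorean identity above; solving it gives $h\in[9.5,11]$ and nothing outside. For the ``if'' direction, and for the ``furthermore'' (non-crossing realizability) part, I would exhibit, for each $h\in[9.5,11]$, an explicit placement of $M_1,M_2,M_3,N_1,\dots,N_5$: place the subchain so that the braced triangle is traversed ``on the outside'' (the side of the flex gadget away from the channel), which by the width bookkeeping introduced in Section~\ref{sec:flexgadget} (the $21$-unit offset leaving ``room'') keeps all the thick bars outside the flex gadget's bounding region and outside each other; then check pairwise that no two of the added bars cross and that none crosses a flex-gadget bar. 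Note the claim explicitly does \emph{not} assert that the channel-restricting gadget is forced to be non-crossing—only that a non-crossing realization \emph{exists} for each admissible $h$—so I only need one good placement per $h$, not a classification.

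The main obstacle I anticipate is the non-crossing bookkeeping rather than the metric computation: verifying that the eight added vertices and their incident thick bars can be routed without crossing the flex gadget, the larger channel above it, or one another, uniformly over the whole range $h\in[9.5,11]$ (where the flex gadget itself is changing shape). This is where I would lean on the slack deliberately built into the construction—the gap of $2$ between the two A-frames, the width $18$ of the flex gadget, and the $21$-unit vertical offset to point $T$—to give concrete numeric margins, and I would present the non-crossing check as a short case analysis (left A-frame region, right A-frame region, region between, region above) with an explicit coordinate box for each thick bar. The monotonicity clause of Claim~\ref{claim:A-frame}, pushed through Claim~\ref{claim:flex}, is what guarantees the placement varies continuously with $h$, so that no crossing ``suddenly appears'' at an interior value of the interval.
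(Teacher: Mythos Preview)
Your high-level plan---couple the extra bars to the flex-gadget height $h$, derive the admissible range, then exhibit non-crossing realizations for each $h$ in that range---is the right one, but you have misidentified the mechanism, and the concrete argument you propose would not go through as written.

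The key misreading is at the joints $M_2$ and $N_2$: these are \emph{not} fixed-angle vertices. In the paper's construction (see Table~\ref{tab:flexangle}) they are degree-$2$ vertices with a \emph{flexible} angle in the range $(120^\circ,240^\circ)$. Consequently the constraint imposed by the pair of bars $M_1M_2$, $M_2M_3$ is not a Pythagorean identity but the ordinary triangle inequality: $M_1$ and $M_3$ are forced (by the surrounding right-angle constraints attaching them to the channel and the flex gadget) to be vertically aligned, and the two bars, of total length $32$, give $|M_1M_3|\le 32$, hence channel width at most $32$ and $h\le 11$. The $N$-chain works the same way to give the lower bound $h\ge 9.5$. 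Note that your proposed right-triangle mechanism, with one fixed leg $L$ and a fixed hypotenuse $L'$, would pin the remaining leg to the single value $\sqrt{(L')^2-L^2}$ rather than an interval, so as stated it cannot produce a range $[9.5,11]$ at all.

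For the ``furthermore'' part, the paper's argument is far lighter than the case analysis you outline: it simply observes that the two extreme configurations depicted in Figure~\ref{fig:exact-interval} are non-crossing and invokes the monotonicity of the flex gadget (hence of all the relevant bar positions) to conclude that no crossing appears at an intermediate $h$. No coordinate boxes or region-by-region bookkeeping are needed.
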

\begin{proof}
Points $M_1$ and $M_3$ are vertically aligned and the 
distance between them is at most $32$ because of the  
two bars $M_1 M_2$ and $M_2 M_3$.  Thus the maximum height of the flex gadget is $11$ (see  Figure~\ref{fig:exact-interval}(left)), corresponding to an $x$ value of $2$.
Similarly, the minimum $x$ value of $0.5$, corresponding to the minimum height of the flex gadget at $9.5$, is enforced by bars $N_1 N_2$ and $N_2 N_3$ (see Figure~\ref{fig:exact-interval}(right))). 
For the second statement in the claim, note that the configurations in the figure are non-crossing and intermediate configurations are non-crossing by monotonicity of the flex gadget.
\end{proof}

\subsubsection{Turn gadget}\label{sec:turngadget} 
As mentioned above, for a variable $x$, the
plan is that the corresponding horizontal channels have width 
$x+30$ and the corresponding vertical channels have width $\sqrt{3}(x+30)$.  
This relationship between a horizontal and vertical channel is enforced by the turn gadget which is used as a building block in  the crossing gadgets and the addition and inversion gadgets.
The turn gadget is shown in Figure~\ref{fig:crossing-gadget}(left). It is entered from the left by a horizontal channel of width $x+30$.
The width of the vertical channel is controlled 
by double-size flex gadgets
(labelled $F_2$ and $F_3$) that have base length $2 \cdot 16 + 2 = 34$, and the height 
of non-crossing configurations 
in the interval $(16, 16\sqrt{2}\,] \approx (16,22.63]$.

\begin{figure}[htb]
    \centering  \includegraphics[width=.39\textwidth]{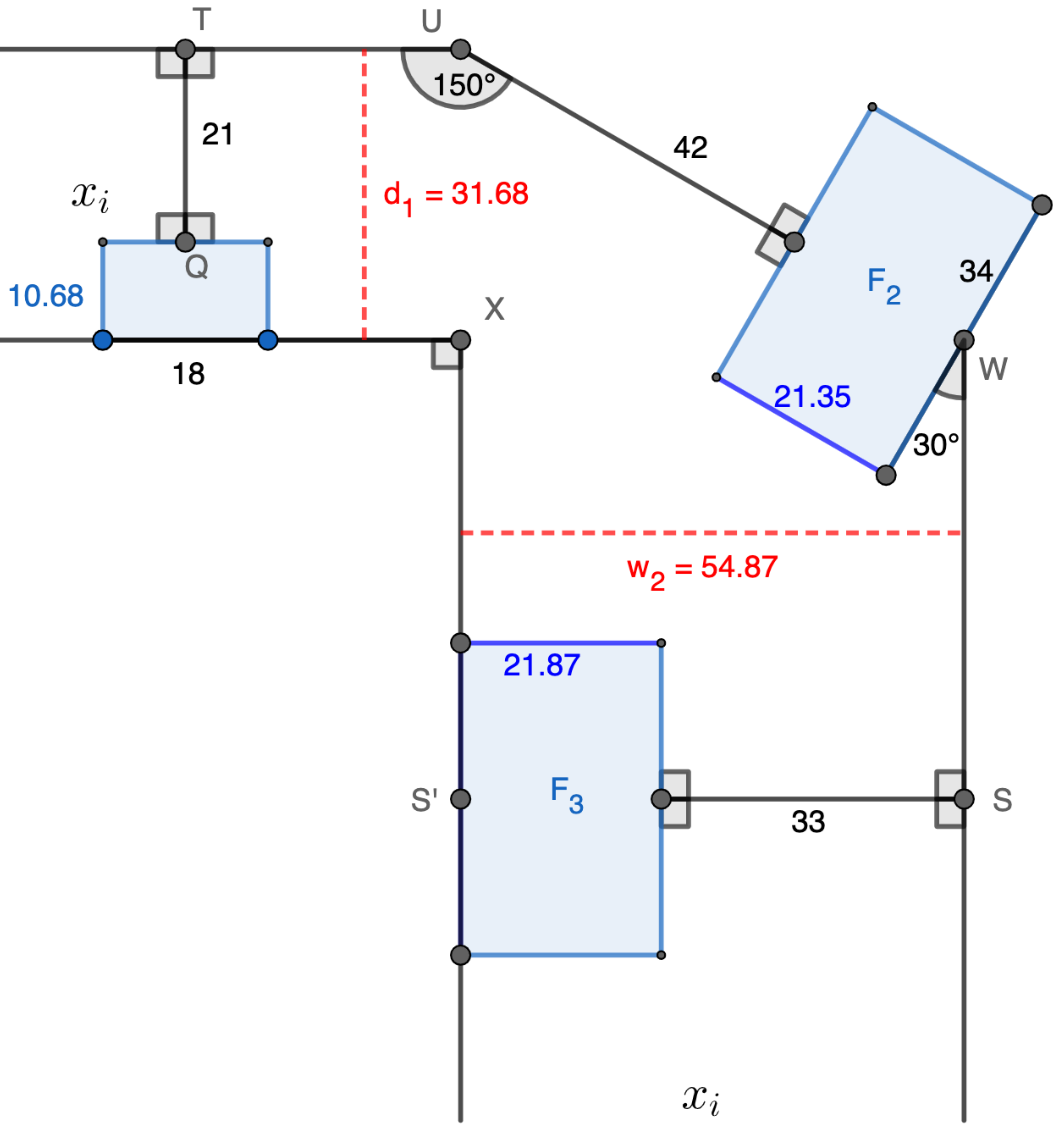}
        \hskip0.2in \includegraphics[width=.45\textwidth]{figures/crossing.pdf}

    \caption{
    {\em (left)} The turn gadget enforces  $w_2 = \sqrt{3} d_1$.  
    {\em (right)} The cross-over gadget for $x_i,x_j$ with $i \ne j$ enforces $d_1=d_2$ and $w_1 = w_2$.    
 }
    \label{fig:crossing-gadget}
\end{figure}

\begin{claim} 
\label{claim:turn-gadget}
The turn gadget in Figure~\ref{fig:crossing-gadget}(left) works correctly:
if 
the horizontal channel has width $d_1 \in [30.5,32]$, then the vertical channel has width $w_2 =  \sqrt{3}d_1$.  Furthermore, these configurations are non-crossing.
\end{claim}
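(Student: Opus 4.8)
The plan is to read the geometry of the turn gadget off Figure~\ref{fig:crossing-gadget}(left) and reduce the claim to elementary trigonometry together with the flex-gadget facts already in hand (Claims~\ref{claim:A-frame} and~\ref{claim:flex}). First I would record the directions that the angle constraints pin down: the two bounding bars of the incoming channel are horizontal, the two bounding bars of the outgoing channel are vertical and are kept parallel and at the correct separation by the double-size flex gadgets $F_2$ and $F_3$ exactly as in Claim~\ref{claim:flex}, and every bar (or braced sub-grid acting as a rigid unit) that bridges the two channels is forced by its angle constraints to a fixed direction, namely $30^\circ$ (equivalently $60^\circ$) from the horizontal. Consequently, once the directions are fixed, the gadget has a single remaining degree of freedom, which I would parametrize by the signed position $t$ of one distinguished joint sliding along a bridge line inclined at $60^\circ$ to the horizontal.

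The core step is then to express both channel widths as affine functions of $t$. Moving the distinguished joint a distance $t$ along the $60^\circ$ line changes its horizontal coordinate by $t\cos 60^\circ$ and its vertical coordinate by $t\sin 60^\circ$; by the (mirror-)symmetry of the construction about the bridge line, each bounding bar of each channel is displaced by a fixed multiple of the corresponding coordinate change, so $d_1$ and $w_2$ are affine in $t$ with slopes in ratio $\cos 60^\circ:\sin 60^\circ = 1:\sqrt 3$, and the additive constants are arranged by the fixed bar lengths so that the relation is exactly $w_2=\sqrt 3\,d_1$. This simultaneously shows that every configuration satisfies the stated identity and that, conversely, each value $d_1\in[30.5,32]$ is attained by some configuration: one only has to check that the resulting interval $w_2\in[30.5\sqrt3,\,32\sqrt3]$ corresponds, under the linear channel-width-to-flex-height map of the double-size flex gadget, to flex heights lying inside the non-crossing window $(16,16\sqrt 2\,]$ of $F_2$ and $F_3$, so that these sub-gadgets are never driven into a crossing regime.

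For the non-crossing statement I would argue exactly as in the proof of Claim~\ref{claim:channel-restricting}: the configurations drawn in Figure~\ref{fig:crossing-gadget}(left) are non-crossing; every moving part is a continuous monotone function of the single parameter $t$ (using the monotonicity clause of Claim~\ref{claim:A-frame} for the A-frames inside $F_2,F_3$); and one verifies from the stated dimensions (base $34$, gap $2$, heights at most $16\sqrt2$, the $21$-unit offsets of the channels, etc.) that no two parts of the gadget meet during the flex. Hence all intermediate configurations over $d_1\in[30.5,32]$ are non-crossing. I expect this last point to be the main obstacle: it is the only genuinely fiddly part, since it requires plugging the concrete lengths from the figure into a case check that the bridging $30^\circ$ bars, the channel walls, and the bodies of $F_2$ and $F_3$ remain pairwise disjoint throughout the motion, rather than any conceptual difficulty.
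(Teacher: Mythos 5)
Your overall architecture (pin down directions via angle constraints, reduce to a single degree of freedom, argue monotonically from the extremes for non-crossing) is sound and the non-crossing portion matches the paper's in spirit, but your derivation of the identity $w_2=\sqrt3\,d_1$ takes a genuinely different route from the paper's and has a gap at the crucial step. The paper argues directly from a single triangle: the $90^\circ$ constraints make $U$ and $X$ vertically aligned, the flex gadget at $S'$ makes $X$ and $W$ horizontally aligned, and the $60^\circ$ constraint at $U$ makes $UXW$ a $30^\circ$-$60^\circ$-$90^\circ$ right triangle with $|UX|=d_1$, $|XW|=\sqrt3\,d_1$, $|UW|=2d_1$; the multiplicative relation and the vanishing additive constant both fall out at once, and what remains is only the numerical check that the double-size flex gadgets $F_2$ (on the hypotenuse) and $F_3$ (on $XW$, offset by $33$) stay in their non-crossing window at the extremes $x_i=\tfrac12$ and $x_i=2$.

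Your parametric argument establishes at best an affine relation $w_2=\sqrt3\,d_1+c$: sliding the distinguished joint along a $60^\circ$ line gives the slope ratio $1:\sqrt3$, but nothing in your write-up actually computes $c$; you simply assert that ``the additive constants are arranged by the fixed bar lengths so that the relation is exactly $w_2=\sqrt3\,d_1$.'' That is precisely the thing that needs to be proved, and with an affine approach you would have to evaluate both widths at one concrete configuration (or otherwise track the offsets through every bar length in the gadget) to show $c=0$. The paper's triangle argument sidesteps this entirely because the two legs of a $30^\circ$-$60^\circ$-$90^\circ$ triangle are in ratio $1:\sqrt3$ with no additive term. I would suggest either filling in the missing evaluation of the constant or, more simply, replacing the affine parametrization by the direct observation that $U$, $X$, $W$ form a right triangle with a $60^\circ$ angle at $U$, which is what the figure and angle constraints are set up to deliver.
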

\begin{proof}
Points $U$ and $X$ are vertically aligned.  
The flex gadget at $S'$ forces $W$ to be horizontally aligned with $X$.  
Therefore, triangle $UXW$ is a $60^\circ,90^\circ,30^\circ$ right-angle triangle with side lengths $d_1, \sqrt{3} d_1, 2d_1$.  Because the $F_2$ flex gadget at $W$ is double size, the diagonal $UW$ can accommodate any choice for $d_1$ in its range. 
We next show that the distance $XW$ accommodates $x_i$ in the necessary range. 
Recall that $d_1 = x_i + 30$. 
For $x_i= \frac{1}{2}$, $d_1 = 30.5$, so we need $XW = \sqrt{3}d_1 \approx 52.83$, and we need flex gadget $F_3$ to have height $\sqrt{3}d_1 - 33 \approx 19.83$, which is in the  range of its non-crossing configurations.
For $x_i=2$, $d_1 = 32$, so we need $XW \approx 55.43$, and we need $F_3$ of height $22.43$, also in the range of its non-crossing configurations.
By continuity (Claim~\ref{claim:A-frame}), all values of $x_i$ in the range $[\frac{1}{2},2]$ are achievable via non-crossing configurations.    
\end{proof}

\subsubsection{Cross-over gadgets}\label{sec:crossgadgets}  
%
%
There are two versions of the cross-over gadget, one for when the corresponding variables are different, and one for when they are the same. 
The first version, for $x_i, x_j$, $i \ne j$ is shown in 
Figure~\ref{fig:crossing-gadget}(right).  It uses a standard flex gadget for the horizontal channel and a double-size flex gadget for the vertical channel.

\begin{claim}
\label{claim:cross-over-gadget}
The cross-over gadget for $x_i, x_j, i \ne j$ in Figure~\ref{fig:crossing-gadget}(right)  works correctly: if $d_1$ and $w_1$ lie in  $[30.5,32]$, then $d_2=d_1$ and $w_1=w_2$.  Furthermore, these configurations are non-crossing.  
\end{claim}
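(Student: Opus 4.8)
The plan is to prove Claim~\ref{claim:cross-over-gadget} in the same style as the turn gadget (Claim~\ref{claim:turn-gadget}): read off from Figure~\ref{fig:crossing-gadget}(right) which vertices are forced to be vertically or horizontally aligned, propagate those alignments through the flex gadgets, and then verify that the gadget admits a non-crossing realization for every admissible input by checking the extreme values and invoking continuity. The structural fact that does the work is that a flex gadget keeps its apex bar parallel to its base bar with a separation that is a continuous monotone function of the driving angle (Claims~\ref{claim:A-frame} and~\ref{claim:flex}). So if the two bars bounding the horizontal channel on the entry side are rigidly tied to the base of the standard flex gadget while the two bounding it on the exit side are rigidly tied to the apex, then the entry and exit boundaries are parallel and their separations agree up to the fixed offsets built into the gadget; this gives $d_2 = d_1$. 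The same argument applied to the double-size flex gadget that carries the vertical channel gives $w_2 = w_1$. Because these two flex gadgets act on orthogonal pairs of parallel bars and share no bar, the widths $d$ and $w$ are controlled independently, which is exactly the ``free'' crossing required when $i \neq j$.

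Next I would carry out the range check. For the horizontal channel, $d_1 \in [30.5,32]$ drives the standard flex gadget to height $d_1 - 21 \in [9.5,11]$, which lies in its non-crossing interval $(8, 8\sqrt2\,]$ by Claim~\ref{claim:flex}. For the vertical channel, $w_1 \in [30.5,32]$ drives the double-size flex gadget to a height that I would verify, by substituting the endpoint values $w_1 = 30.5$ and $w_1 = 32$ into the width--height relation read off from Figure~\ref{fig:crossing-gadget}(right) and applying Claim~\ref{claim:A-frame} exactly as in the proof of Claim~\ref{claim:turn-gadget}, lies inside the double-size non-crossing interval $(16, 16\sqrt2\,] \approx (16, 22.63]$. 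Since the channel-restricting gadgets upstream (Claim~\ref{claim:channel-restricting}) already force every flex gadget into a non-crossing configuration, and since $h$ and all A-frame angles are continuous monotone functions of the driving angles, every pair $(d_1,w_1)$ in the stated ranges is realized, and every intermediate configuration is non-crossing by the same monotonicity.

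The step I expect to be the real obstacle is the geometric bookkeeping of the crossing itself: showing that the bar system of the horizontal channel (its flex gadget together with the $90^\circ$ connector bars) and that of the vertical channel (its double-size flex gadget, the $90^\circ$ connectors, and any internal turn gadgets) can be routed past one another with no bar meeting a non-incident bar or vertex, uniformly over $d_1, w_1 \in [30.5,32]$. Concretely this amounts to checking that the corridor left free by one channel is always wide enough to hold the opposing gadget (a standard flex gadget has fixed width $18$, a double-size one has base length $34$) and that the monotone sweep of each flex gadget never pushes one of its bars across a boundary of the other channel. Once the coordinates of all vertices are pinned down from the figure this reduces to a finite list of inequalities, which it suffices to verify at the extreme widths and then extend to all intermediate widths by continuity; this is routine, but it is where essentially all the care in the construction is concentrated.
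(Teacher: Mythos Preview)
Your proposal is correct and follows essentially the same approach as the paper: use the flex-gadget parallel-bar property (Claim~\ref{claim:flex}) to force $W$ horizontally aligned with $X$ and $V$ with $U$, yielding $d_2=d_1$, argue symmetrically for $w_1=w_2$, and then handle realizability and non-crossing by checking endpoints and invoking continuity. The paper's own proof is considerably terser---it dispatches the range check for the vertical channel in one line by appealing to Claim~\ref{claim:turn-gadget} and does not carry out any of the bar-clearance bookkeeping you flag in your third paragraph, treating that as evident from the figure.
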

\begin{proof} 
Points $U$ and $X$ are vertically aligned.  
The flex gadget and bar between $S'$ and $S$ force $W$ to be horizontally aligned with $X$.  Similarly, the flex gadget and bars between $P'$ and $P$ force $V$ to be horizontally aligned with $U$.  Thus $d_2 = d_1$.  A similar argument shows $w_1 = w_2$.  All values of $x_i$ and $x_j$ in the range $[\frac{1}{2},2]$ are realizable (appealing to Claim~\ref{claim:turn-gadget} for the vertical channel).
\end{proof}

\begin{figure}[htb]
    \centering
    \includegraphics[width=.5\textwidth]{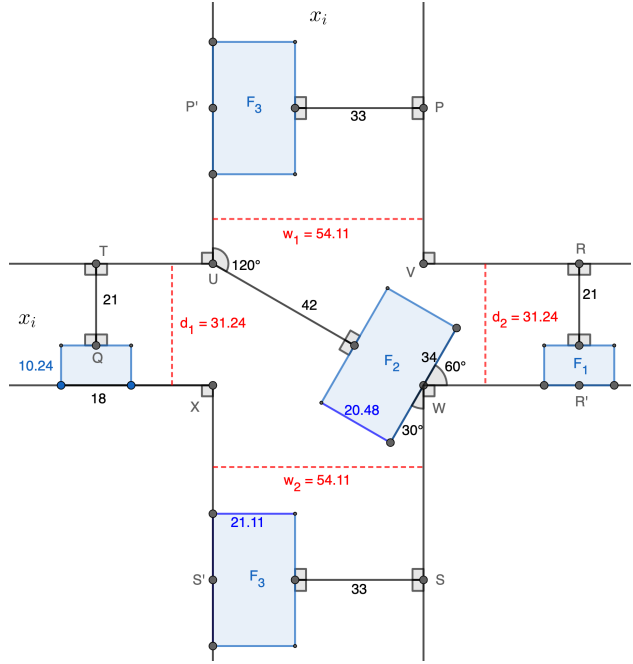}
    \caption{The cross-over gadget for $x_i,x_i$ enforces 
    $d_1=d_2$, $w_1 = w_2$, and 
    $w_1 = \sqrt{3} d_1$.  
    }
    \label{fig:crossing-same-gadget}
\end{figure}

The version of the cross-over gadget for $x_i,x_i$ incorporates the turn gadget into the cross-over gadget as shown in Figure~\ref{fig:crossing-same-gadget}.

\begin{claim}
The cross-over gadget for $x_i, x_i$ in Figure~\ref{fig:crossing-same-gadget}  works correctly:
if $d_1 \in [30.5,32]$, then 
$d_1 = d_2$, $w_1 = w_2$, and $w_1 = \sqrt{3}d_1$.  Furthermore, these configurations are non-crossing. 
\end{claim}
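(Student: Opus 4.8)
The plan is to combine the two arguments already made: the width‑preservation argument from the proof of Claim~\ref{claim:cross-over-gadget} (the cross‑over gadget for distinct variables) and the turn relation from the proof of Claim~\ref{claim:turn-gadget}, since the gadget in Figure~\ref{fig:crossing-same-gadget} is obtained precisely by overlaying a turn gadget onto the cross‑over construction on the same pair of channels. First I would establish $d_1=d_2$ and $w_1=w_2$ exactly as in Claim~\ref{claim:cross-over-gadget}: pick the two reference points on the horizontal channel that are vertically aligned, use the flex gadget together with the connecting bar to force the channel's exit point to be horizontally aligned with one of them, and conclude $d_1=d_2$; the symmetric argument on the vertical channel gives $w_1=w_2$.

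Next I would extract the turn relation $w_1=\sqrt{3}\,d_1$ as in Claim~\ref{claim:turn-gadget}: locate the $30^\circ,90^\circ,60^\circ$ right‑angle triangle whose two legs span the horizontal channel (length $d_1$) and the vertical channel (length $w_1$), with the right angle pinned by the relevant flex gadget and the $60^\circ$ angle fixed by an angle constraint, so the side lengths are $d_1,\sqrt{3}\,d_1,2d_1$ and hence $w_1=\sqrt{3}\,d_1$; combined with the previous paragraph this also gives $w_2=w_1=\sqrt{3}\,d_1$. Finally, for the non‑crossing statement I would check that for every $d_1\in[30.5,32]$ (equivalently $x_i\in[\tfrac12,2]$) each flex gadget used is in the interior of its non‑crossing height range: the standard flex gadget on the horizontal channel has height in $(8,8\sqrt{2}\,]$ by Claim~\ref{claim:flex}, and the double‑size flex gadgets controlling the vertical channel need heights of about $19.83$ up to $22.43$, which lie in $(16,16\sqrt{2}\,]$ — exactly the numerical check already carried out in the proof of Claim~\ref{claim:turn-gadget}. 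Monotonicity and continuity of the A‑frame (Claim~\ref{claim:A-frame}) then give that intermediate configurations are non‑crossing as well.

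The main obstacle I expect is showing there is no over‑constraint: the cross‑over part and the turn part share the same two channels, so one must verify that imposing $d_1=d_2$, $w_1=w_2$, \emph{and} $w_1=\sqrt{3}\,d_1$ simultaneously is still realized by a single non‑crossing configuration for each admissible $d_1$ — i.e., that the flex gadgets retain enough slack to accommodate the turn relation on top of width preservation. This is precisely where the double‑size flex gadgets (whose longer bars supply the needed range $(16,16\sqrt{2}\,]$) are used, and it is settled by the explicit endpoint computations inherited from Claim~\ref{claim:turn-gadget} together with the continuity argument above.
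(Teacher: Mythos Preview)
Your approach is exactly the paper's: the proof there is the single line ``By Claims~\ref{claim:turn-gadget} and~\ref{claim:cross-over-gadget},'' and you have merely spelled out what that citation entails. Your concern about over-constraint is legitimate but is indeed dispatched by the fact that the turn gadget already realizes every $d_1\in[30.5,32]$ non-crossingly (Claim~\ref{claim:turn-gadget}), so superimposing the width-preserving flex gadgets from Claim~\ref{claim:cross-over-gadget} adds no new constraint on the channels.
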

\begin{proof}
By Claims~\ref{claim:turn-gadget} and~\ref{claim:cross-over-gadget}.
\end{proof}

\subsubsection{Addition gadget}\label{sec:addgadget}  
Consider the constraint $x_i + x_j = x_k$.  
From the overall plan (see Figure~\ref{fig:overall}), 
the gadget corresponding to this constraint is entered from the left by
horizontal channels of widths $x_i + 30$, $x_j + 30$ and $x_k + 30$.  We separate these channels by fixed distances of $90$ units.  The addition gadget is constructed as shown in Figure~\ref{fig:addition-gadget} using three double-size flex gadgets.  
\begin{claim}
The addition gadget in Figure~\ref{fig:addition-gadget} works correctly:
if $x_i, x_j \in [\frac{1}{2},2]$ and the corresponding incoming channels have width $x_i + 30$ and $x_j + 30$, respectively, then the channel for $x_k$ has width $x_i + x_j + 30$.  Furthermore, these configurations are non-crossing.  
\end{claim}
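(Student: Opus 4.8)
The plan is to mirror the proofs of Claims~\ref{claim:turn-gadget} and~\ref{claim:cross-over-gadget}. First I would pin down the rigid skeleton of the gadget and the role of each of the three double-size flex gadgets. As in those proofs, a flex gadget with horizontal base keeps its apex bar horizontal (Claim~\ref{claim:flex}), so a vertex rigidly attached to the apex moves on a vertical line; chaining this with a connecting bar forces a designated pair of vertices to lie at equal height, and forces a re-driven copy of a channel inside the gadget to have the same width as the incoming channel (the ``$d_2=d_1$'' trick of Claim~\ref{claim:cross-over-gadget}). So inside the gadget I have three channels of widths $x_i+30$, $x_j+30$, $x_k+30$ that I may reposition freely subject to those width constraints.

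Next I would do the width bookkeeping, which is an addition-by-stacking. Each horizontal channel of value $x$ has width $x+30$, a \emph{vertical} distance between its two horizontal bars; I would narrow the $x_j$-channel to width $x_j$ (as the construction permits, since $x_j\in[\tfrac12,2]$), then use the flex gadgets and rigid connecting bars to stack it directly on top of the $x_i+30$ channel on a common vertical reference line, producing a segment of total length $(x_i+30)+x_j = x_i+x_j+30$, and finally force the $x_k$-channel's width to equal that segment. This gives $x_k+30 = x_i+x_j+30$, i.e.\ $x_k=x_i+x_j$, and conversely produces, for any admissible values, a configuration realizing them. I would then carry out the numerical range check exactly as in the turn-gadget proof (where $F_3$ had to take heights between $\approx 19.83$ and $\approx 22.43$, inside its non-crossing window $(16,16\sqrt2\,]$): verify that each double-size flex gadget only ever needs a height in $(16,16\sqrt2\,]$ to realize the widths $x_i+30$, $x_j+30$ and $x_i+x_j+30$ arising for $x_i,x_j\in[\tfrac12,2]$, and that the stacking/diagonal bars are long enough to absorb the full range of those widths (the analogue of the diagonal $UW$ of length $2d_1$ in Claim~\ref{claim:turn-gadget}).

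For the non-crossing statement I would invoke monotonicity, as in all the previous gadget proofs: the configuration drawn in Figure~\ref{fig:addition-gadget} is non-crossing, and as $(x_i,x_j)$ ranges over $[\tfrac12,2]^2$ every intermediate configuration of each flex gadget, hence of the whole gadget, stays non-crossing by Claim~\ref{claim:A-frame}. I expect the one genuinely delicate point to be the middle step's range-checking: the non-crossing height window of a double-size flex gadget is only about $6.6$ units wide, so one must confirm that the heights needed to realize widths as large as $x_i+x_j+30 \approx 34$ (near $x_i=x_j=2$) and as small as $\approx 31$ all fall strictly inside $(16,16\sqrt2\,]$. This is precisely where the $21$-unit and $30$-unit channel paddings and the $90$-unit channel spacings were tuned, and it is the one place where the gadget could fail, so I would write those height values out explicitly.
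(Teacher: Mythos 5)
Your addition-by-stacking design (narrow the $x_j$-channel to width $x_j$, stack it on the $x_i+30$ channel, read off $x_i+x_j+30$) is geometrically plausible, but it is not the gadget drawn in Figure~\ref{fig:addition-gadget}, so the argument you outline would prove a claim about a different construction. The paper's addition gadget never narrows a channel---the inversion gadget section explicitly contrasts this: ``the addition gadget worked with the wide channels, the inversion gadget cannot.'' The actual mechanism is span-and-subtract: the two incoming wide channels of widths $x_i+30$ and $x_j+30$ are separated by a fixed $90$-unit gap, so the total vertical span from one outer rail to the other is $|GM|=(x_i+30)+(x_j+30)+90$; a fixed $120$-unit rigid offset then gives $|LM|=|GM|-120=x_i+x_j+30$, and two turn gadgets (Claim~\ref{claim:turn-gadget}) transport $|LM|$ unchanged to the outgoing $x_k$-channel, forcing $x_k+30=x_i+x_j+30$.

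Your instinct to re-verify the height windows of the double-size flex gadgets is also misdirected for this gadget. Since $|LM|$ is only ever handled by turn gadgets, the range analysis reduces entirely to Claim~\ref{claim:turn-gadget}: that claim covers widths in $[30.5,32]$, and under the constraint $x_k=x_i+x_j\in[\tfrac12,2]$ that the turn enforces, $|LM|=x_k+30\in[31,32]$ falls inside it. Your worry that the width could be as large as $\approx 34$ (at $x_i=x_j=2$) describes a pair $(x_i,x_j)$ for which the turn gadget, and hence the whole addition gadget, simply has no realization---which is the desired behaviour, not a defect in the range analysis. The paper's proof is a two-sentence reduction to Claim~\ref{claim:turn-gadget} (for both the width identity and the non-crossing realizability), with no fresh numerical checks.
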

\begin{proof}
We prove that the gadget  enforces $x_i + x_j = x_k$, and does not otherwise restrict the values of the variables in the interval $[\frac{1}{2},2]$.

Because the distance between the incoming  
channels for $x_i$ and for $x_j$ is fixed at 90 units, 
we have $|GM| = (x_i +30) + (x_j +30) + 90$, so $|LM| = |GM|-120 = x_i + x_j +30$. 
By Claim~\ref{claim:turn-gadget} the two turns force $|LM| = |UV|$. 
Thus $x_i + x_j + 30 = x_k + 30$. 
Also by Claim~\ref{claim:turn-gadget} the gadget realizes (via non-crossing configurations) any values of $x_i, x_j, x_k$ in $[\frac{1}{2},2]$ that satisfy $x_i + x_j = x_k$. 
\end{proof}

\begin{figure}
    \centering
    \includegraphics[width=.4
\textwidth]{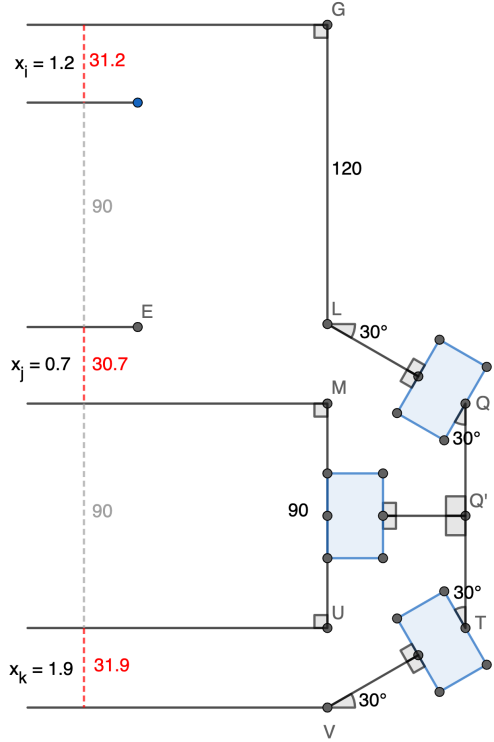}
    \caption{
    The addition gadget for $x_i + x_j = x_k$, illustrated here for $x_i=1.2$, $x_j=0.7$, and $x_k = 1.9$. 
    Each of the three incoming horizontal channels has width $x+30$ for value $x$.
    Note that $E$ and $L$ are not aligned.
    }
    \label{fig:addition-gadget}
\end{figure}

\subsubsection{Inversion gadget}\label{sec:inversiongadget} 
Consider the constraint $x \cdot y = 1$.  The corresponding inversion gadget is entered from the left by channels of widths $x + 30$ and $y +30$.  We separate the channels by fixed distance $110$.  
We turn the channel for $y$ to an upwards vertical channel.
Although the addition gadget worked with the wide channels, the inversion gadget cannot, so we reduce the channels to widths $x$ and $\sqrt{3} y$, respectively, as shown in Figure~\ref{fig:inversion-setup}.  
The basic idea of the inversion gadget is shown in Figure~\ref{fig:inversion-gadget}.
It requires us to simulate two collinear segments (shown in red) with flexible lengths. The gadget to accomplish that is shown in Figure~\ref{fig:inverting-detail} using two flex gadgets  scaled by one-and-a-half with width $2\times 12 + 2$ and height of non-crossing configurations in the range $(12,12\sqrt{2}\,] \approx (12, 16.97 ]$.

\begin{figure}[tb]
    \centering
    \includegraphics[width=0.4\linewidth]{figures/inverting-4-setup.pdf}
    \caption{    Set-up for the inversion gadget. Channels for variables $x$ and $y$, of width $x+30$ and $y+30$, respectively, enter the inversion gadget from the left. The top channel is narrowed to its actual value $x$.  The bottom channel is turned upward and then narrowed to width $\sqrt{3} y$. Contents of the dashed box are shown in subsequent figures. 
    }
    \label{fig:inversion-setup}
\end{figure}

\begin{figure}[htb]
  \centering
  \includegraphics[height=2.85in]{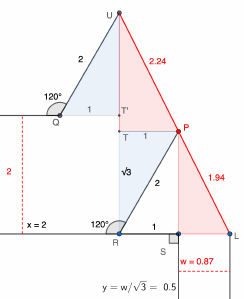}
\hskip 0.09in 
\includegraphics[height=2.85in]{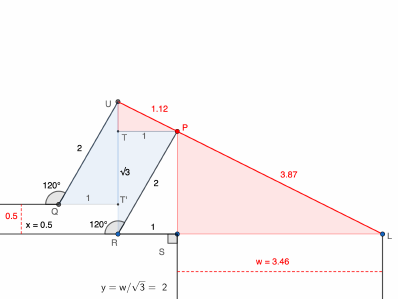}
  \caption{
    An inversion gadget forcing 
    $x \cdot y = 1$
    by means of the two similar red triangles.  Only black segments are bars of the linkage.  The two collinear red segments incident to $P$ must have adjustable lengths, and are implemented as shown in Figure~\ref{fig:inverting-detail}.    
    }
    \label{fig:inversion-gadget}
\end{figure}

\begin{figure}[htb]
    \centering
    \includegraphics[width=.49\textwidth]{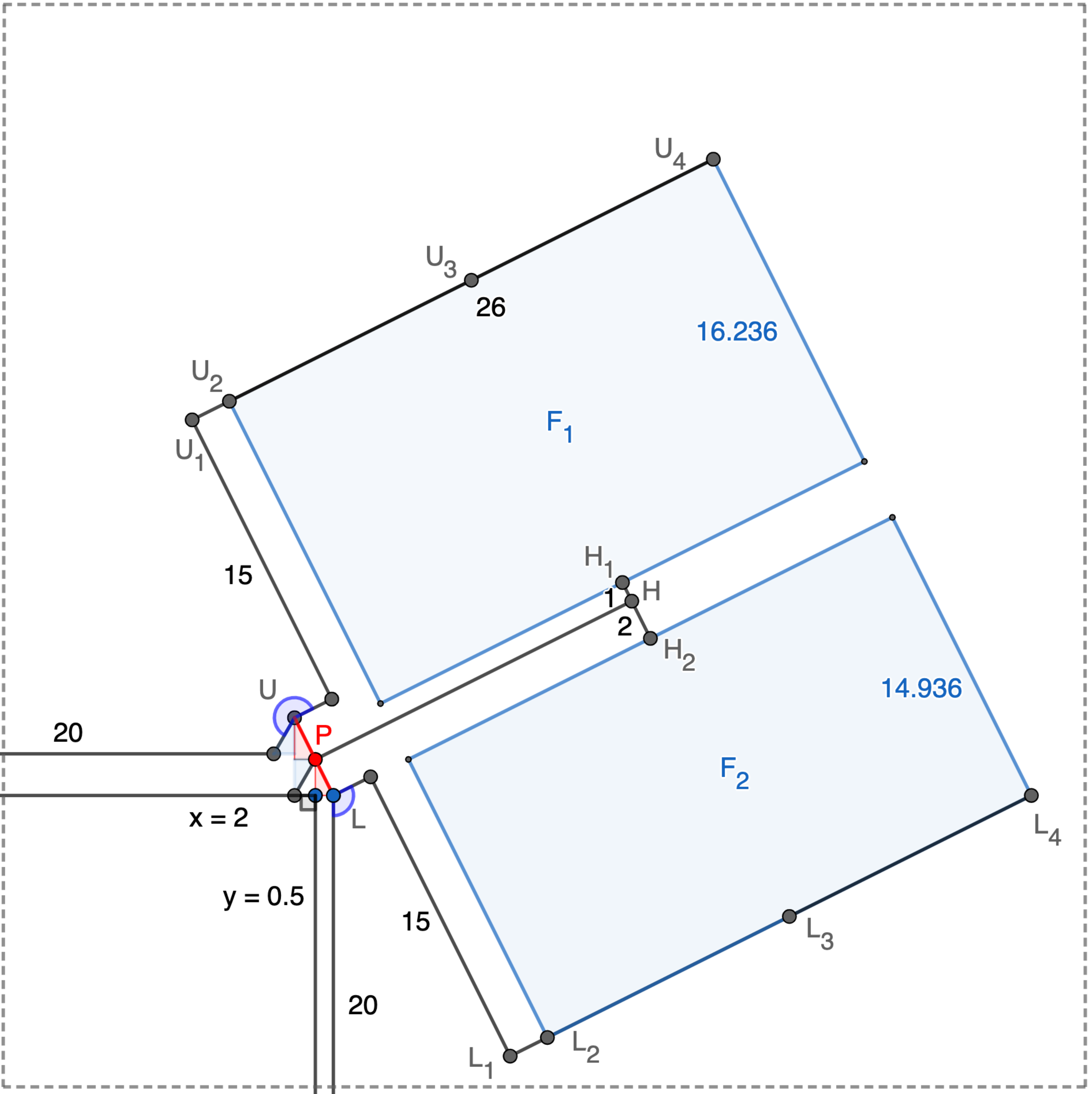}
 \includegraphics[width=.49\textwidth]{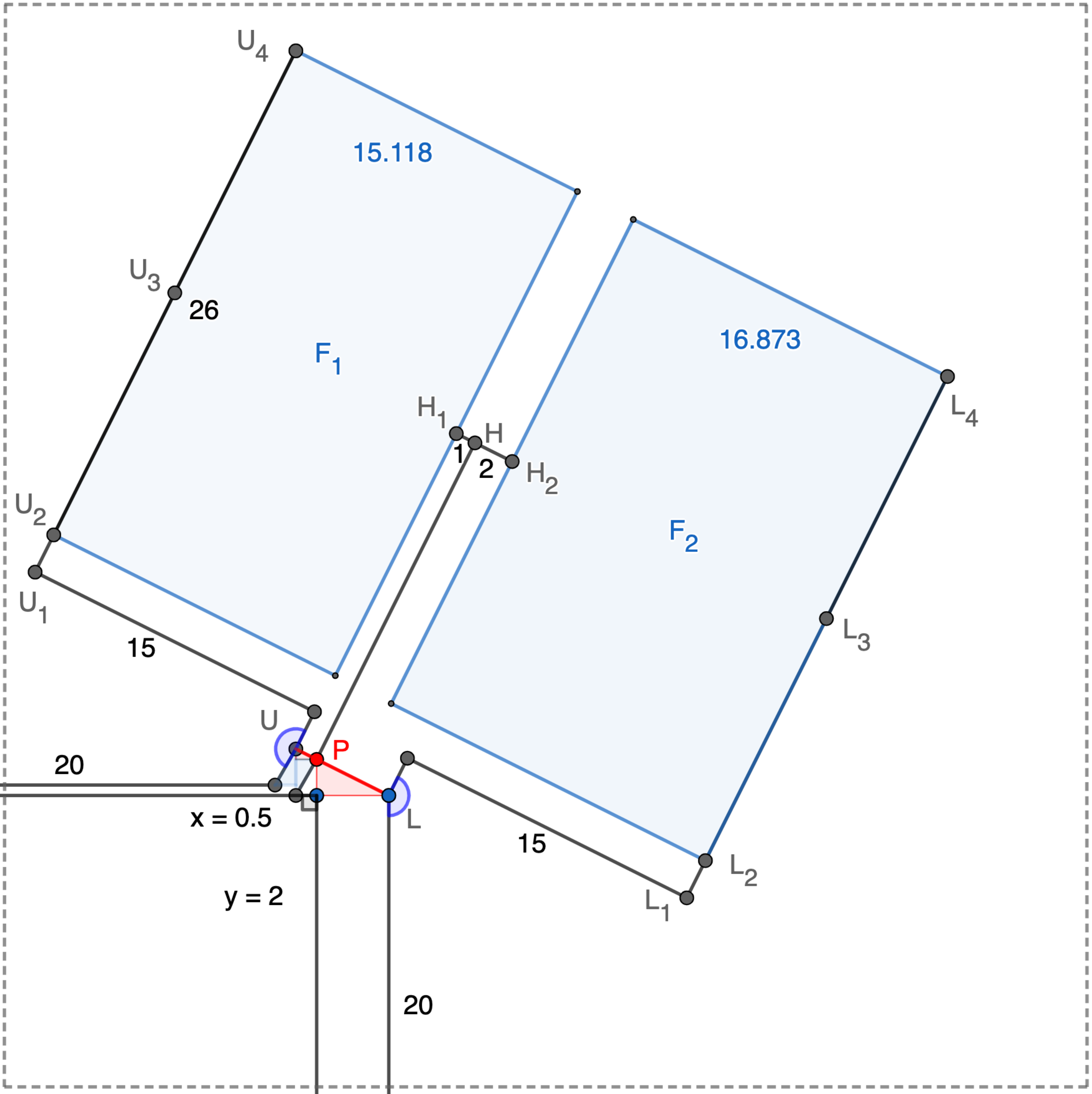}
    \caption{
   Detail of inversion gadget using flex gadgets $F_1$ and $F_2$ scaled by one-and-a-half with width $2\times 12 + 2 = 26$ and height of non-crossing configurations in the range $(12,12\sqrt{2}\,] \approx (12, 16.97 ]$.
   Right-angle symbols have been omitted (for example at $U_1, L_1, H$) but should be apparent from the two views.
    }
    \label{fig:inverting-detail}
\end{figure}

\begin{claim}
\label{claim:inversion}
The inversion gadget shown in Figures~\ref{fig:inversion-setup},~\ref{fig:inversion-gadget},~\ref{fig:inverting-detail} works correctly:
if $x \in [\frac{1}{2},2]$ and the corresponding incoming channel has width $x + 30$,  then the channel for $y$ has width $1/x + 30$ (i.e., $x \cdot y = 1$).  Furthermore, these configurations are non-crossing. 
\end{claim}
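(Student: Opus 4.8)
The plan is to prove Claim~\ref{claim:inversion} in the same style as the earlier gadget claims (Claims~\ref{claim:turn-gadget} and~\ref{claim:cross-over-gadget}): first use the flex gadgets together with the $90^\circ$ and straight-angle constraints to pin down which points of any configuration must be horizontally or vertically aligned, reducing the realizability question to an elementary piece of Euclidean geometry; then read off the algebraic identity from a pair of similar triangles; and finally check that over the whole range $x\in[\tfrac12,2]$ every auxiliary flex gadget stays inside its non-crossing height window, so that the gadget is confined to non-crossing realizations.

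First I would dispatch the set-up of Figure~\ref{fig:inversion-setup}. Narrowing a channel is just the construction used to build a channel from a flex gadget in Section~\ref{sec:flexgadget}, run in reverse: two parallel bars joined with $90^\circ$ angle constraints translate one boundary of the channel inward by a fixed amount, so the top channel of width $x+30$ becomes a channel of width $x$, and, after turning the bottom channel upward via a turn gadget (which by Claim~\ref{claim:turn-gadget} produces a vertical channel of width $\sqrt3(y+30)$), it is narrowed to width $\sqrt3 y$. None of these steps imposes a constraint beyond $x,y\in[\tfrac12,2]$, and each is realizable by non-crossing configurations by Claim~\ref{claim:turn-gadget} and the monotonicity of the flex gadget.

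Next I would analyze the core gadget in Figures~\ref{fig:inversion-gadget} and~\ref{fig:inverting-detail}. Using the flex gadgets and right-angle constraints I would show that the apex $P$ and the endpoints of the red segments are placed so that the two red triangles are right triangles sharing an angle (one pair of corresponding legs vertical, the other collinear along the red line through $P$), hence similar by AA. The two collinear red segments at $P$ are exactly the variable-length segments implemented by the flex gadgets $F_1,F_2$ of Figure~\ref{fig:inverting-detail} (scaled by one and a half, width $26$, non-crossing heights in $(12,12\sqrt2\,]\approx(12,16.97]$); matching the corresponding legs of the two similar triangles against the fixed reference lengths built into the gadget yields the proportion $x\cdot(\sqrt3 y)=\sqrt3$, i.e.\ $x\cdot y=1$, equivalently the $y$-channel has width $1/x+30$. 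For the converse, since $y=1/x\in[\tfrac12,2]$ whenever $x\in[\tfrac12,2]$, it suffices to verify that $F_1$ and $F_2$ can attain the leg lengths required at the extremes $x=\tfrac12$ and $x=2$, and that the turn and narrowing flex gadgets lie in their ranges there; continuity and monotonicity from Claim~\ref{claim:A-frame} then fill in all intermediate values via non-crossing configurations. Non-crossingness itself follows as before: the configurations drawn in Figures~\ref{fig:inversion-setup}--\ref{fig:inverting-detail} are non-crossing, each sub-gadget is monotone, and the pieces are glued along channels that merely translate.

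The main obstacle will be the calculation in the preceding paragraph: confirming that the similar-triangle proportion, once the $\sqrt3$ factor (from the vertical-channel convention) and all the fixed offsets and reference lengths in Figure~\ref{fig:inverting-detail} are accounted for, collapses to exactly $x\cdot y=1$ rather than $x\cdot y=c$ for some stray constant; and then pinning down the numerology so that $F_1$, $F_2$, and every other flex gadget in the chain provably never leave their non-crossing windows as $x$ ranges over $[\tfrac12,2]$, which is what actually forces the gadget into non-crossing configurations.
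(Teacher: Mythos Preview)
Your proposal is correct and follows essentially the same route as the paper: handle the set-up via the turn gadget and channel-narrowing, reduce the core to a pair of similar triangles meeting at $P$ whose collinearity is enforced by the flex gadgets $F_1,F_2$, read off $x\cdot(\sqrt3 y)=\sqrt3$ from the proportion, and then verify the flex-gadget heights at the extremes $x=\tfrac12$ and $x=2$ to conclude non-crossingness by monotonicity. The paper carries out exactly the numerical checks you flag as the ``main obstacle,'' computing $|H_1U_3|$ and $|H_2L_3|$ at both extremes and confirming they lie in $(12,12\sqrt2\,]$.
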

\begin{proof}
The set-up for the inversion gadget, as shown in Figure~\ref{fig:inversion-setup}, uses the turn gadget from Figure~\ref{fig:crossing-gadget}(left).
The resulting vertical channel has width $\sqrt{3}(y + 30)$. 
We claim that bar $EF$ narrows the vertical channel to width $\sqrt{3}y$.  This is because the right-angle triangle $EFG$ (where $G$ is not part of the construction) has side-lengths $60,30,30\sqrt{3}$.

Next, consider the construction in Figure~\ref{fig:inversion-gadget}. The bottom bar of the channel for $x$ and the left bar of the channel for $y$ meet at point $S$.  From points $R$ and $Q$ (which lie on vertical lines $1$ and~$2$ units, respectively, to the left of $S$) we construct bars of length~$2$ at a fixed angle of $120^\circ$ from horizontal.  
This creates three right-angle triangles of side-lengths $2,1,\sqrt{3}$, namely
$PRS$, $RPT$, and $UQT'$.

Consider the triangles $UTP$ and $PSL$ (shaded red).
Triangle $UTP$ has altitude $x$ and base~$1$; triangle $PSL$ has altitude $\sqrt 3$ and base $w = \sqrt{3} y$.
If the segments $UP$ and $PL$ are 
collinear, then the triangles $UTP$ and $PSL$ are similar so
$x / 1 = \sqrt{3} / \sqrt{3} y$, i.e., $x \cdot y = 1$.

Figure~\ref{fig:inverting-detail} shows how we force $UP$ and $PL$ to be collinear.  It remains to show that the flex gadgets $F_1$ and $F_2$ allow any values of $x$ and $y$ in the interval $[\frac{1}{2},2]$ that satisfy $x \cdot y = 1$. 
Recall that $F_1$ and $F_2$ are flex gadget with  height of non-crossing configurations in the (rounded) range $(12, 16.97]$).
Since the flex gadget changes monotonically 
it suffices to consider the extremes $x = 2$, $y = \frac{1}{2}$ and $x = \frac{1}{2}, y = 2$.

If $x = 2$, then $|UP| = \sqrt{5} \approx 2.24$ and $|PL| = \sqrt{3.75} \approx 1.94$. Then $|H_1 U_3| = 15 + \sqrt{5} - 1 \approx 16.24$ which is in the range of non-crossing configurations of $F_1$, and $|H_2 L_3| = 15 + \sqrt{3.75} -2 \approx 14.94$,
which is in the range of non-crossing configurations of $F_2$.

If $x = \frac{1}{2}$, then $|UP| = \sqrt{1.25} \approx 1.12$ and $|PL| = \sqrt{15} \approx 3.87$. Then $|H_1 U_3| = 15 + \sqrt{1.25} - 1 \approx  15.12$ which is in the
range of non-crossing configurations of $F_1$, and $|H_2 L_3| = 15 + \sqrt{15} -2 \approx 16.87$, which is in the range of non-crossing configurations of~$F_2$.
\end{proof}

\subsection{Properties of the Constructed Linkage}
\label{sec:linkages-wrapup}
From an \ETRINV formula $\varphi$ the construction described above produces an angle-constrained linkage $L$ together with a combinatorial embedding $\Lambda$ of $L$ (as shown in the figures above).

\begin{theorem}\label{thm:ETRINVlinkage}
There is a polynomial time algorithm that takes as input an \ETRINV\ formula $\varphi$ and constructs 
an angle-constrained linkage 
$L$ and a combinatorial embedding $\Lambda$ of $L$ such that 
\begin{itemize}
\item if $\varphi$ is true, then $L$ has a non-crossing realization with combinatorial embedding $\Lambda$ and
\item if $\varphi$ is false, then $L$ has no realization (whether non-crossing or crossing). 
\end{itemize}
\end{theorem}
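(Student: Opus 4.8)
The plan is to establish Theorem~\ref{thm:ETRINVlinkage} by assembling the gadget claims from Sections~\ref{sec:channelrestrictgadget}--\ref{sec:inversiongadget} into a single coherent linkage and verifying the two implications. First I would describe the assembly precisely: given $\varphi$ with variables $x_1,\dots,x_n$ and constraints $C_1,\dots,C_m$, lay out a grid of horizontal tracks (one primary horizontal channel per variable, plus extra horizontal channels for each occurrence of a variable in a constraint) and vertical tracks, exactly as in Figure~\ref{fig:overall}. Place a flex gadget with a channel-restricting gadget at the start of each primary horizontal channel (Claim~\ref{claim:channel-restricting}), so that each variable's primary channel has width forced into $[30.5,32]$, equivalently $x_i\in[\tfrac12,2]$, and moreover is pinned to a non-crossing flex configuration. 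At every point where a horizontal and a vertical channel cross, insert the appropriate cross-over gadget — the $i\ne j$ version (Claim~\ref{claim:cross-over-gadget}) or the $i=i$ version — and for each constraint insert the addition gadget (if $C_\ell$ is $x_i+x_j=x_k$) or the inversion gadget (if $C_\ell$ is $x\cdot y=1$), connecting their incoming channels as specified. This whole construction is carried out by a polynomial-time algorithm: the number of gadgets is $O(n+m)$ plus $O(nm)$ cross-overs, each gadget has constant size with integer edge lengths and angle constraints that are multiples of $30^\circ$, and the combinatorial embedding $\Lambda$ is read off directly from the planar layout.

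Next I would prove the first bullet. Suppose $\varphi$ is true, witnessed by values $a_1,\dots,a_n\in[\tfrac12,2]$. The plan is to build a realization of $L$ by, for each variable $x_i$, setting every horizontal channel associated to $x_i$ to width $a_i+30$ and every vertical channel associated to $x_i$ to width $\sqrt3(a_i+30)$, and then choosing, gadget by gadget, the configuration guaranteed to be non-crossing by the corresponding claim. Concretely: the channel-restricting gadgets admit these widths non-crossingly (Claim~\ref{claim:channel-restricting}); the turn gadgets inside cross-overs and arithmetic gadgets realize the $w=\sqrt3 d$ relation non-crossingly for $d\in[30.5,32]$ (Claim~\ref{claim:turn-gadget}); the cross-over gadgets maintain both widths non-crossingly (Claims~\ref{claim:cross-over-gadget} and its $i=i$ analogue); the addition gadget realizes $a_i+a_j=a_k$ non-crossingly when the constraint holds; and the inversion gadget realizes $a\cdot b=1$ non-crossingly (Claim~\ref{claim:inversion}). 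Because the gadgets interact only through shared channels, and the shared-channel widths are consistent across all gadgets by our uniform choice, these local configurations glue into a global configuration; it has combinatorial embedding $\Lambda$ because each gadget was placed in the embedding shown in its figure, and it is non-crossing because each gadget is individually non-crossing and the gadgets occupy disjoint regions of the plane separated by the fixed-length gaps between channels.

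For the second bullet, suppose $L$ has some realization $\rho$, crossing or not; I want to show $\varphi$ is true. The key observation is that the hypotheses of all the gadget claims only require the \emph{width} of an incoming channel to lie in $[30.5,32]$, not that the configuration be non-crossing — so I would first argue that in any realization, each primary horizontal channel has width in $[30.5,32]$. This follows because the channel-restricting gadget's constraint (e.g.\ $|M_1M_3|\le 32$ via two bars $M_1M_2,M_2M_3$, and the symmetric lower bound via $N_1N_2,N_2N_3$) is a purely metric/angle condition that holds in every realization, and the flex gadget geometry forces the channel width to equal (a fixed affine function of) $h$. Hence each variable $x_i$ gets a well-defined value $a_i\in[\tfrac12,2]$ from $\rho$. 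Then, propagating through the turn, cross-over, addition, and inversion gadgets — again using only the metric content of Claims~\ref{claim:turn-gadget}--\ref{claim:inversion}, which hold in arbitrary realizations — every horizontal channel for $x_i$ has width $a_i+30$, every vertical channel for $x_i$ has width $\sqrt3(a_i+30)$, each addition gadget forces $a_i+a_j=a_k$, and each inversion gadget forces $a\cdot b=1$. Thus $(a_1,\dots,a_n)$ satisfies $\varphi$.

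The main obstacle I anticipate is the second bullet: I have to be careful that every individual gadget claim as stated — several of which assert both a forced-equation conclusion \emph{and} a non-crossing realizability conclusion — actually delivers the forced-equation part for \emph{crossing} realizations too. This requires inspecting each proof to confirm that the equality (e.g.\ $w_2=\sqrt3 d_1$, $|LM|=|UV|$, $x\cdot y=1$) is derived from rigid sub-configurations (vertical alignments, fixed $60$-$90$-$30$ triangles, collinearity forced by the internal flex-and-bar structure) that are present in \emph{any} realization, independent of whether bars cross; the flex gadget's parallel-transport property (Claim~\ref{claim:flex}, whose proof uses only bar lengths) and the angle constraints are what make this work. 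A secondary bookkeeping point is verifying that the fixed gaps ($90$ units in the addition gadget, $110$ in the inversion gadget, the $21$-unit offset above the flex gadget, etc.) are chosen so that in the true case the gadgets genuinely do not overlap and the extra room for turning channels exists — but this is routine once the layout is drawn, and is exactly what the specific constants in Sections~\ref{sec:flexgadget}--\ref{sec:inversiongadget} were engineered to guarantee.
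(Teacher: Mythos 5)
Your proposal is correct and follows essentially the same route as the paper: assemble the gadgets as in the overall layout, appeal to Claims~\ref{claim:A-frame}--\ref{claim:inversion} to realize each gadget non-crossingly when $\varphi$ is true, and for the converse observe that the channel-restricting gadgets pin the initial flex gadgets to non-crossing configurations with width in $[30.5,32]$ so that the gadget claims then propagate the addition and inversion constraints metrically in \emph{any} realization. Your explicit separation of each claim's forced-equation content from its non-crossing-realizability content is exactly what the paper compresses into ``more formally, this is by induction.''
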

\begin{proof}
If $\varphi$ is true, then, as justified by Claims~\ref{claim:A-frame}--\ref{claim:inversion},  the values of the variables that satisfy $\varphi$ determine channel widths that result in a non-crossing realization of $L$ with combinatorial embedding~$\Lambda$.

Now suppose $\varphi$ is false. Although Hart's A-frame and the flex gadget (in isolation) have crossing realizations, our channel-restricting gadgets from Section~\ref{sec:channelrestrictgadget} ensure that the initial flex gadgets only take on non-crossing configurations that correspond to a value of a variable in the interval $[\frac{1}{2}, 2]$.  After that, the rest of our construction faithfully implements the addition and inversion constraints of~$\varphi$. 
(More formally, this is by induction.)
Therefore there is no realization of $L$, even allowing crossing configurations.    
 
\end{proof}

In the next section we give a realization-preserving transformation from angle-constrained 
linkages produced by the above construction to penny graphs.


\section{From Linkages to Penny Graphs}
\label{sec:penny-graphs}

In this section we show how to turn the angle-constrained linkages
constructed in Section~\ref{sec:linkages} into penny graphs. At a first glance that would appear to be straightforward: simply replace each vertex with a penny and each bar of integer length $\ell$ between two vertices with $\ell+1$ pennies lying on a line with the centers of the first and last pennies representing the endvertices of the bar. The centers will then have distance $\ell$ times the diameter of the pennies. An immediate obstacle is that a penny can touch at most six other pennies, so our linkages better have max-degree at most $6$, which fortunately they do. We also need to 
keep the pennies representing a bar on a line. This can be achieved by bracing the line of pennies on either (or both) sides as necessary, see the left illustration in Figure~\ref{fig:bracedbar}; the illustration on the right shows how the same bar can connect to further bars if the angles are part of the triangular grid, i.e., a multiple of $60^\circ$.
\begin{figure}[htb]
    \centering
        \begin{tabular}{cp{0.1in}c}
    \includegraphics[width=.4\textwidth]{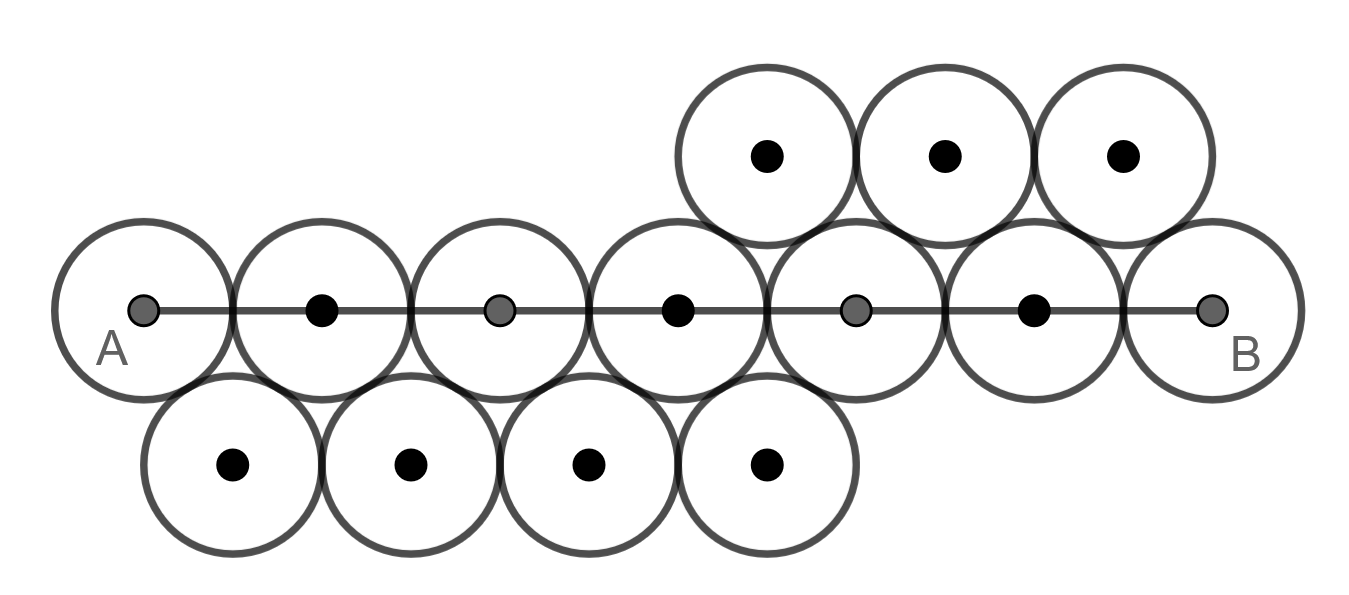}
    &&
        \includegraphics[width=.49\textwidth]{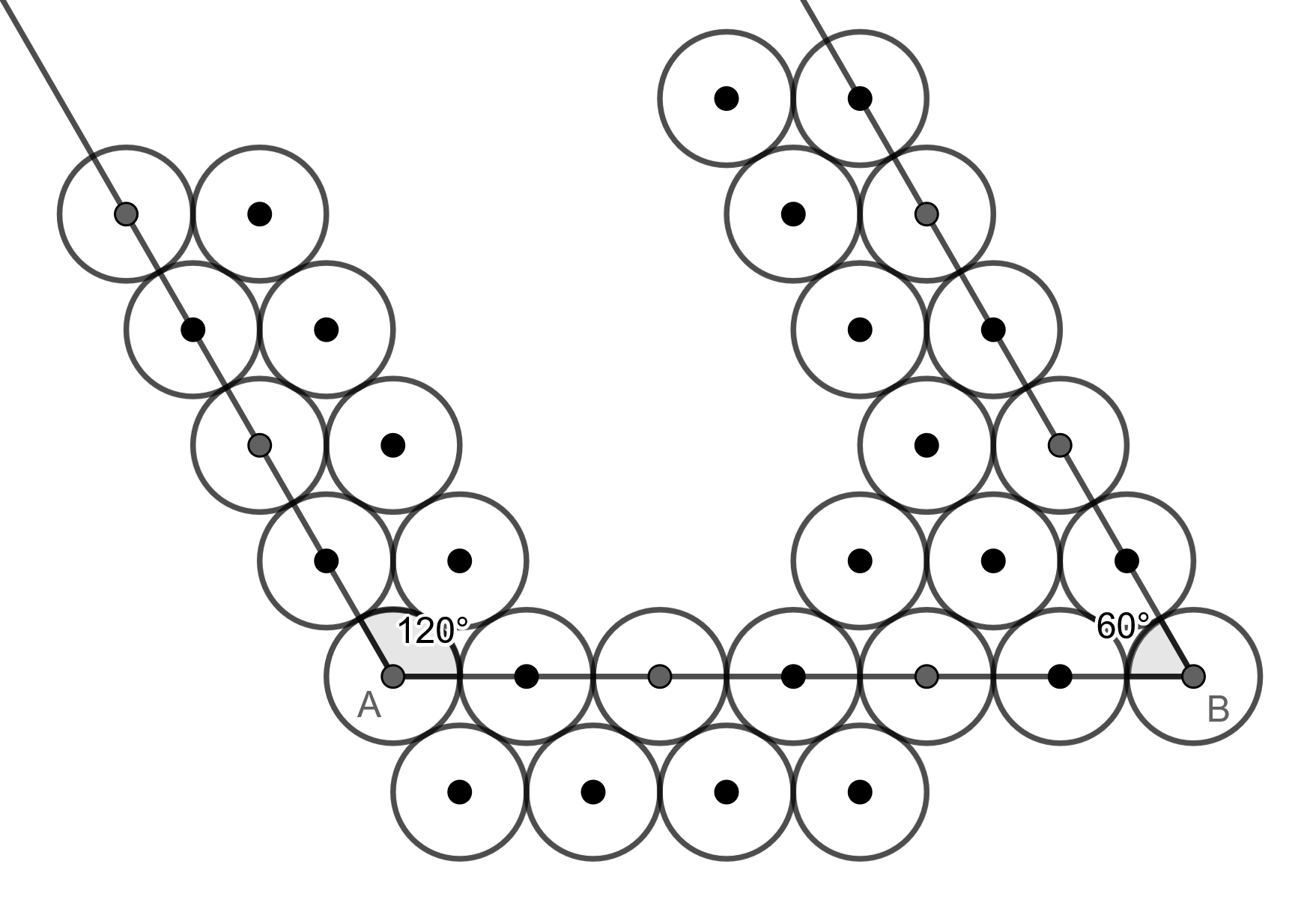}
    \end{tabular}
    \caption{{\em (left)} A bar of length $6$ (times the penny diameter) between vertices $A$ and $B$ braced on opposite sides. {\em (right)} The same bar with additional bars attached at $A$ and $B$ at angles of $120^{\circ}$ and $60^{\circ}$.  
    }
    \label{fig:bracedbar}
\end{figure}

Using extra pennies to brace a bar creates a follow-up problem: the penny at the end of a bar must be in contact with one of the extra bracing pennies.
This further restricts the degree of linkage vertices, though it is possible for two bars to share a bracing penny, see, for example, the 
$60^{\circ}$ angle at vertex $B$ in the right illustration in Figure~\ref{fig:bracedbar}.
Luckily, the linkages we constructed in Section~\ref{sec:linkages} have small degree and most of the vertex types are easy to model. The only difficult case is the flex gadget vertex that is the apex of Hart's A-frame (see vertex $H$ in 
Figure~\ref{fig:flex}). This vertex has degree $3$ and must allow a range of motion for the three angles. 
Even worse, in the modified flex gadget that has the channel-restricting gadget attached to it, the vertex $H$ (see Figure~\ref{fig:exact-interval}) 
has degree $4$ and includes one angle constrained to $180^\circ$ and three angles that must allow some range of motion.   
The gadget achieving this flexibility at the apex vertex is 
at the core of our reduction.

We need to pay close attention to the vertex types (degree and angle constraints) in our linkage gadgets. The construction in Section~\ref{sec:linkages} uses the following five gadgets: 
the {\em flex gadget} of Section~\ref{sec:flexgadget}, which incorporates Hart's A-frame from Section~\ref{sec:Aframe};
the {\em channel-restricting gadget} of Section~\ref{sec:channelrestrictgadget};
the {\em turn gadget} of Section~\ref{sec:turngadget};
the {\em cross-over gadgets} of Section~\ref{sec:crossgadgets};
the {\em addition gadget} of  Section~\ref{sec:addgadget}; and
the  {\em inversion gadget} of Section~\ref{sec:inversiongadget}.

We distinguish between \defn{fixed-angle} vertices that have all their incident angles constrained and \defn{flexible-angle} vertices which do not have all their incident angles constrained. We classify fixed-angle vertices in Table~\ref{tab:fixedangle} and flexible-angle vertices in Table~\ref{tab:flexangle}.

To shorten the classification, we will only include examples from one
or two gadgets for each vertex type. For example, the flex gadget contains degree-$2$ vertices at a right angle, $T$, $R$ and $S$ in Figure~\ref{fig:flex}; the same type of vertex occurs in the cross-over gadget, e.g.\ as $X$ in Figure~\ref{fig:crossing-same-gadget}, and 
also in the channel-restricting gadget, the addition gadget and the inversion gadget, but we do not list those instances separately in Table~\ref{tab:fixedangle}. 

\begin{table}[htb]
\setlength{\fboxsep}{3pt}
\setlength{\fboxrule}{0pt}
\centering
\begin{tabular}{|c|l|l|l|}
\hline
degree & angle(s) & pic &  sample instance(s) \\ \hline
\multirow{3}{*}{2} & $90^{\circ}$ 
& \fbox{\includegraphics[height=0.4in]{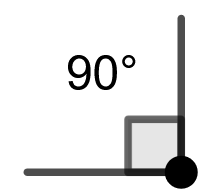}}
& $T$, $R$ in the flex gadget,  Fig.~\ref{fig:flex} \\ \cline{2-4}
                        &$120^{\circ}$ 
                        &\fbox{\includegraphics[height=0.4in]{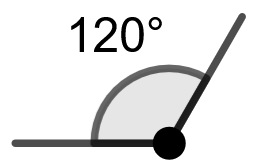}}
                        & $L$ in the addition gadget,  
                        Fig.~\ref{fig:addition-gadget} \\ \cline{2-4}
                        & $150^{\circ}$
                        &\fbox{\includegraphics[height=0.4in]{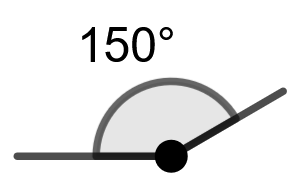}}
                        & $V$ in the addition gadget,  
                         Fig.~\ref{fig:addition-gadget}\\ \hline
\multirow{3}{*}{3} & $90^{\circ}, 90^{\circ}$ & \fbox{\includegraphics[height=0.4in]{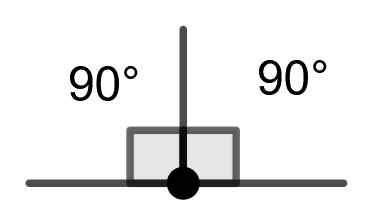}}
& $Q$ in the flex gadget,  Fig.~\ref{fig:flex}\\ \cline{2-4}
& $90^{\circ}, 120^{\circ}$& \fbox{\includegraphics[height=0.4in]{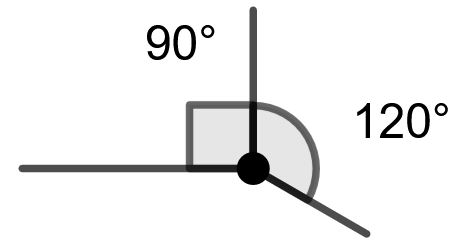}}
& $U$ in the cross-over gadget,  Fig.~\ref{fig:crossing-same-gadget}\\ \cline{2-4}
& $150^{\circ}$, $30^{\circ}$& \fbox{\includegraphics[height=0.4in]{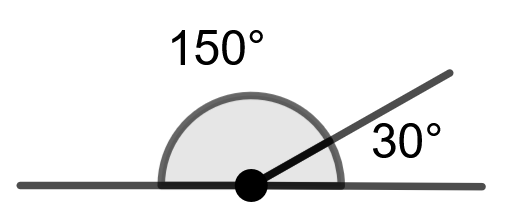}}
& $W$ in the turn gadget, Fig.~\ref{fig:crossing-gadget}  
\\ \hline
\multirow{1}{*}{4} & $60^{\circ}, 90^{\circ}, 30^{\circ}$ 
& \fbox{\includegraphics[height=0.4in]{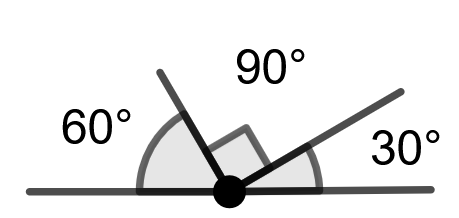}}
& $W$ in the cross-over gadget,  Fig.~\ref{fig:crossing-same-gadget} \\ \hline
    
\end{tabular}
\caption{Fixed-angle vertex types.}\label{tab:fixedangle}
\end{table}

For the flexible-angle types in Table~\ref{tab:flexangle} we must also pay attention to the range of motion needed at angles that are not constrained.  
For example, the degree-$2$ vertex $M_2$ of the channel-restricting gadget in Figure~\ref{fig:exact-interval}, must allow angles in the range $(144^\circ, 180^\circ]$. 
To simplify the classification, we overestimate the angle ranges; for example $M_2$'s range is overestimated as $(120^\circ, 240^\circ)$. 
For degree $2$ and $3$ vertices there is one degree of freedom and we only list the range of one of the angles. 
For the degree $4$ vertex $H$, the apex vertex in the modified flex gadget (see Figure~\ref{fig:A-frame-angles}), 
there is still only one degree of freedom and we list 
only the ranges for angles $\alpha$ and $\beta$ in the figure---this is explained further in the proof of Lemma~\ref{lem:angletypes}. 
Note that the case of the degree $3$ apex vertex $H$ of Figure~\ref{fig:exact-interval}
is subsumed by the degree $4$ case.

\begin{table}[htb]
\centering
\begin{tabular}
{|c|l|p{3.5in}|}
\hline
degree & angle(s) & sample instance(s) \\ \hline
\multirow{4}{*}{2} & $(60^{\circ}, 180^{\circ})$ & $C$ and $G$ in the flex gadget, 
Figs.~\ref{fig:flex},~\ref{fig:A-frame-angles};
also  $L$ in the inversion gadget, Figs.~\ref{fig:inverting-detail},~\ref{fig:inverting-detail-closeup} \\
\cline{2-3}
 & $(120^{\circ}, 240^{\circ})$ & $M_2$ and $N_2$ in the modified flex gadget,  Fig.~\ref{fig:exact-interval}; also $U$, $P$ in the inversion gadget, 
Figs.~\ref{fig:inverting-detail},~\ref{fig:inverting-detail-closeup} \\ 
\hline
\multirow{1}{*}{3}& $180^{\circ}$, $(20^{\circ},100^{\circ})$ & 
$B$, $E$, $A$ and $D$ in the flex gadget, 
Figs.~\ref{fig:flex},~\ref{fig:A-frame-angles}
\\ \hline
\multirow{1}{*}{4} & $180^{\circ}$, 
$(85^\circ,114^\circ)$, $(35^{\circ},46^{\circ})$
& $H$, the apex vertex in the modified flex gadget, Figs.~\ref{fig:exact-interval}, \ref{fig:A-frame-angles}, with ranges for angles $\alpha, \beta$ \\ \hline
\end{tabular}
\caption{Flexible-angle vertex types.
}\label{tab:flexangle}
\end{table}

\begin{lemma}
\label{lem:angletypes} Tables~\ref{tab:fixedangle} and~\ref{tab:flexangle} cover all angle types required by our linkage construction in Section~\ref{sec:linkages}. 
\end{lemma}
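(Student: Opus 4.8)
The lemma is a bookkeeping statement, so the natural approach is an exhaustive walk through the gadgets of Section~\ref{sec:linkages} --- the flex gadget, the channel-restricting gadget, the turn gadget, the two cross-over gadgets, the addition gadget, and the inversion gadget --- recording for each vertex of each gadget its degree and the list of angle constraints incident to it, and then matching the resulting (degree, angles) pattern against a row of Table~\ref{tab:fixedangle} when all incident angles are fixed, or Table~\ref{tab:flexangle} otherwise. Since most vertex types recur verbatim across gadgets (the degree-$2$ right angles that bound every channel; the degree-$3$ vertices of the A-frame bars carrying a $180^{\circ}$ constraint plus one free angle; and so on), it suffices to verify one representative of each type and remark that the remaining occurrences are congruent copies, which is exactly the convention announced just before Table~\ref{tab:fixedangle}.

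For the fixed-angle vertices the verification is immediate from the figures: the construction places every such junction on the triangular grid or at the meeting of two channel directions, so every face angle there is a multiple of $30^{\circ}$ that can simply be read off, and one checks that the list of patterns actually arising ($90^{\circ}$; $120^{\circ}$; $150^{\circ}$; $90^{\circ},90^{\circ}$; $90^{\circ},120^{\circ}$; $150^{\circ},30^{\circ}$; $60^{\circ},90^{\circ},30^{\circ}$) coincides with Table~\ref{tab:fixedangle}. The only real content is confirming that no gadget introduces a fixed-angle vertex of degree $\ge 5$ or an unlisted angle combination, and this holds by direct inspection of each figure.

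For the flexible-angle vertices we additionally need to justify the stated ranges of motion. For the degree-$2$ and degree-$3$ types each is a one-parameter family, and the range of the single listed free angle is obtained from the two extreme configurations and then overestimated generously: e.g.\ $M_2,N_2$ in the channel-restricting gadget sweep $(144^{\circ},180^{\circ}]$ as the flex-gadget height ranges over $[9.5,11]$ --- the endpoints being the apex angles of the isosceles triangles $M_1M_2M_3$, $N_1N_2N_3$, whose bases are pinned to length $32$ and to the minimal value forced by $N_1N_2,N_2N_3$ --- which is recorded as the overestimate $(120^{\circ},240^{\circ})$; the A-frame bar vertices $A,B,C,D,E,G$ have one free angle whose range follows from the continuity and monotonicity of all A-frame angles in $\angle CAD$ (Claim~\ref{claim:A-frame}) restricted to the admissible interval of $h$, overestimated by $(60^{\circ},180^{\circ})$ or $(20^{\circ},100^{\circ})$; and the inversion-gadget vertices $U,P,L$ are handled the same way using the non-crossing height intervals of $F_1,F_2$ established in the proof of Claim~\ref{claim:inversion}.

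The one step needing genuine care is the degree-$4$ apex vertex $H$ of the modified flex gadget (Figure~\ref{fig:A-frame-angles}). I would first argue that, despite carrying a $180^{\circ}$ constraint and three free angles, this vertex still sits in a one-parameter family governed by the height $h\in[9.5,11]$: by Claim~\ref{claim:A-frame} each A-frame is rigid once $h$ is fixed, the bar $HH'$ forces the two A-frames to move synchronously (Claim~\ref{claim:flex}), and the $180^{\circ}$ constraint pins the fourth bar at $H$, so every angle at $H$ is a continuous monotone function of $h$. It then remains to evaluate the two angles $\alpha,\beta$ of Figure~\ref{fig:A-frame-angles} at $h=9.5$ and $h=11$ by a short computation in the scaled A-frame geometry, check that the resulting values lie in $(85^{\circ},114^{\circ})$ and $(35^{\circ},46^{\circ})$, note that the third free angle is then determined since the three sum to $180^{\circ}$, and invoke monotonicity to conclude that the full ranges are exactly the intervals between the endpoint values and hence lie inside the Table~\ref{tab:flexangle} overestimates (the degree-$3$ apex arising when the channel-restricting gadget is attached on only one side being a sub-case of this analysis). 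With $H$ dealt with, every vertex of every gadget has been assigned a table row, and the lemma follows.
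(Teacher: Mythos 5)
Your proposal takes essentially the same route as the paper: an exhaustive inspection of each gadget's vertices, combined with a monotonicity argument that lets the flexible-angle ranges be read off from the two extreme configurations $x=2$ and $x=0.5$, with the degree-$4$ apex $H$ treated as the only delicate case. One small caution: the sentence ``the $180^\circ$ constraint pins the fourth bar at $H$, \emph{so} every angle at $H$ is a continuous monotone function of $h$'' reads as though monotonicity follows from the pinning, whereas it really rests on the final assertion in Claim~\ref{claim:A-frame} (and in particular on the nontrivial fact, proved in Appendix~\ref{app:AFrame}, that $\beta=\angle CHG$ equals $\angle AHD$); the paper's proof singles this out explicitly, and your write-up should too rather than folding it into a ``so.''
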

\begin{proof}
A close inspection of the linkage gadgets shows that all their vertex types are captured in the tables.

For Table~\ref{tab:flexangle}, there are two additional aspects to verifying the angle ranges.  First, we claim that all non-constrained angles change 
monotonically so an angle always lies between its values in the extreme positions for $x=2$ and $x = 0.5$. 
This is clear for 
degree $2$ flexible-angle vertices.  The other flexible-angle vertices only appear in the [modified] flex gadget.  Refer to Figure~\ref{fig:A-frame-angles}. 
Monotonicity 
for the degree-$3$ flexible angle vertices $B,E,A,D$ is clear.
The situation at vertex $H$ is more complicated.  As $h$ decreases, angle $\alpha$ is monotonically increasing and $\gamma$ is monotonically decreasing.   
In Appendix~\ref{app:AFrame} we show that angle $\beta$ is the same as angle $AHD$, and therefore it is monotonically increasing. 

Finally, 
for the values of the angles in the extreme positions 
we rely on the figures, but it is 
possible to justify these values algebraically. 
\end{proof}

\begin{figure}[htb]
    \centering
    \includegraphics[width=0.38\linewidth]{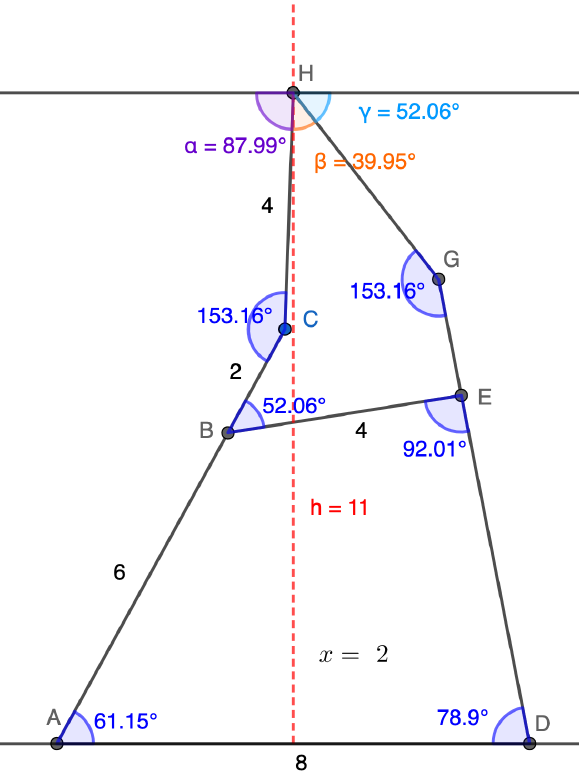}
    \hspace{.3in}
    \includegraphics[width=0.38\linewidth]{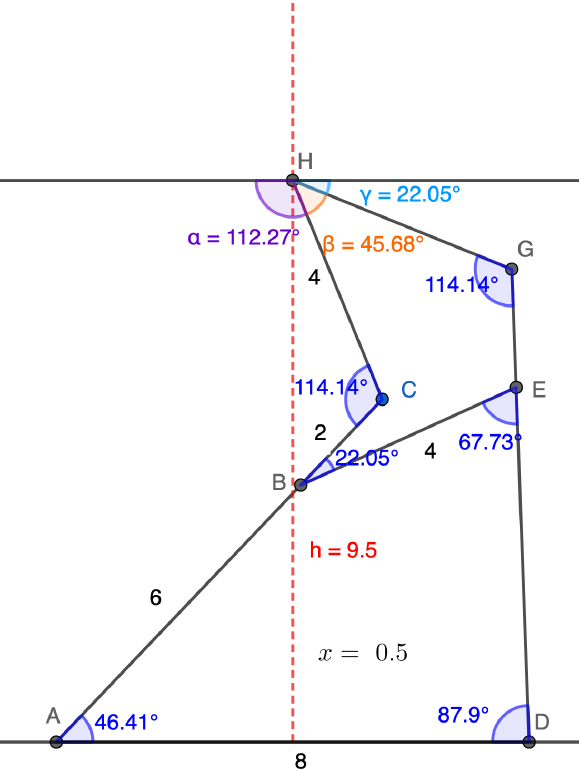}
\caption{
Angles in the flex gadget. 
\emph{(left)} The configuration realizing $x=2$.
\emph{(right)} The ``kneeling'' configuration realizing $x = 0.5$. In between these two positions, the flex gadget continuously simulates all values $x \in [\frac{1}{2},2]$ as we proved earlier.
}
    \label{fig:A-frame-angles}
\end{figure}

\begin{figure}
    \centering
\includegraphics[width=.3\textwidth]{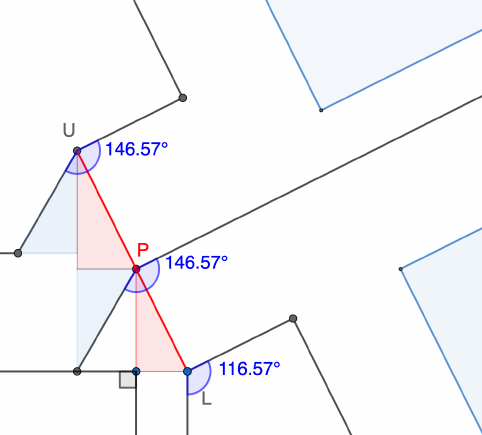}
\hskip0.4in
\includegraphics[width=.3\textwidth]{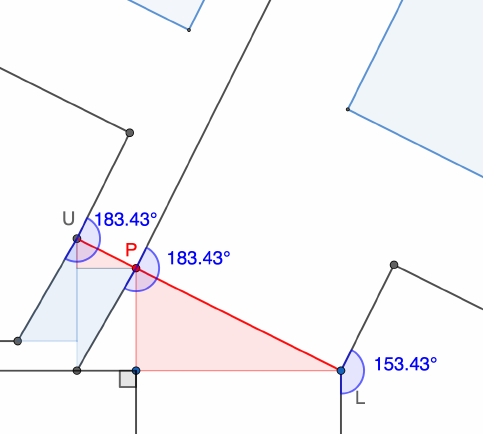}
\caption{Closeup showing the angle ranges at vertices $U, P, L$ for $x=2$ (left) and $x=0.5$ (right) from Figure~\ref{fig:inverting-detail}.}
    \label{fig:inverting-detail-closeup}
\end{figure}

The next lemma shows that we can simulate an angle-constrained linkage with angle types as identified in Tables~\ref{tab:fixedangle} and~\ref{tab:flexangle} by a penny graph. 

\begin{lemma}\label{lem:linkpenny}
There is a polynomial time algorithm that takes as input an angle-constrained
linkage~$L$ 
as constructed in Section~\ref{sec:linkages}
(with edge-weights in $\NN/2$ and angle types as described in Tables~\ref{tab:fixedangle} and \ref{tab:flexangle}) and a combinatorial embedding $\Lambda$
and constructs a graph $G$ and a combinatorial plane embedding $D$ of $G$ such that
\begin{itemize}
    \item if $L$ is realizable, then $G$ has a penny-graph embedding realizing $D$, and
    \item if $L$ is not realizable, then $G$ is not a weak unit-distance graph for any embedding.
\end{itemize}
\end{lemma}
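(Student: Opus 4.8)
The plan is to give an explicit, local, gadget-by-gadget simulation: replace each bar of $L$ by a ``rigid beam'' of pennies and each vertex of $L$ by a small penny sub-gadget, arguing that in any weak unit-distance realization of $G$ these sub-gadgets are forced to behave exactly like the linkage $L$. First I would fix a scale so that the penny diameter is $1$ and every bar of integer-or-half-integer length becomes a straight chain of pennies whose center-to-center distance along the chain equals the bar length (half-integer lengths are handled by using the grid generated by $30^\circ$ rather than $60^\circ$, so all bar directions remain multiples of $30^\circ$). Each beam is braced on one or both sides by extra rows of pennies, exactly as in Figure~\ref{fig:bracedbar}; the key local rigidity claim is that a braced row of $k$ pennies, each touching its neighbors and touching the adjacent bracing row, can only be realized with the $k$ pennies collinear and consecutively touching, because any perturbation would force two non-adjacent pennies within distance $<1$ (a crossing). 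This is where the non-crossing hypothesis is doing its work, and this sub-claim should be stated and proved once, then reused.

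Next I would handle the fixed-angle vertices of Table~\ref{tab:fixedangle}. For each listed type (degree $2$ at $90^\circ,120^\circ,150^\circ$; degree $3$ at the listed pairs; the degree-$4$ type) I would exhibit a small rigid penny cluster, sitting inside the triangular/hexagonal grid, that holds the incident beams at precisely the required angles and has no other weak unit-distance realization; since all required angles are multiples of $30^\circ$ these clusters are just fragments of the standard penny packing, braced enough to be rigid. The bulk of the work, and what I expect to be the main obstacle, is the flexible-angle vertices of Table~\ref{tab:flexangle}, and in particular the degree-$4$ apex vertex $H$ of the modified flex gadget: we must design a penny sub-gadget at $H$ that (i) is rigid enough that the only freedom is the single degree of freedom of the flex gadget, (ii) permits the full angle ranges listed in Table~\ref{tab:flexangle} (roughly $\alpha\in(85^\circ,114^\circ)$, $\beta\in(35^\circ,46^\circ)$, one angle at $180^\circ$), and (iii) in the false/over-rotated regimes is genuinely \emph{impossible} as a weak unit-distance graph, not merely forced to cross. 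Getting all three simultaneously at a degree-$4$ vertex with a $180^\circ$ constraint plus two independent-looking but actually coupled moving angles is the delicate part; I would build it from a braced-rhombus (``pantograph''-style) linkage of pennies — recall from Figure~\ref{fig:grid} that a braced $4\times4$ grid of pennies behaves like a rhombus with one degree of freedom — so that the single rhombus parameter drives $\alpha,\beta,\gamma$ monotonically, matching Lemma~\ref{lem:angletypes}. The monotonicity from Lemma~\ref{lem:angletypes} is exactly what guarantees the penny apex gadget never needs to pass through a configuration outside the allowed cone, hence never needs to cross.

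With the per-vertex and per-bar gadgets in hand, the two directions of the lemma follow by assembling. If $L$ is realizable, then by Theorem~\ref{thm:ETRINVlinkage} it has a \emph{non-crossing} realization with the prescribed combinatorial embedding $\Lambda$; placing each penny gadget according to that realization yields a penny-graph drawing of $G$ — one must check the gadgets don't collide with \emph{each other}, which is arranged by leaving the fixed slack built into the channels (the $+30$ offsets, the fixed gaps of $90$, $110$ units, etc., are precisely the room needed to route the braced beams without collision), and this gives an embedding realizing $D$. Conversely, if $G$ is a weak unit-distance graph for \emph{some} embedding, then the local rigidity claims (braced beams are straight, fixed-angle clusters hold their angles, the apex gadget's only freedom is the flex parameter) force the pennies to induce a valid realization of $L$, including the angle constraints; since a weak unit-distance realization need not be non-crossing this recovers even a possibly-crossing realization of $L$, contradicting the second bullet of Theorem~\ref{thm:ETRINVlinkage} when $\varphi$ is false. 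Finally I would note that the whole construction is clearly polynomial: each bar contributes $O(\ell)$ pennies, each vertex a constant-size cluster, and there are polynomially many of each, and the embedding $D$ is read off directly from $\Lambda$.
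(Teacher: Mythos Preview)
Your overall plan is essentially the paper's: replace bars by braced penny chains, replace vertices by small penny clusters, and in particular build the degree-$4$ apex vertex from a braced $4\times 4$ rhombus of pennies (your ``pantograph'') whose single degree of freedom tracks the flex parameter --- this is exactly the heart of the paper's construction. Where you diverge, and where there is a real gap, is in how you prevent gadgets from colliding in the forward direction. You appeal to the linkage-level slack (the $+30$ channel offsets, the $90$- and $110$-unit gaps), but that slack controls distances between \emph{bars of $L$}, not between penny gadgets. If the penny diameter equals one linkage unit, then your vertex gadgets occupy a disk of radius on the order of $15$--$20$ pennies (the $(3,4,5)$ right-angle cluster and the apex rhombus are that big), while in the inversion gadget the narrow channel has width $x\in[\tfrac12,2]$ and the points $U,P,L$ of Figure~\ref{fig:inverting-detail} sit within a couple of units of one another --- their vertex gadgets would overlap badly. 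The paper fixes this with a separate ``thickness'' argument: the linkage $L$ from Section~\ref{sec:linkages} has a fixed vertex--vertex and vertex--edge resolution $\varepsilon>0$ independent of $L$, and one scales by a constant factor $\alpha=\lceil 20/\varepsilon\rceil$, using $\alpha\ell$ pennies per bar of length $\ell$, so that every gadget (thickness $\le 20$ pennies $=\,20/\alpha$ linkage units $\le\varepsilon$) fits inside the resolution neighborhood of its vertex. This also disposes of the half-integer issue (take $\alpha$ even), whereas your ``$30^\circ$ grid'' remark does not.

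A second, subtler point: your ``key local rigidity claim'' for braced beams explicitly invokes the non-crossing hypothesis (``any perturbation would force two non-adjacent pennies within distance $<1$''). That is fine for the first bullet, but for the second bullet you must argue that any \emph{weak} unit-distance realization of $G$ --- which may have crossings and even coinciding vertices --- still forces the correct bar lengths and angle constraints of $L$. Relying on non-crossing there is circular. The paper does not linger on this either, but you should at least note that the rigidity you need is framework rigidity (unit bars only), not penny-graph rigidity, and make sure your braced-beam and vertex gadgets are rigid in that stronger sense.
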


Bars and vertices in a linkage do not have area, but pennies do; when replacing bars with bar gadgets and vertices with vertex gadgets these gadgets take up space. To capture the space requirements of our gadgets, we introduce---for the proof of Lemma~\ref{lem:linkpenny} only---the notion of 
``thickness''.
The \defn{thickness} of our vertex and bar gadgets is defined as the smallest $\varepsilon$ such that the pennies in the gadget lie within an $\varepsilon$-neighborhood of the linkage part they represent (vertex or bar). We do not need the exact $\varepsilon$, upper bounds are sufficient. We distinguish the thickness of a vertex (the radius of a disk containing all pennies representing the vertex) and the thickness of a bar (the width of the bar-gadget containing all pennies representing the bar).

Thickness will be important later when we show how to realize a bar of length $\ell$. Instead of working with $\ell+1$ pennies of unit diameter giving thickness of $\varepsilon$, we can work with $n\ell+1$ pennies of diameter $1/n$, leading to a bar of thickness $\varepsilon/n$, so we can make bars (and vertices) sufficiently thin so as not to interfere with each other. 

\begin{proof}
We start by showing how to replace vertices of each type with penny graph gadgets. 

\begin{figure}[htb]
    \centering
    \includegraphics[height=3in]{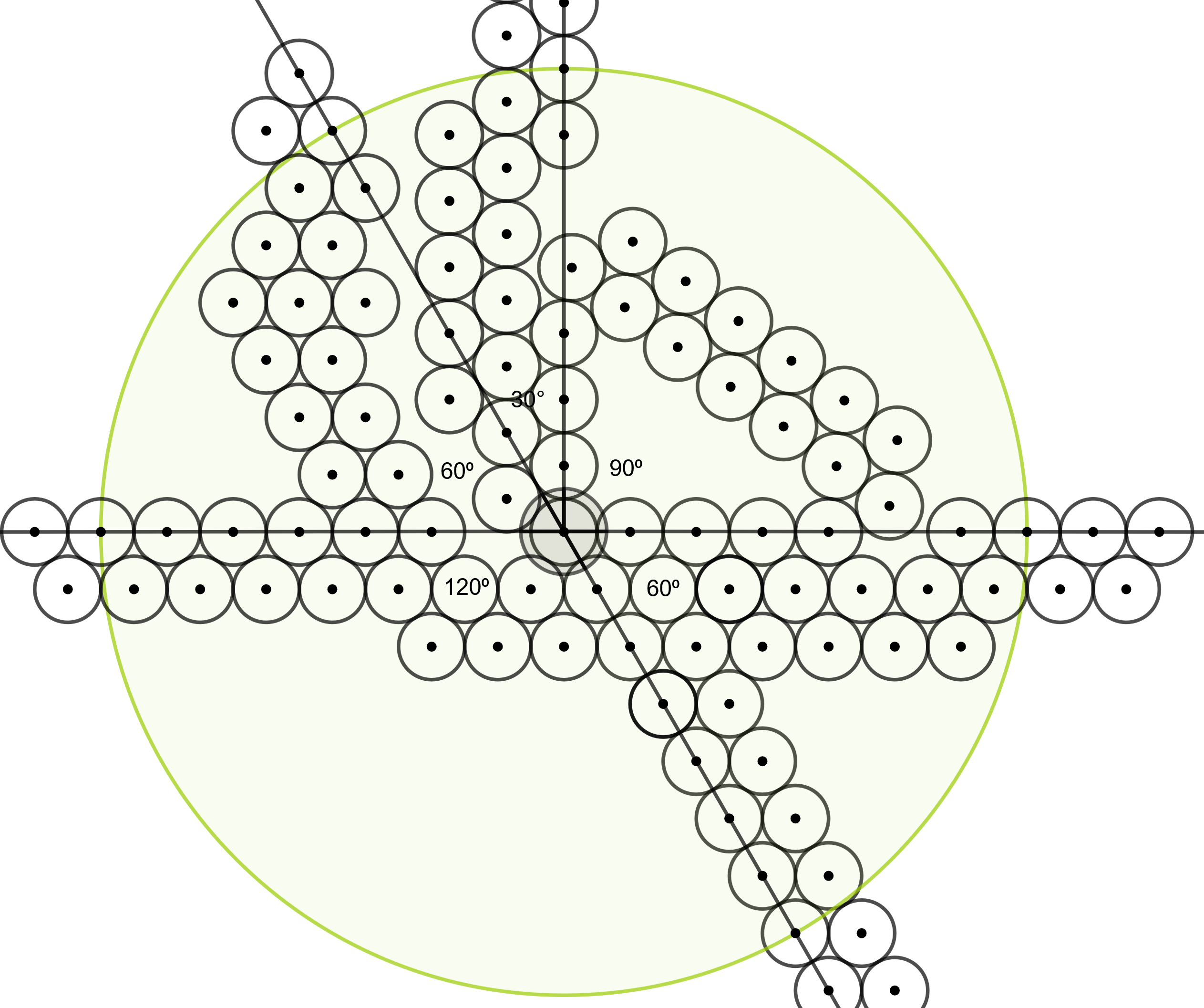}
    \caption{
    The universal vertex gadget with angles $90^{\circ}$, $60^{\circ}$, $120^{\circ}$, $60^{\circ}$, and $30^{\circ}$. The green disk
    of radius 14 indicates the thickness of the gadget. 
    }
    \label{fig:vx2fix}
\end{figure}

\paragraph{Fixed-Angle Vertex Types.} 
Gadgets for $120^{\circ}$- and $60^{\circ}$-angles are easy to build as we already saw in Figure~\ref{fig:bracedbar}, since they are part of the triangular grid. A right angle can be forced using the Pythagorean triple $(3,4,5)$---this idea is based on~\cite[Figure~5]{ADDELS25}; combining these two ideas we can build a universal vertex gadget, shown in Figure~\ref{fig:vx2fix} which, realizes bars at angles $90^{\circ}$, $60^{\circ}$, $120^{\circ}$, $60^{\circ}$, and $30^{\circ}$ in this (cyclic) order. 
In this and the following figures, the thickness of a vertex gadget is shown as a green disk.  
Only pennies lying fully inside the green disk
belong to the 
vertex gadget; the remaining pennies belong to bar gadgets connecting the vertex gadget to another vertex gadget and are included for illustration only.

This universal vertex gadget can be used  to simulate all the fixed-angle vertex types from Table~\ref{tab:fixedangle} by extending the appropriate subset of arms, and (if necessary) flipping the gadget: the gadget contains bars at angles $90^{\circ}$, $120^{\circ}$ and $150^{\circ}=90^{\circ}+60^{\circ}$, 
simulating all the degree-$2$ fixed-angle vertices; it contains bars at angles $90^{\circ}=60^{\circ}+30^{\circ}$, $90^{\circ}$; $90^{\circ}=60^{\circ}+30^{\circ}$, $120^{\circ}$; and  $150^{\circ}=90^{\circ}+60^{\circ}$, $30^{\circ}$, simulating all the degree-$3$ fixed-angle vertices; and finally it contains bars at angles $30^{\circ}$, $90^{\circ}$, $60^{\circ}$, which simulate the degree-$4$ fixed-angle vertex. 
The universal vertex gadget has thickness  $14$.

\begin{figure}[htb!]
    \centering
    \includegraphics[width=.6\textwidth]{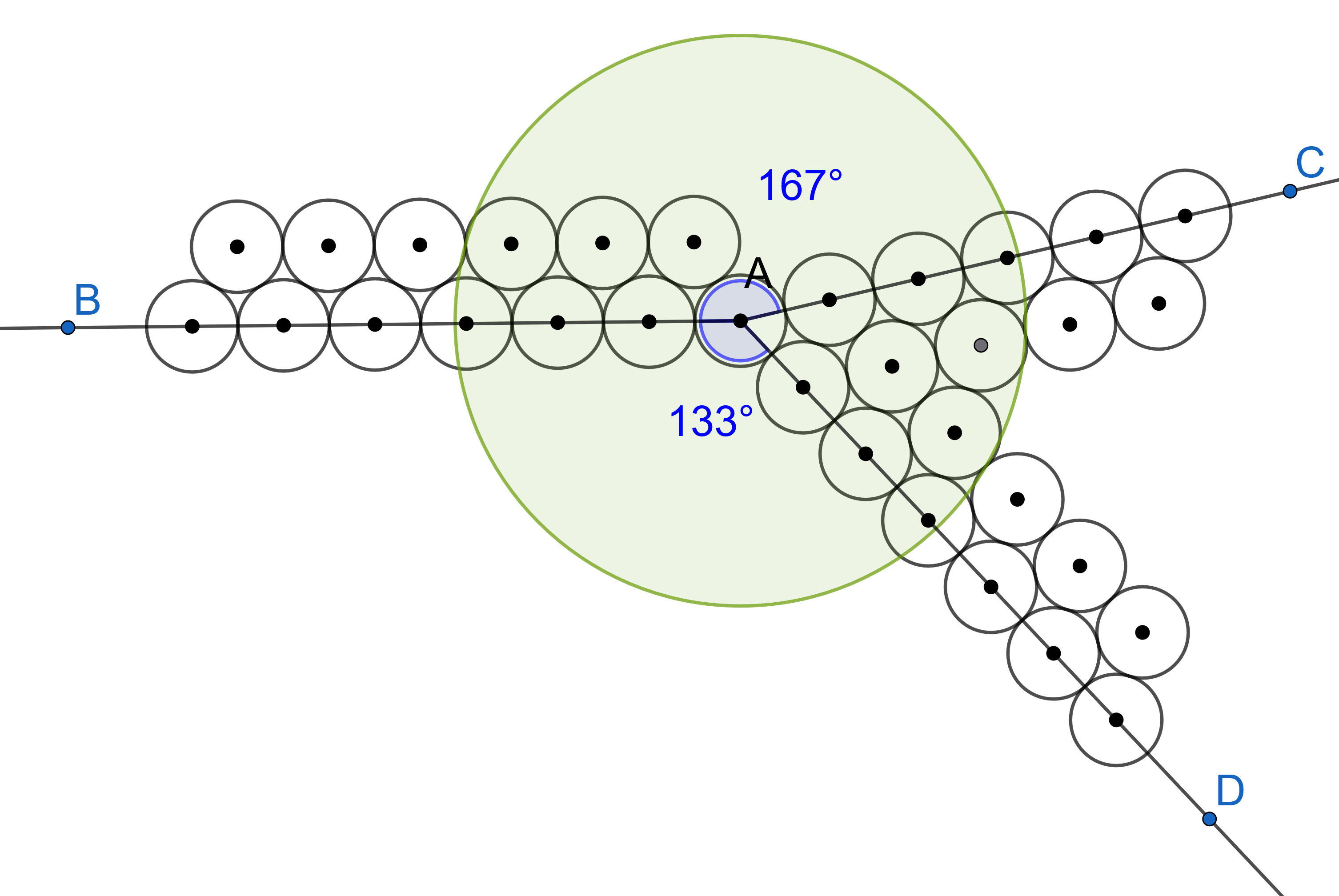}
    \caption{
    A universal flexible degree-$2$ vertex gadget; the angle between arms $BA$ and $AC$ can range in  the open interval $(120^{\circ}, 240^{\circ})$, and the angle between arms $BA$ and $AD$ in the open interval $(60^{\circ}, 180^{\circ})$. The green disk indicates the thickness of the gadget.
    }
    \label{fig:vx2flex180}
\end{figure}

\paragraph{Flexible-Angle Vertex Types.} 
For these vertex types we need more than one gadget. For the flexible-angle degree-$2$ vertices, we use the universal gadget in Figure~\ref{fig:vx2flex180}. Arms $BA,AD$ simulate the bars with angles in the range $(60^{\circ}, 180^{\circ})$ while arms $BA$,$BC$ realize the range $(120^{\circ}, 240^{\circ})$.

The 
flexible degree-$3$ vertex has a fixed angle of $180^{\circ}$ and a flexible angle; the vertex gadget in Figure~\ref{fig:vx3flex180} shows that we can simulate a flexible angle in the range $[20^{\circ}, 100^{\circ}]$, indeed the actual range is even larger. 

\begin{figure}[htb]
   \begin{tabular}{cp{0.01in}c}
    \includegraphics[width=.47\textwidth,valign=t]{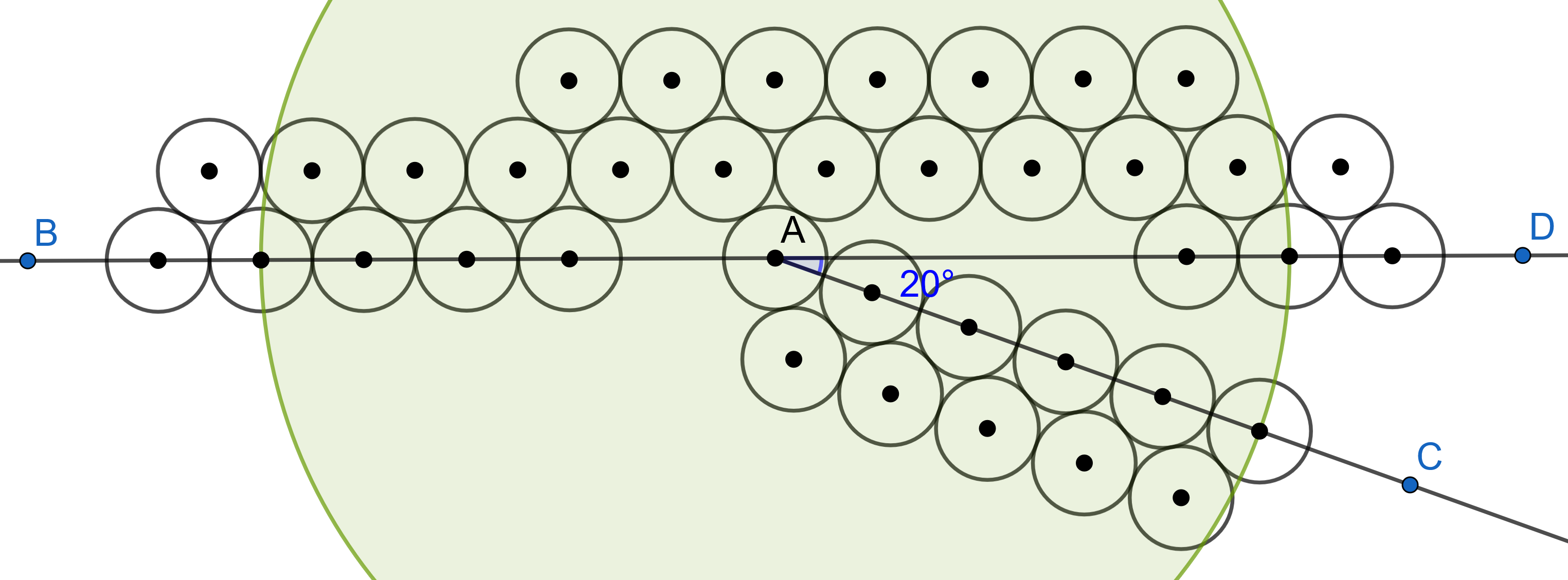}
    &
    \includegraphics[width=.47\textwidth,valign=t]{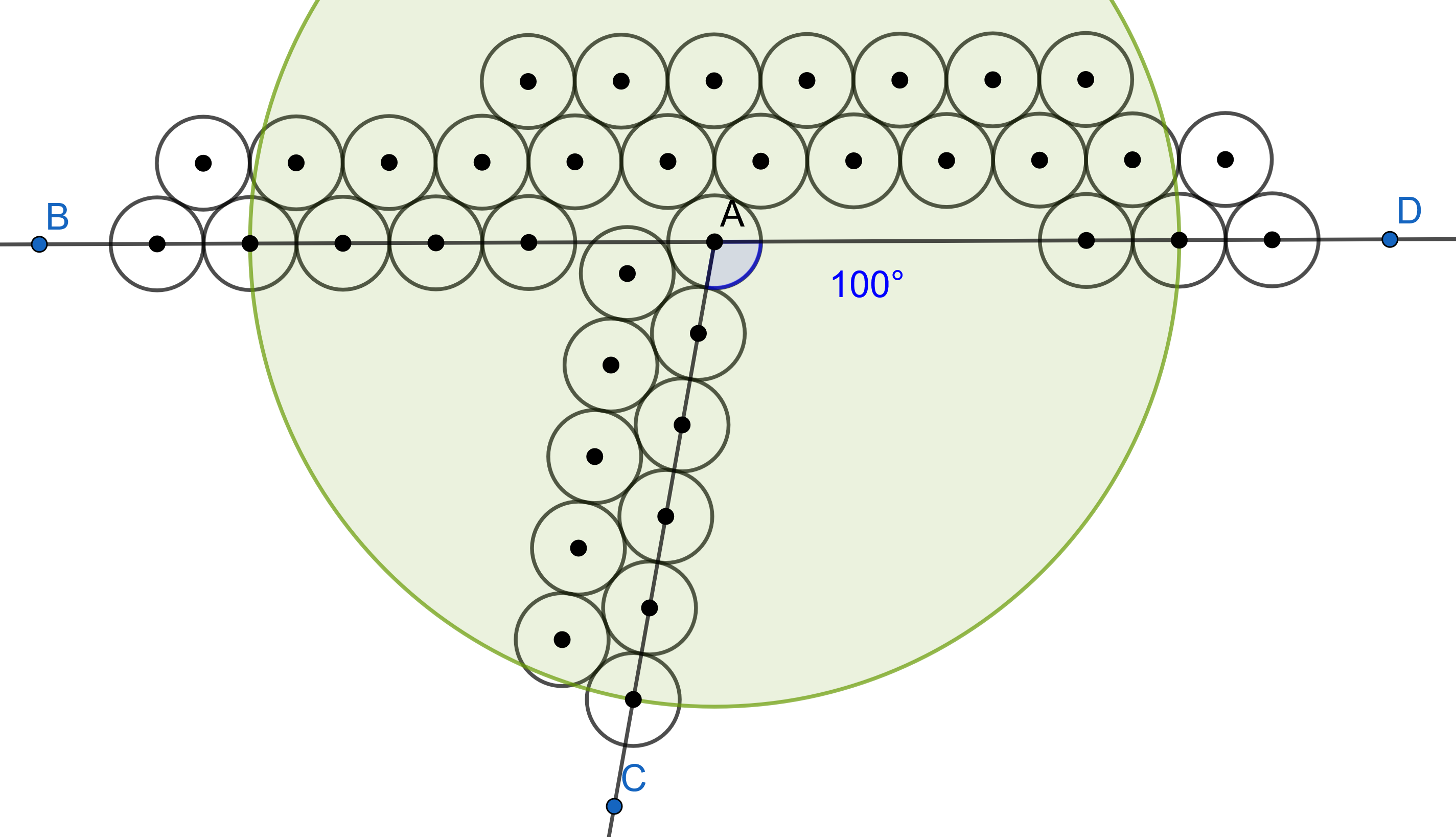}
    \end{tabular}
    \caption{
    A flexible degree-$3$ vertex gadget with a straight-line bar (through $AB$) and a flexible angle $\angle BAC$ at $20^{\circ}$ {\em (left)} and at $100^{\circ}$ {\em (right)}. The angle can range in the closed interval $[20^{\circ}, 100^{\circ}]$. The green disk indicates the thickness of the gadget.
    }
    \label{fig:vx3flex180}
\end{figure}

This leaves us with the trickiest case, the apex vertex $H$ of the modified flex gadget; 
 see
Figure~\ref{fig:A-frame-angles}, which shows $H$ and its angles $\alpha, \beta$, and $\gamma$ in 
the positions for $x=2$ and $x=0.5$.
From Table~\ref{tab:flexangle} and the proof of Lemma~\ref{lem:angletypes} we have
that $\alpha$ and $\beta$ are monotonically increasing in the ranges $(85^\circ, 114^\circ)$ and $(35^\circ, 46^\circ)$, respectively.
There is a single degree of freedom.

At the heart of the apex vertex gadget is a $4\times 4$ grid of pennies---colored blue in the figures, see Figure~\ref{fig:panto}. We call one corner vertex of this grid $H$ and brace the two sides not incident to $H$ with three pennies each, so that the blue pennies are forced to realize a rhombus; the angle of the rhombus at $H$, without the additional appendages, can range in the interval $(60^{\circ},120^{\circ})$.\footnote{The grid of pennies has similarities with the {\em pantograph}, one of the oldest known linkages, dating to the early 17th century, see~\url{https://historyofinformation.com/detail.php?id=1259}.}
We extend the two sides incident to $H$ into arms as shown in the figure; one arm connects to $G$, the other arm is a virtual arm, colored gray in the illustrations; the angle between the virtual arm and $HG$ cannot be made sufficiently small for our purposes, so we extend the virtual arm, outside the area of the rhombus, so it can simulate an arm rotated $60^{\circ}$ counterclockwise at $H$. We use this new arm to connect to $C$. Figure~\ref{fig:panto} shows that this gadget can realize angles $CHG$ in the range 
$[20^{\circ},50^{\circ}]$
which accommodates the range of angle $\beta \in (35^\circ, 46^\circ)$.

Finally, we attach $H$ to a subset of a triangular grid that simulates a line through $H$; this gives us the final apex vertex gadget, 
as shown in Figures~\ref{fig:apexs} and~\ref{fig:apexk}, with the line through $H$ shown as horizontal. 
Figure~\ref{fig:apexs} shows that the apex vertex gadget can realize the 
minima $\alpha = 85^\circ$ and $\beta = 35^\circ$.
Figure~\ref{fig:apexk} shows that the apex vertex gadget can realize the 
maxima $\alpha = 114^\circ$ and $\beta = 46^\circ$.
Since $\alpha+\beta$ is monotonically increasing and every value of $\beta$ in its range can be realized, 
this implies that every configuration of $H$ in the modified flex gadget between the configurations for $x=0.5$ and $x=2$ can be realized by the apex vertex gadget.

\begin{figure}[htb]
   \begin{tabular}{cp{0.01in}c}
    \includegraphics[width=.47\textwidth,valign=t]{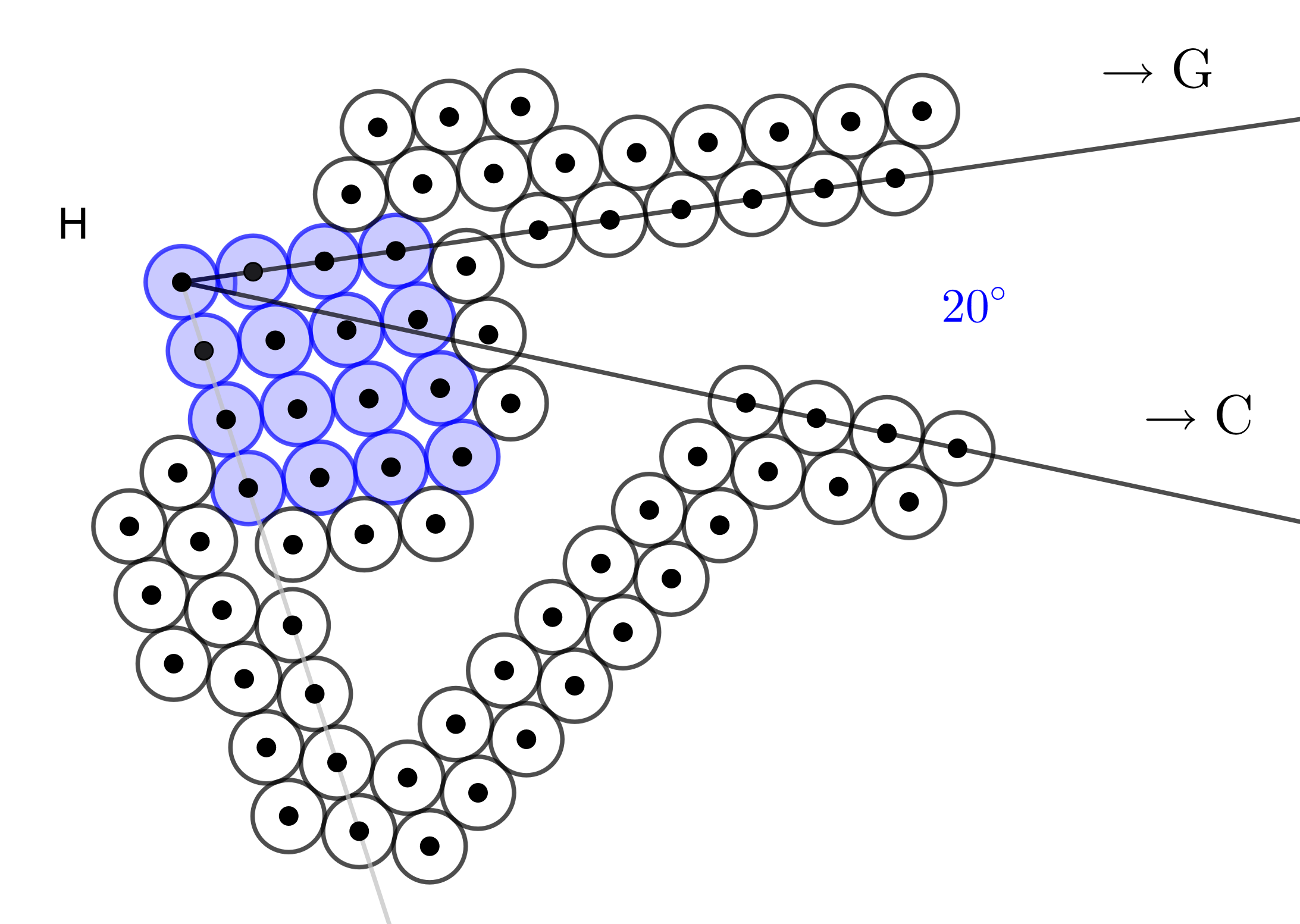}
    &
    \includegraphics[width=.47\textwidth,valign=t]{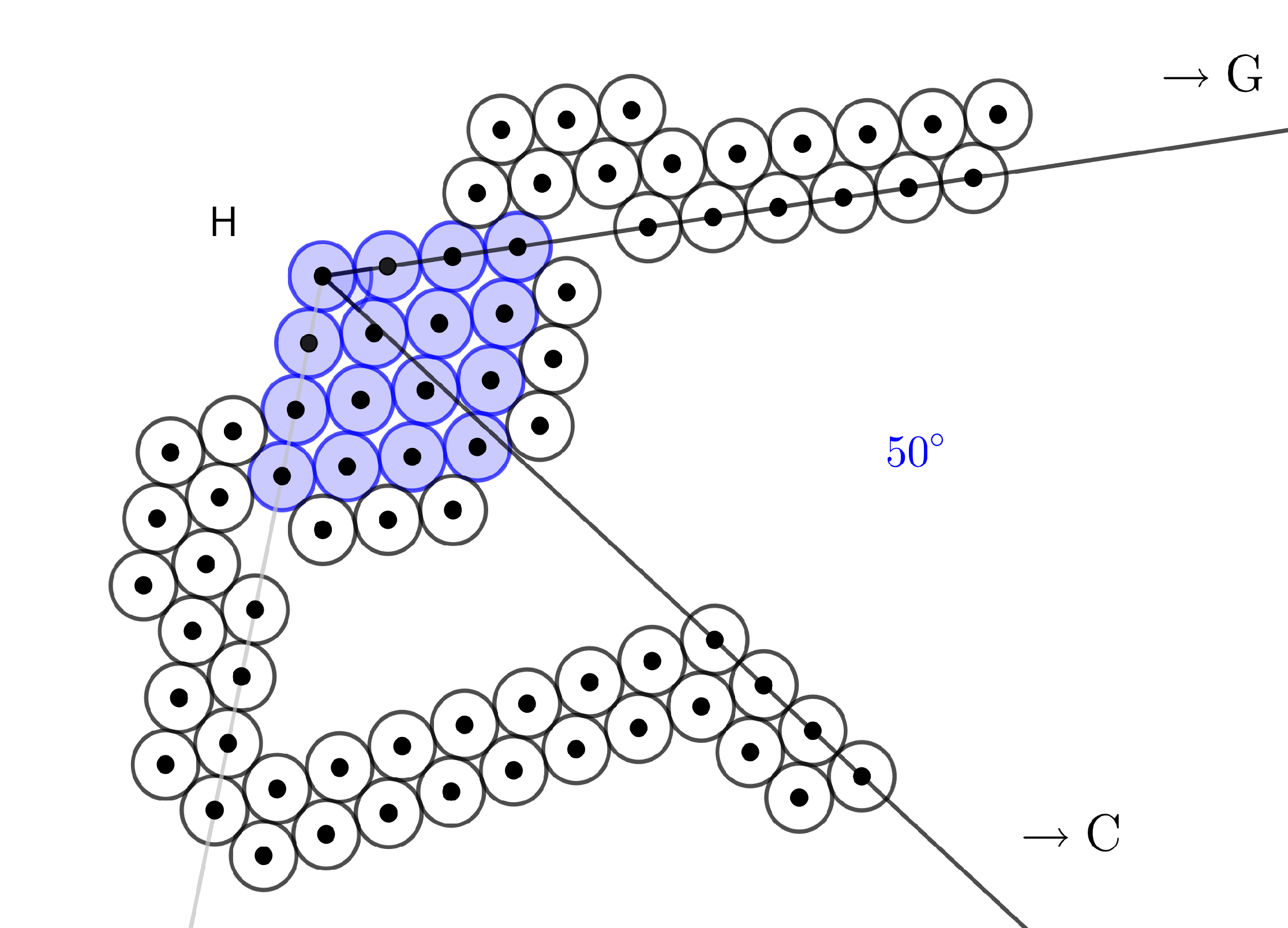}
    \end{tabular}
    \caption{
      The heart of the apex vertex gadget is a $4 \times 4$ grid of pennies (shown in blue) which 
      realizes angle $CHG$ in the range $[20^\circ, 50^\circ]$.  
    }
    \label{fig:panto}
\end{figure}

\begin{figure}[htb]
\centering
    \begin{tabular}{c}
    \includegraphics[width=.63\textwidth]{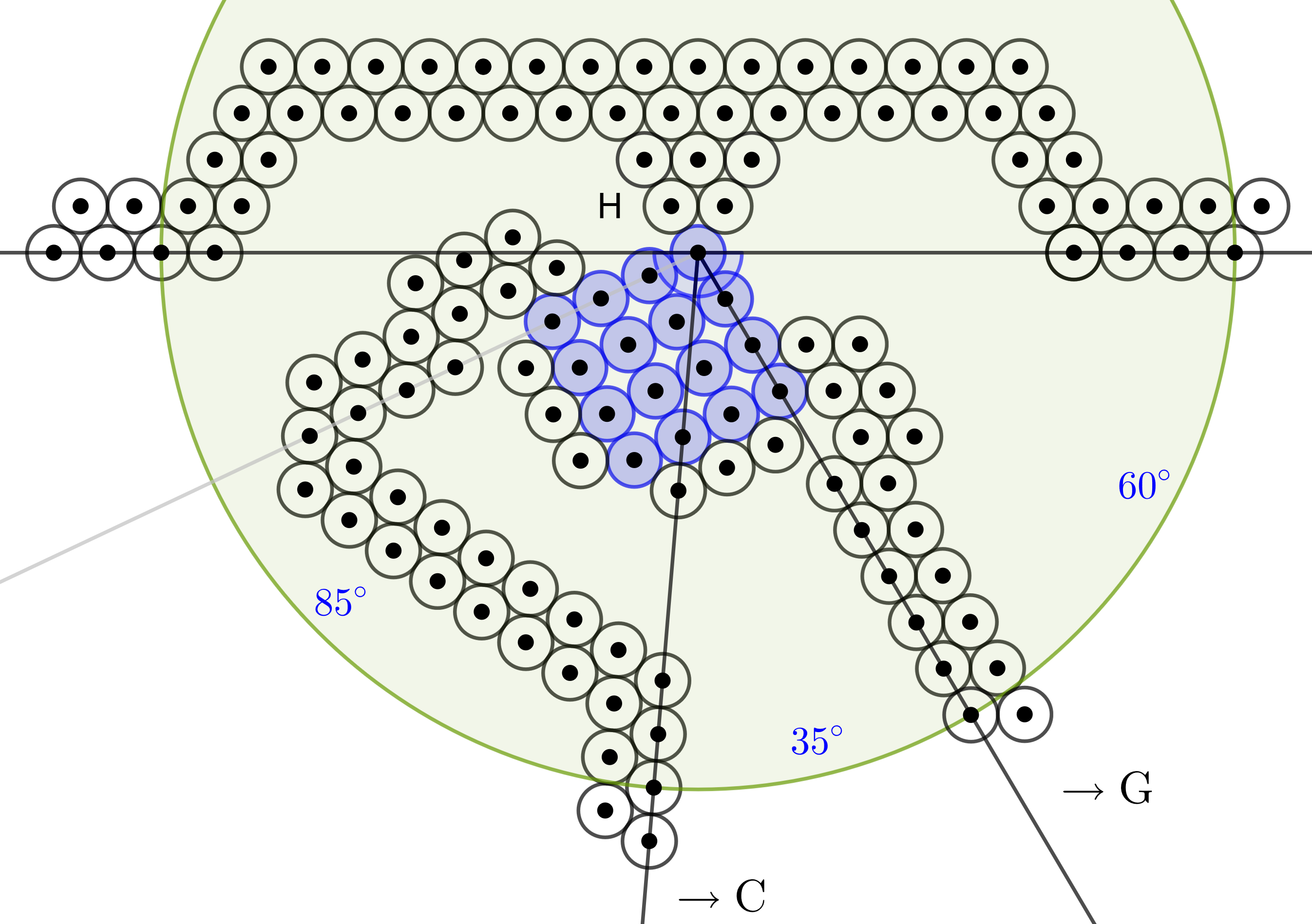}
    \end{tabular}
    \caption{
    Adding the horizontal bar to create the final apex vertex gadget, shown here close to the
 standing position corresponding to $x = 2$, see Fig.~\ref{fig:A-frame-angles}(\emph{left}).
 The green disc indicates the thickness. 
    }
    \label{fig:apexs}
\end{figure}

\begin{figure}[htb]
\centering
    \includegraphics[width=.7\textwidth]{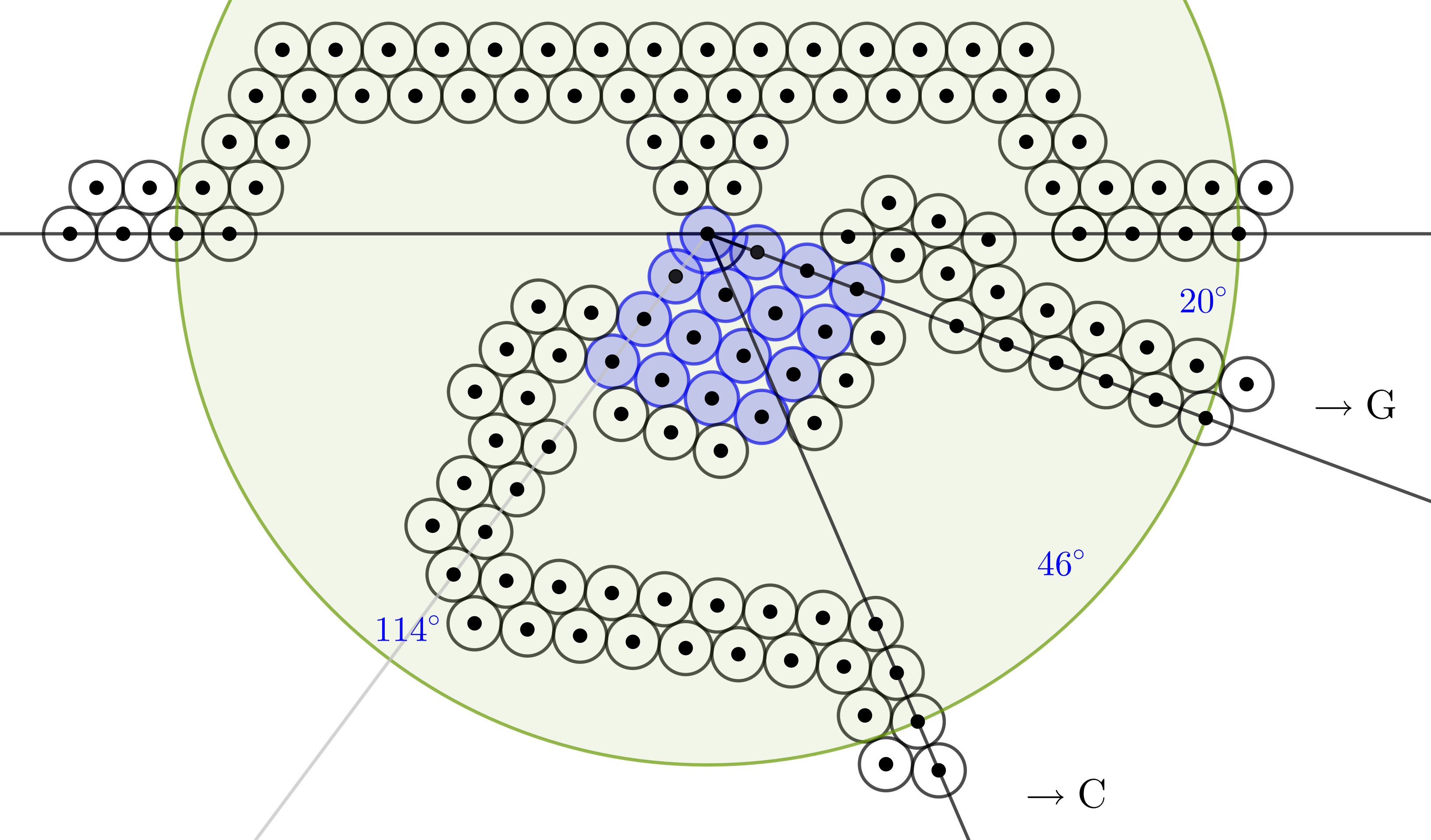}
    \caption{
    The apex vertex gadget close to  the kneeling position corresponding to $x = 0.5$, see Fig.~\ref{fig:A-frame-angles}(\emph{right}).      
    }
    \label{fig:apexk}
\end{figure}

\paragraph{Completing the Construction.}
For our \ER-hardness proof we can assume, by  Theorem~\ref{thm:ETRINVlinkage},  that we are given an angle-constrained linkage $L$ and a combinatorial embedding $\Lambda$. We know that if $L$ can be realized, then we can assume that $L$ realizes $\Lambda$ and that in this realization the minimum distance between any two vertices (the \emph{vertex-vertex resolution}) and any vertex and an edge not incident to the vertex (the \emph{vertex-edge resolution}) is 
at least $\varepsilon > 0$, where $\varepsilon$ is a fixed constant greater than $0$ independent of $L$.

At this point we have shown how to realize each vertex type in $L$, (Lemma~\ref{lem:angletypes} showed that the list is complete).
All our penny graph gadgets have thickness at most $20$ (which is needed to simulate the apex vertex in Hart's A-frame). Let $\alpha = \lceil 20/\varepsilon \rceil$. We build a graph $G$ and an embedding $D'$ from $(L,D)$ as follows: replace each vertex of $L$ with the penny graph gadget we constructed. For every bar $uv \in L$ we do the following: 
suppose $uv$ has length $\ell \in \NN/2$ (all our distances are half-integers; actually, with one exception, they are all integers); connect the appropriate arms of the vertex gadgets of $u$ and $v$ by a braced arm of $\alpha \ell$ pennies (merging the arms into the vertex gadgets at its ends). Let $G$ be the resulting contact graph of the pennies, and $D$ the combinatorial 
embedding of $G$ based on the combinatorial embedding $\Lambda$ of $L$ and the (geometric) embeddings of the vertex gadgets we have seen. It follows that if $L$ is realizable with combinatorial embedding $\Lambda$, then $G$ is realizable as a penny graph with combinatorial embedding $D$. If we work with pennies of diameter $1/\ell$, then a bar $uv$ of distance $\ell$ in $L$ will have length $\alpha \ell/\ell = \alpha$ in the realization of $G$. We picked $\alpha$ sufficiently large so that no vertex gadget can interfere with another vertex gadget, or an edge gadget which it is not incident to. Finally, any realization of $G$ (whether realizing $D$ or not) can be turned into a realization of $L$, simply because the penny graph gadgets enforce the proper distances between vertices. This completes the proof. 
\end{proof}

\clearpage

\section{Marbles and Ball Contact Graphs}
\label{sec:marbles}

Coin and penny graphs generalize to higher dimensions: A graph $G = (V,E)$ is a \defn{ball contact graph} in $\RN^d$, $d\geq 2$ if 
there is a mapping from vertices $v \in V$ to center points 
$c(v) \in \RN^d$ and radii 
$r(v) \in \RN_{>0}$,
such that 
\begin{itemize}
    \item if $uv \in E$, then $d(u,v) = r(u)+r(v)$, and
    \item if $uv \not\in E$, then $d(u,v)>r(u)+r(v)$,
\end{itemize}
for all pairs of points $u,v \in V$, where $d(u,v)$ is the Euclidean distance of $u$ and $v$.
 In plain English, there is a family of closed balls $B^d(c(v),r(v))$, $v \in V$, such that the balls corresponding to the endpoints of an edge touch, and all other pairs of balls do not overlap. If all radii are the same, or, equivalently, $1$, we speak of \defn{unit-ball contact graphs}; unit-ball contact graphs in $\RN^3$ are also known as \defn{marble graphs}. 

Since coin graphs coincide with the planar graphs, they are easy to recognize; this fails in higher dimensions: contact graphs of balls in $\RN^3$ are \NP-hard to recognize, a result attributed to Kirkpatrick and Rote by Hlin{\v e}n\'y and Kratochv\'il~\cite[Corollary 4.6]{HK01}. The result follows from a reduction by Kirkpatrick and Rote~\cite[Proposition 4.5]{HK01}: a graph $G$ is a unit-ball contact graph in $\RN^d$ if and only if $G + K_2$ is a ball contact graph in $\RN^{d+1}$ for every $d \geq 2$. Here $+$ denotes the join of two graphs. Applying this reduction to Theorem~\ref{thm:penniesER} immediately yields the following result.

\begin{corollary}\label{cor:ball3ER}
    Recognizing ball contact graphs in $\RN^3$ is \ER-complete. 
\end{corollary}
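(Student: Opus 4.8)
The plan is to prove the two directions separately: ball contact graph recognition in $\RN^3$ lies in \ER, and it is \ER-hard.

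For membership, I would write down the obvious defining sentence in the language of \ETR. Given an $n$-vertex graph $G=(V,E)$, introduce real variables $x_v,y_v,z_v$ for the coordinates of the center $c(v)$ and $r_v$ for the radius of each $v\in V$, and observe that $G$ is a ball contact graph in $\RN^3$ if and only if
\begin{equation*}
\begin{split}
(\exists\, (x_v,y_v,z_v,r_v)_{v\in V})\ & \bigwedge_{v\in V} r_v>0\ \wedge\ \bigwedge_{uv\in E}\|c(u)-c(v)\|^2=(r_u+r_v)^2\\
&\wedge\ \bigwedge_{uv\notin E}\|c(u)-c(v)\|^2>(r_u+r_v)^2
\end{split}
\end{equation*}
is true over $\RN$, where $\|c(u)-c(v)\|^2=(x_u-x_v)^2+(y_u-y_v)^2+(z_u-z_v)^2$; squaring both sides of each contact and separation condition is harmless since $r_u+r_v>0$. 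This sentence has polynomial length, so the recognition problem reduces in polynomial time to \ETR\ and hence lies in \ER.

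For hardness I would invoke, as a black box, the Kirkpatrick--Rote equivalence recalled above: for every $d\ge 2$, a graph $G$ is a unit-ball contact graph in $\RN^d$ if and only if $G+K_2$ is a ball contact graph in $\RN^{d+1}$. Taking $d=2$, and noting that unit-ball (equivalently, unit-disk) contact graphs in the plane are exactly the penny graphs, and that $G+K_2$---which is obtained from $G$ by adding two universal vertices joined by an edge---can be computed from $G$ in polynomial time, this gives a polynomial-time many-one reduction from penny graph recognition to ball contact graph recognition in $\RN^3$. Since penny graph recognition is \ER-hard by Theorem~\ref{thm:penniesER} (equivalently Corollary~\ref{cor:penniesER}), so is the target problem; combined with membership this yields \ER-completeness.

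I do not expect a genuine obstacle here: both steps are bookkeeping on top of results already in hand. The only points needing a sentence of care are the identification of penny graphs with planar unit-ball contact graphs and the fact that the join with $K_2$ is polynomial-time computable; the nontrivial direction of the Kirkpatrick--Rote equivalence (recovering an equal-radius planar representation from an $\RN^3$ ball representation of $G+K_2$ via a sphere-inversion argument) is used as cited from~\cite{HK01} and not reproved.
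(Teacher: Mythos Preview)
Your proposal is correct and matches the paper's approach exactly: membership is immediate from the defining \ETR-sentence, and hardness follows by applying the Kirkpatrick--Rote reduction (with $d=2$) to Theorem~\ref{thm:penniesER}. You have simply spelled out the details the paper leaves implicit.
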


The reduction by Kirkpatrick and Rote turns unit balls in $\RN^d$  to 
(not necessarily unit)
balls in $\RN^{d+1}$; we are not aware of a general construction that lifts a unit ball result in $\RN^d$ to a unit ball result in $\RN^{d+1}$, but we
present below
a reduction from $\RN^2$ to $\RN^3$ which yields our second main result: the recognition problem of marble graphs is \ER-complete. 

\begin{theorem}\label{thm:marblesER}
 Recognizing marble graphs is \ER-complete.
\end{theorem}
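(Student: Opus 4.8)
The plan is to reduce from the penny graph recognition problem, which is \ER-complete by Theorem~\ref{thm:penniesER}, via a lifting construction from the plane to $\RN^3$. Membership in \ER\ is immediate: given a candidate marble graph $G$ on $n$ vertices, existentially quantify over $3n$ real coordinates $c(v) \in \RN^3$ and write the conjunction of equalities $\|c(u)-c(v)\|^2 = 4$ for $uv \in E$ and strict inequalities $\|c(u)-c(v)\|^2 > 4$ for $uv \notin E$ (modeling unit-diameter balls, i.e. radius~$1$); this is a polynomial-size \ETR\ sentence. So the work is the hardness direction.

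For hardness, I would take a penny graph instance $G$ together with the combinatorial plane embedding $D$ produced by Theorem~\ref{thm:penniesER}, and build a marble graph $G'$ that forces any realization to be (essentially) planar, so that it recovers a penny realization of $G$ realizing $D$. The natural device is to add a ``confining'' substructure of unit balls that pins the balls of $G$ into a thin slab: for instance, add two large parallel ``walls'' of tightly packed unit balls above and below a horizontal plane, each wall being a triangular-grid patch of pennies lifted trivially into a horizontal plane at height $+t$ and $-t$, and force every vertex ball of $G$ to be in contact (or near-contact, enforced by edges to a few anchor balls) so that it is squeezed between the walls. The key geometric fact to exploit is that unit-ball packings in $\RN^3$ that are forced to lie between two planes at distance slightly more than a sphere diameter behave like a planar penny packing; choosing the slab width to be exactly the sphere diameter collapses all centers onto a single plane. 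Then contacts among the $G$-balls in that plane are exactly penny contacts, and non-contacts project to non-contacts, so $G'$ is a marble graph iff $G$ is a penny graph. One must also check that in the ``yes'' case the planar penny realization of $G$ (which exists and realizes $D$) can be embedded in the slab together with the confining structure without spurious contacts — this is where the thickness/resolution bookkeeping from the proof of Lemma~\ref{lem:linkpenny} gets reused: scale the construction so the penny realization has enough clearance from the walls and from the anchors.

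The steps, in order: (1) state the \ETR\ encoding for marble-graph membership and conclude membership in \ER; (2) describe the confining ``sandwich'' gadget of unit balls lying in two horizontal planes and the anchor edges that force each $G$-ball to touch both sheets, hence to lie exactly in the midplane; (3) prove the geometric lemma that a unit ball simultaneously in contact with a ball in the top sheet and a ball in the bottom sheet, with sheets at vertical distance equal to one diameter, has its center in the midplane — this is the ``rigid slab'' argument and uses only that the two contact constraints, combined with the separation of the sheets, leave a single feasible height; (4) argue that once all $G$-balls are coplanar, the induced contact pattern among them is a penny graph realizing $D$ (using that the plane embedding is forced, or can be read off), and conversely that a penny realization of $G$ realizing $D$ extends to a valid marble realization of $G'$ after suitable scaling so the confining sheets do not create extra contacts; (5) verify the whole construction is polynomial-time and assemble the reduction.

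The main obstacle is step~(3) together with the converse in step~(4): making the ``sandwich'' genuinely rigid in the vertical direction while leaving the $G$-balls enough horizontal freedom to realize \emph{any} penny configuration, and simultaneously ensuring no unintended contacts arise between $G$-balls and the confining sheets or between distant confining balls. In $\RN^3$ there is more room to ``escape'' than in $\RN^2$, so the sheets must be large enough (polynomially many balls, with size depending on the diameter of the penny realization, which Theorem~\ref{thm:penniesER} and the tree-coordinate discussion suggest can be taken singly-exponential but whose \emph{combinatorial} size is polynomial) and the anchoring must be robust: I expect to need each $G$-ball anchored to a small cluster of sheet balls (not just one) to kill the rotational degrees of freedom that would otherwise let it drift off the midplane, and to argue a local-rigidity statement for that cluster. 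Handling this cleanly — rather than by an ad hoc ``clearly it is forced'' — is the crux of the proof.
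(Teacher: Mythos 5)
Your proposal takes a genuinely different route from the paper, and the ``crux'' you flag at the end is in fact a fatal, unresolved gap rather than a detail to be tidied up.

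The sandwich/slab idea has two mutually conflicting requirements. To pin a $G$-ball's $z$-coordinate you must give it contacts with specific balls in the top and bottom sheets; but specifying contacts with a rigid sheet of balls fixes the $G$-ball's \emph{horizontal} position relative to that sheet as well (two independent contacts per sheet, or a contact in each sheet, cut the degrees of freedom down to a discrete set or a one-parameter curve, and that curve is not ``anywhere in the midplane''). Conversely, if you omit those contacts and only use the sheets as obstacles (non-overlap), nothing forces the $G$-ball to the midplane --- it can float anywhere between the sheets. So you cannot simultaneously (a) force coplanarity of the $G$-balls and (b) leave them free to take arbitrary penny positions in that plane. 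There is no local-rigidity lemma in step~(3) that would give you a vertical constraint without a horizontal one: in $\RN^3$ a ball--ball contact is a single equation, and you cannot choose which coordinate it constrains. The ``yes'' direction of the reduction therefore fails: a penny realization of $G$ would have to match the lattice positions dictated by your anchor contacts, which an arbitrary penny realization will not.

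The paper sidesteps this entirely. It never tries to constrain any ball to a plane. Instead each planar vertex $v$ is replaced by a tall rigid \emph{vertex gadget} (a stack of rigid ``spinner'' subgraphs held vertical by a rigid ``mast''), so that the whole gadget is free to translate horizontally. Each edge $uv$ of $G$ becomes a rigid \emph{$k$-arm} (a chain of rigid ``UFO'' gadgets) attached to one of the rotating layers of the two columns, enforcing that the two columns' axes are at horizontal distance exactly $30$; a $7$-coloring of $E(G)$ (valid since $\Delta(G)\le 6$ by Vizing) assigns arms to distinct heights so they do not interfere. Projecting the column axes to the $xy$-plane then recovers a weak unit-distance realization of $G$ (scaled by $30$); and because Theorem~\ref{thm:penniesER} guarantees that a false $\varphi$ yields a graph that is not even a weak unit-distance graph, the ``no'' direction needs only the edge distances, not the non-edge separations. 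The ``yes'' direction erects the gadgets directly over a given penny realization. The crucial structural difference from your plan is that the paper constrains only \emph{pairwise distances} between columns (which is exactly what a linkage does), never absolute positions against a fixed scaffold, so the horizontal degrees of freedom you need are preserved by construction.

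Your membership argument and the high-level choice to reduce from penny graphs are both correct and match the paper; the claim ``so $G'$ is a marble graph iff $G$ is a penny graph'' is where the construction, as described, does not go through.
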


As an immediate consequence of Theorem~\ref{thm:marblesER} and the Kirkpatrick-Rote reduction we obtain the following result. Hlin{\v e}n\'y and Kratochv\'il~\cite{HK01} had shown \NP-hardness.

\begin{corollary}\label{cor:ball4ER}
    Recognizing ball contact graphs in $\RN^4$ is \ER-complete. 
\end{corollary}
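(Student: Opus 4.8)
The plan is to derive Corollary~\ref{cor:ball4ER} directly from Theorem~\ref{thm:marblesER} together with the Kirkpatrick--Rote reduction recalled just before Corollary~\ref{cor:ball3ER}: a graph $G$ is a unit-ball contact graph in $\RN^d$ if and only if $G + K_2$ is a ball contact graph in $\RN^{d+1}$, for every $d \geq 2$, where $+$ is the graph join. Specializing to $d = 3$, this says that $G$ is a marble graph if and only if $G + K_2$ is a ball contact graph in $\RN^4$. So the whole argument is just: hardness by this equivalence plus Theorem~\ref{thm:marblesER}, and membership by the usual translation into \ETR.

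For \ER-hardness, I would exhibit the map $G \mapsto G + K_2$ as a polynomial-time many-one reduction from marble-graph recognition to the recognition of ball contact graphs in $\RN^4$. The map is clearly polynomial-time computable (it adds two new vertices, adjacent to each other and to all vertices of $G$, and keeps all edges of $G$), and its correctness is precisely the $d = 3$ instance of the Kirkpatrick--Rote equivalence. Since recognizing marble graphs is \ER-hard by Theorem~\ref{thm:marblesER}, recognizing ball contact graphs in $\RN^4$ is \ER-hard as well.

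For membership in \ER, I would write the standard \ETR sentence describing a ball contact representation in $\RN^4$: for each vertex $v$ introduce five real variables encoding a center $c(v) \in \RN^4$ and a radius $r(v)$, and assert $r(v) > 0$ for every $v$, $\lVert c(u)-c(v)\rVert^2 = (r(u)+r(v))^2$ for every edge $uv$, and $\lVert c(u)-c(v)\rVert^2 > (r(u)+r(v))^2$ for every non-edge $uv$. All of these are quadratic (in)equalities over the reals, the sentence has size polynomial in the size of $G$, and it is satisfiable over $\RN$ exactly when $G$ is a ball contact graph in $\RN^4$; hence the problem lies in \ER. Together with the hardness above, this establishes \ER-completeness.

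There is essentially no obstacle here: the result is a genuine corollary. The only points requiring a moment's care are that the Kirkpatrick--Rote equivalence is stated uniformly for all $d \geq 2$, so the case $d = 3$ (lifting $\RN^3$ to $\RN^4$) is covered, and that $G + K_2$ has size polynomial in $|G|$, so the reduction is efficient.
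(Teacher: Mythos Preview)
Your proposal is correct and matches the paper's own justification exactly: the corollary is stated as an immediate consequence of Theorem~\ref{thm:marblesER} together with the Kirkpatrick--Rote reduction $G \mapsto G + K_2$ (specialized to $d=3$), with membership in \ER\ being routine. There is nothing to add.
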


\begin{remark}
Hlin\v en\'y~\cite{H97} and Hlin\v en\'y and Kratochv\'il~\cite{HK01} proved \NP-hardness of recognizing unit-ball contact graphs in dimensions $3$, $4$, $8$ and sketched an approach for $24$ (which implies \NP-hardness of recognizing ball contact graphs in dimensions $4$, $5$, $9$ and $25$ by the Kirkpatrick-Rote reduction). It is possible that their ideas can be combined with our reduction. The difficulty appears to lie in constructing rigid gadgets in higher dimensions. 
\end{remark}

We base our proof of Theorem~\ref{thm:marblesER} on Theorem~\ref{thm:penniesER}, but we should point out that the weaker result that unit-distance recognition is \ER-complete~\cite{S13} would be sufficient as a starting point.

\begin{proof}[Proof of Theorem~\ref{thm:marblesER}]
    Membership is immediate from the definition of marble graphs. For \ER-hardness, we reduce from the penny graph recognition problem. For a given graph $G$ we build a graph $H$ such that if $G$ is a penny graph then $H$ is a marble graph, and if $H$ is not a marble graph, then $G$ is not a weak unit distance graph. This is sufficient by Theorem~\ref{thm:penniesER}.

For intuition about our reduction, imagine a penny graph realization.  Each vertex in the plane will be replaced by a vertical column in the $z$-direction to which we attach arms for each incident edge; these arms are assigned to different $x$-$y$ layers so they do not interfere with each other.
The idea of stacking bars was motivated by actual physical linkages.\footnote{See \url{https://americanhistory.si.edu/collections/search?edan_q=inversor} for examples.}
    Each vertex will be replaced by a column to which we attach arms for each incident edge; these arms are assigned to different layers so they do not interfere with each other. To realize this plan we will need various rigid marble graphs in three dimensions.

    We will visualize the realization of a marble graph by displaying each vertex at the center of the marble representing it, e.g.\ $K_4$ would be visualized as a tetrahedron. 
    The core structure we work with consists of a flat hexagon on which we erect three tetrahedra whose apexes form a triangle; we call this a \defn{UFO}, see Figure~\ref{fig:smallUFO}.
    \begin{figure}[htbp]
     \centering
     \includegraphics[height=2in]{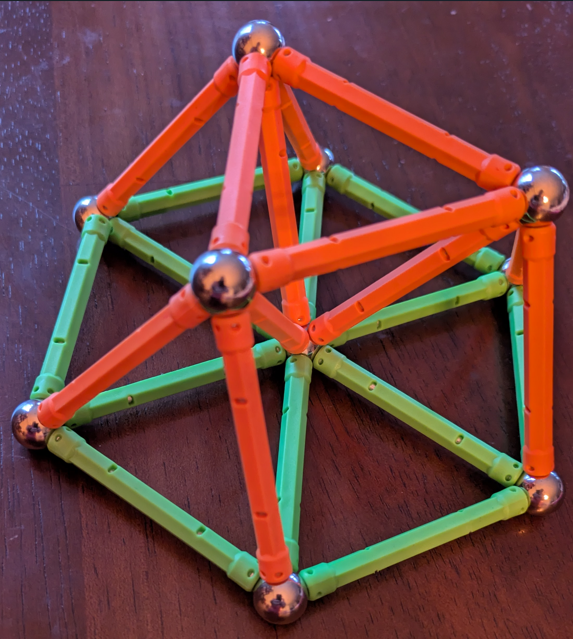}
     \caption{The UFO, a rigid marble graph.}
        \label{fig:smallUFO}
    \end{figure}
    
    More generally, for any integer~$k$ we can arrange $k+1$ UFOs along a line, such that for each consecutive pair their base hexagons share two triangles and 
    the two UFOs overlap in one tetrahedron.
    The centers of the hexagons of the first and last UFO have distance $k$, see Figure~\ref{fig:karm} (note that we have to add an edge between two overlapping UFOs in the top triangle layer). We call this a \defn{$k$-arm}.
     \begin{figure}[htbp]
     \centering
     \includegraphics[height=2in]{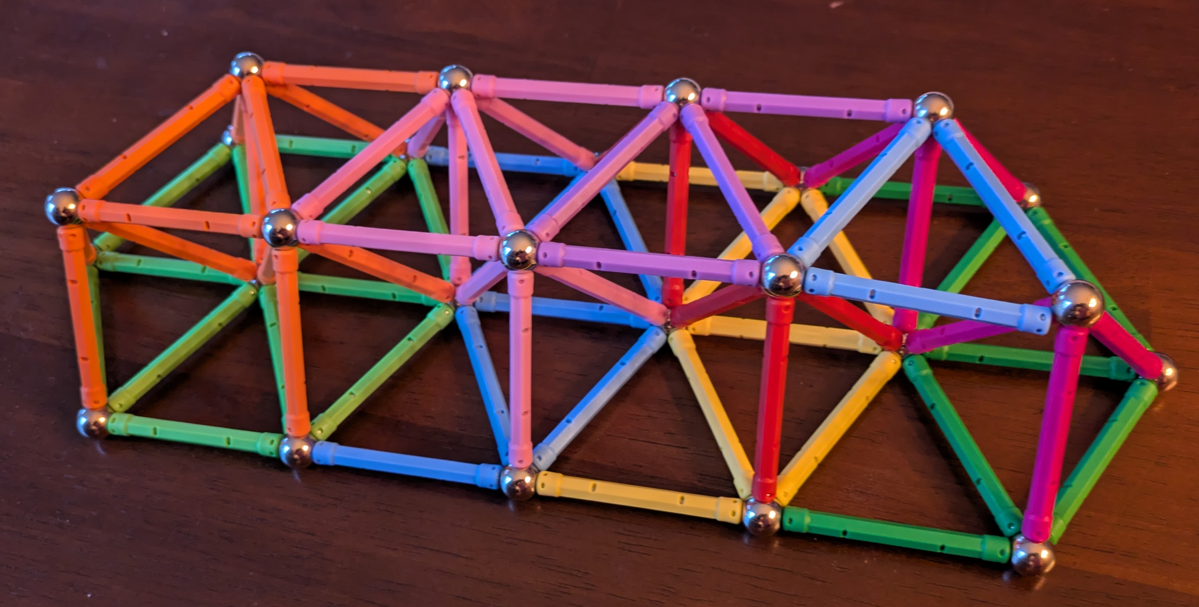}
     \caption{A $3$-arm, another rigid marble graph.}
        \label{fig:karm}
    \end{figure}   
    The arm 
    allows us to enforce distances between points.

    Consider the \defn{spinner} in Figure~\ref{fig:spinner}.
    Like the UFO, the spinner starts with a flat hexagon (in the middle) on which we erect three tetrahedra on each side (on the same three base triangles, though that does not matter), and a final pair of tetrahedra at the top and at the bottom. The top and bottom vertices of the spinner (the apexes of the final two tetrahedra) have distance $4\sqrt{6}/3$ from each other.

\begin{figure}[htbp]
    \centering
    \includegraphics[height=2in]{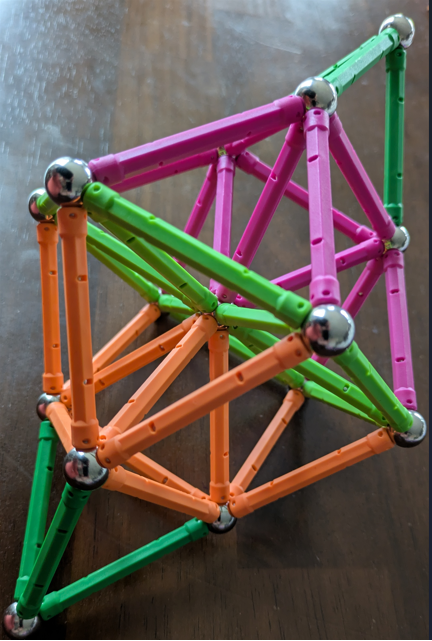}
    \caption{A spinner: yet another rigid marble graph.}
    \label{fig:spinner}
\end{figure}

    It is easy to verify that all the gadgets we have seen so far: $k$-arms, UFOs and spinners are marble graphs, indeed the visualizations we included show how to realize each of the gadgets. 
    We prove in Appendix~\ref{app:riggad} that the gadgets are rigid:

    \begin{claim}\label{claim:rigidgadgets}
     The UFO, the $k$-arm (for every $k \geq 1$) and the spinners are rigid in $\RN^3$.
    \end{claim}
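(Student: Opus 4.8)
The overall plan is to reduce everything to a single fact --- that the UFO is rigid --- by repeatedly invoking the elementary \emph{gluing principle}: if two frameworks in $\RN^3$ are each rigid and their vertex sets share at least three points that are not collinear, then their union is rigid. (Any flex of the union restricts to a flex of each part, hence to a rigid motion of each part; two rigid motions of $\RN^3$ that agree on three non-collinear points are equal, so the flex of the union is a rigid motion.) The base case is that a non-degenerate tetrahedron --- a $K_4$ realized with four non-coplanar vertices --- is rigid, since then all six pairwise distances are prescribed.

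Granting that the UFO is rigid, the $k$-arm and the spinner follow quickly. A $k$-arm is a chain of $k+1$ UFOs in which consecutive members overlap in a full tetrahedron, i.e.\ in four non-coplanar vertices (Figure~\ref{fig:karm}); so, by induction on $j$, the union of the first $j$ UFOs is rigid, and it shares with the $(j{+}1)$-st the four vertices of the common tetrahedron, whence the union of $j{+}1$ UFOs is rigid. For the spinner (Figure~\ref{fig:spinner}): the middle hexagon together with the three upward tetrahedra and the triangle on their apexes is exactly a UFO, hence rigid; adjoining the top tetrahedron --- which shares the three non-collinear apex vertices of that triangle --- preserves rigidity, giving a rigid ``upper half'' (middle hexagon, upward tetrahedra, top vertex). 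Symmetrically the ``lower half'' is rigid, and the two halves share all seven coplanar-but-not-collinear vertices of the middle hexagon, so the spinner is rigid; the value $4\sqrt6/3$ for the distance between the top and bottom vertices is then simply read off the realization in Figure~\ref{fig:spinner}.

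The real work is showing the UFO itself is rigid (Figure~\ref{fig:smallUFO}), and here the gluing principle does not suffice: the obvious rigid pieces --- the three erected tetrahedra, the three ``gap'' triangles of the base flower, and the tetrahedron on the hexagon centre $O$ and the three apexes --- pairwise share at most an edge, and gluing along a single edge leaves a hinge. I would instead argue at the specific realization. Fix the $K_4$ on $O$ and the apexes $A_1,A_2,A_3$ (rigid, so these four points may be taken as fixed). Labelling the boundary vertices $P_1,\dots,P_6$ cyclically so that $P_{2i-1},P_{2i}$ sit on the base triangle of $A_i$, the vertex $P_1$ is then pinned only to $O$ and $A_1$ and so has one degree of freedom, and walking around the flower each later $P_j$ is pinned to $O$, to the apex over its triangle, and to $P_{j-1}$, hence is determined as a smooth function of $P_{j-1}$ near the symmetric configuration (the three pinning centres and $P_j$ are not coplanar there). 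Thus the configurations fixing the core form a one-parameter family, and the single unused edge $P_6P_1$ leaves only the symmetric configuration provided $\frac{d}{dt}|P_6(t)-P_1(t)|^2 \neq 0$ along the family --- a short computation with the explicit coordinates. Equivalently, and more robustly against the symmetry of the realization, one checks that the $24\times 30$ rigidity matrix of the UFO has rank $24 = 3\cdot 10 - 6$; the threefold rotational symmetry block-diagonalises this into a couple of small rank checks, or one can pass to the plane via Whiteley's coning theorem (the UFO is the cone from $O$ over the $9$-vertex framework on $\{P_i\}\cup\{A_j\}$, so its infinitesimal rigidity in $\RN^3$ is equivalent to infinitesimal rigidity of that framework in $\RN^2$, where the Laman count is easy to verify). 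This computation --- the only place where the specific edge lengths, rather than the combinatorics, are used --- is the step I expect to require the most care.
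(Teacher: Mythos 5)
Your handling of the $k$-arm and the spinner, granting the UFO, follows the same route as the paper: a gluing lemma for rigid frameworks that overlap in enough vertices, applied repeatedly. The paper states the lemma in the stronger form ``two rigid graphs sharing a tetrahedron have a rigid union'' (four non-coplanar shared vertices), which is why its spinner decomposition inserts three auxiliary triangular bipyramids to manufacture shared tetrahedra between the UFOs and the capping tetrahedra; your ``three non-collinear shared points'' variant lets you glue the caps directly, and your upper-half/lower-half split of the spinner is a bit cleaner. One caution: the justification ``two rigid motions of $\RN^3$ that agree on three non-collinear points are equal'' is false as literally stated (a reflection through the plane of those three points also fixes them); the argument is saved by observing that the isometry matching a nearby configuration back to the original is close to the identity, hence orientation-preserving, and \emph{then} it is uniquely determined by three non-collinear points.

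The genuine gap is the UFO itself, which is the crux: it sits exactly at the count $3\cdot 10-6=24$ edges, so its rigidity is a fact about the specific realization and not something combinatorics gives for free. The paper gives a direct geometric argument: project vertically (normal to the apex triangle); each of the six projected angles at the hub $O$ is at most $60^\circ$, yet they sum to $360^\circ$, so all equal $60^\circ$, forcing the hexagon flat and the whole framework into the reference configuration. You take a genuinely different, more computational route --- fix the rigid $K_4$ on $\{O,A_1,A_2,A_3\}$, propagate $P_1\to P_2\to\cdots\to P_6$ as a one-parameter family, and reduce to nondegeneracy of the closure edge $P_6P_1$ --- which is a sound strategy, but you do not carry out the decisive step. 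You explicitly leave the nonvanishing of $\tfrac{d}{dt}\lvert P_6(t)-P_1(t)\rvert^2$ (equivalently, the rank-$24$ check on the $24\times 30$ rigidity matrix) as ``the step I expect to require the most care,'' so UFO rigidity --- and hence everything built on it --- is unproved in your write-up. Your third suggestion, Whiteley coning, also does not apply as stated: the $9$-vertex subframework on $\{P_i\}\cup\{A_j\}$ is not coplanar, so one cannot simply ``verify Laman in $\RN^2$''; one would need the projective form of coning via the radial projection from $O$ onto a plane, which is a further nontrivial step.
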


    We now take nine copies of the spinner, and stack them on top of each other (say along the $z$-axis) so that the top vertex of one spinner is identified with the bottom vertex of the next spinner.
    To keep the spinners vertical
    this arrangement we create a rigid column
    called a \defn{mast}, that alternates hexagons and triangles along an axis parallel to the $z$-axis through the spinners;
    we can ensure that a hexagon in the stack of spinners lies in the same plane and with the same orientation as a hexagon in the 
    mast.
    We then connect the top and the bottom spinners to the 
    mast
    using $11$-arms, see Figure~\ref{fig:vgag}. This will be our \defn{vertex gadget}. Note that each vertex gadget contains seven flexible spinners that can rotate.
\begin{figure}[htbp]
    \centering
    \includegraphics[height=2in]{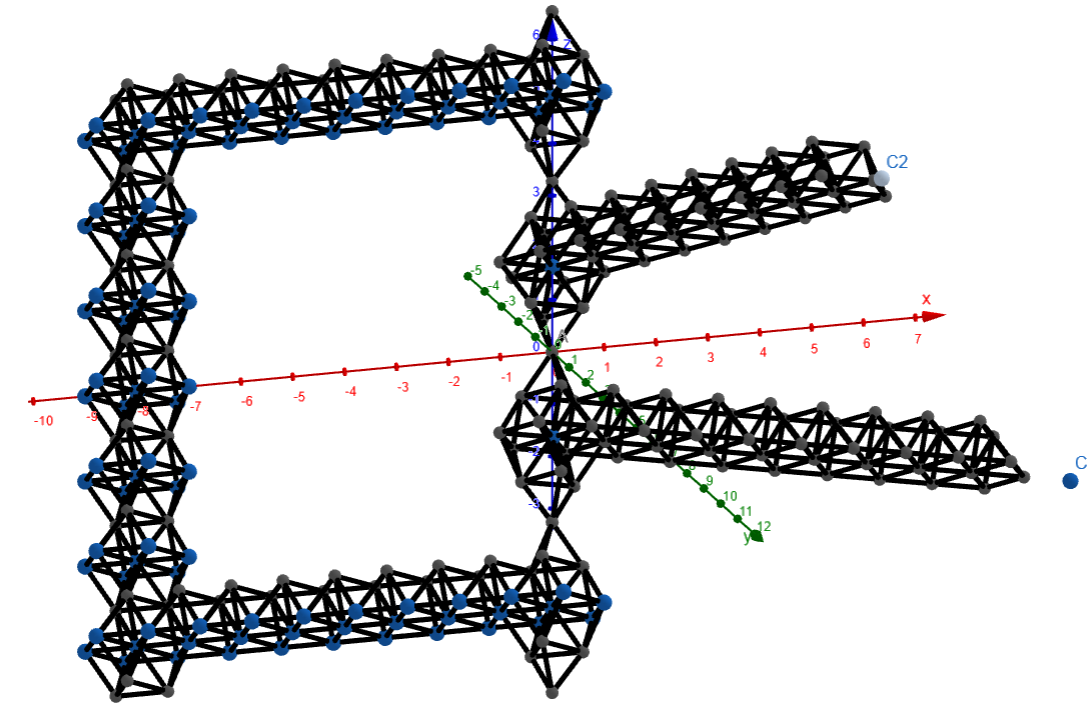}
    \caption{The vertex gadget illustrated for four spinners (instead of $9$), the outer two fixed, the inner two flexible in the sense that they can rotate around the vertex gadget's main axis.}
    \label{fig:vgag}
\end{figure}    
    
    We can assume that $G$ has max-degree at most $6$, otherwise it is not a penny graph, and we can let $H$ be a $K_5$ 
    which is not a weak unit distance graph in $\RN^3$.
    Vizing's theorem then implies that the edges of $G$ can be $7$-colored and such a coloring can be found in polynomial time; fix such a coloring.

    We are ready to construct $H$: create a vertex gadget $L(v)$ for each vertex $v \in V(G)$. For every edge $uv \in E(G)$ with
    $c = c(uv)$, connect the top UFOs in the $c$-th flexible spinner of $L(u)$ and the $c$-th flexible spinner in $L(v)$ using a $30$-arm gadget.
    This completes the construction of $H$.

    If $H$ is realizable, the $(x,y)$ coordinates of the $z$-axis passing through $L(v)$ give a location for each $v \in V(G)$ in the plane such that $d(u,v) = 30$ if $uv \in E(G)$ and $d(u,v)>2$ otherwise (assuming $u \neq v$); in particular the locations of two different vertices are distinct;  this is sufficient by Theorem~\ref{thm:penniesER}.

    \begin{figure}[htbp]
    \centering
    \includegraphics[height=2in]{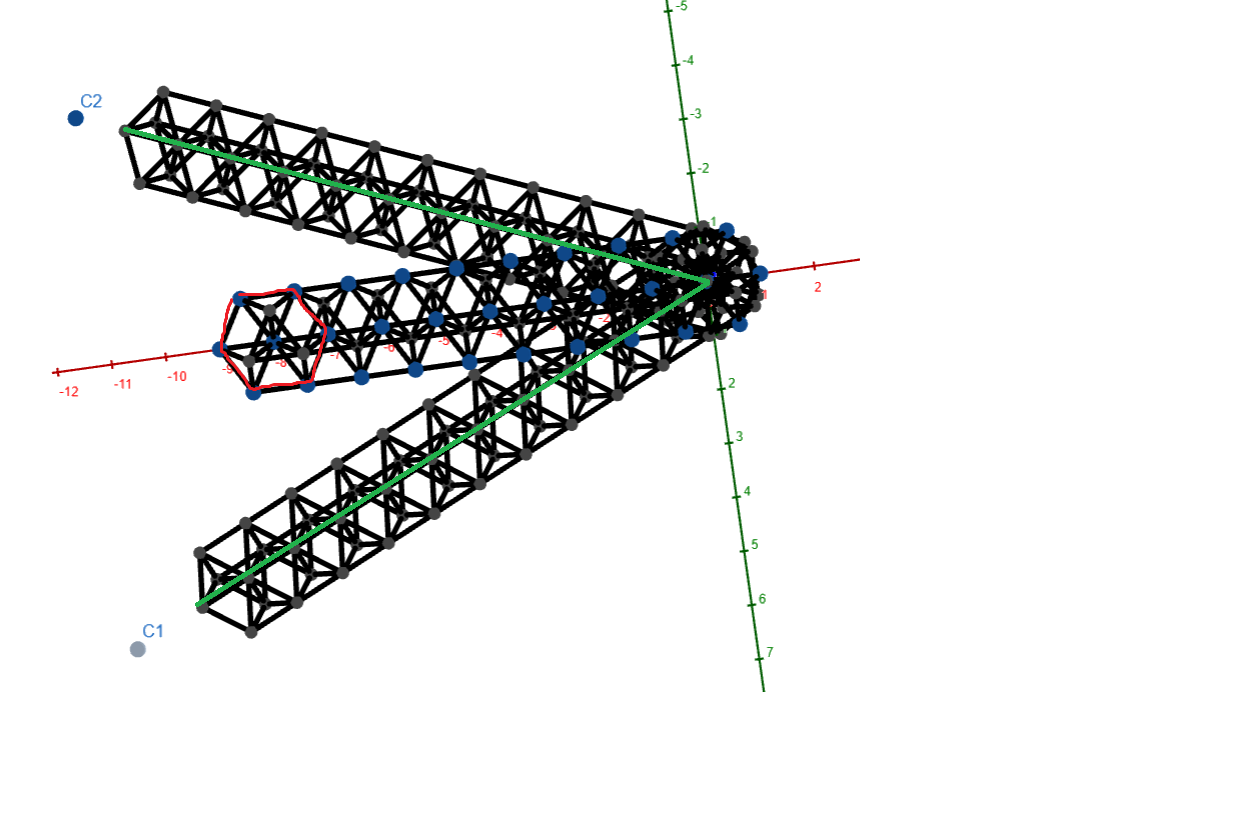}
    \caption{A vertex gadget as seen from above; the two arms form an angle less than $60$ degrees and still have distance greater than $1$ from the vertices of the 
    mast
    in red).}
    \label{fig:vgagtop}
\end{figure} 
    
    If $G$ is a penny graph, we can fix a realization of $G$ in the $xy$-plane, and erect the three-dimensional gadgets of top of this realization as shown in the figures. Then $H$ is a marble graph, using the realizations of the gadgets shown in the figures. There is only one constraint we need to be careful with: 
  the mast of a vertex gadget must be positioned where it does not intersect any of the arms.  In a realization of a penny graph, two edges incident on the same vertex form an angle greater than $60$ degrees.  As shown in the top view of the vertex gadget in Figure~\ref{fig:vgagtop}, because the mast is joined to the central axis of the gadget by an 11-arm, there is room to  place it in the $\ge 60^\circ$ gap between any two consecutive arms.
 Finally, because edges of the penny graph become $30$-arms, no two masts interfere with each other.  
\end{proof}

\section{Rigidity}\label{sec:R}

Two realizations (of a linkage or penny graph) are \defn{congruent} if  the pairwise point distances in each realization are the same,
or, equivalently, if there is an \defn{isometry} (a distance-preserving map) of the plane that maps one realization to the other. A realization is \defn{rigid} if there is an $\varepsilon > 0$ such that any configuration in which each vertex is perturbed by a distance of at most $\varepsilon$ is congruent to the original configuration. In the \defn{rigidity} problem we are given a linkage or penny graph together with a realization of the linkage (or penny graph) and ask whether the realization is rigid. Since realizations of penny graphs require non-rational coordinates we work 
over the set of constructible numbers; the \defn{constructible numbers} are those numbers that can be obtained from $0$ and $1$ using addition, multiplication, inverses, and square roots; constructible numbers can be expressed by a radical expression. Geometrically, a real number is constructible if a line segment with its length can be constructed using compass and straightedge. 

Abbott~\cite{A08} showed that linkage rigidity is \coNP-hard by a reduction from the isolated zero problem \ISO, which we will introduce below; Schaefer~\cite{S13} showed that \ISO\ is complete for \VR, implying that linkage rigidity is \VR-complete. Abel, Demaine, Demaine, Eisenstat, Lynch, and Schardl~\cite{ADDELS25} strengthened this result to show that linkage rigidity remains \VR-complete for non-crossing unit linkages. In this section we take this result one step further by showing that rigidity of penny graphs is \VR-complete.

\begin{theorem}\label{thm:pennyrigVR}
 Testing whether a given realization of a penny graph is rigid is \VR-complete.
\end{theorem}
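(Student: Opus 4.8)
The plan is to obtain membership in \VR\ essentially as for linkage rigidity~\cite{S13,A08}, and to prove \VR-hardness by a reduction from the isolated zero problem \ISO. Recall~\cite{S13} that \ISO\ is \VR-complete: given integer polynomials $f_1,\dots,f_s$ in $y_1,\dots,y_m$, it asks whether the origin is an isolated point of $\bigcap_j Z(f_j)$.

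\textbf{Membership.} A small perturbation of a penny realization preserves the (strict) non-edge distance inequalities, so in a neighbourhood of a given realization $r$ the notions ``penny realization'' and ``unit-distance framework'' coincide. After pinning one vertex and the direction to a neighbour to factor out isometries, $r$ is rigid iff $r$ is an isolated point of the real algebraic set $V$ cut out by the quadratic equations $\|q_i-q_j\|^2=1$ for $ij\in E$. An effective separation bound for real algebraic sets yields a rational $\varepsilon_0>0$, computable from the size of $G$ and from the encoding of the constructible point $r$, such that whenever $r$ is isolated in $V$ it is the only point of $V$ within distance $\varepsilon_0$. Hence ``$r$ is rigid'' is equivalent to the universally quantified sentence $\forall q\,\big(q\in V\wedge\|q-r\|^2\le\varepsilon_0^2\rightarrow q=r\big)$, conjoined with the quantifier-free check that $r$ is a valid penny realization, so the problem lies in \VR.

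\textbf{The reduction.} If some $f_j(0)\neq 0$, the \ISO\ answer is ``no'' and we output a fixed non-rigid penny realization (three collinear pennies, i.e.\ a path $P_3$); otherwise all $f_j(0)=0$. Set $g_j(z):=f_j(z_1-1,\dots,z_m-1)$; since isolation is local and the shift is a homeomorphism, the origin is isolated in $\bigcap_j Z(f_j)$ iff $\bar z:=(1,\dots,1)$ is isolated in $\bigcap_j Z(g_j)$. Encode $\bigwedge_j g_j(z)=0$ as an \ETRINV-style formula $\varphi$ over variables near $[\frac{1}{2},2]$ by evaluating the $g_j$ with a straight-line program of additions and inversions, rescaled so that all intermediate values stay in the working range, as in~\cite{AM19}; the point is to make this encoding \emph{deterministic}, so that for $z$ near $\bar z$ the auxiliary variables $w$ with $\varphi(z,w)$ true are uniquely determined by $z$ and depend continuously on it. Then, near the distinguished solution $(\bar z,\bar w)$, the solution set of $\varphi$ is the graph of a continuous map over $\{z\text{ near }\bar z:\bigwedge_j g_j(z)=0\}$, so $(\bar z,\bar w)$ is isolated in the solution set of $\varphi$ iff $\bar z$ is isolated in $\bigcap_j Z(g_j)$. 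Now run the construction of Sections~\ref{sec:linkages}--\ref{sec:penny-graphs} on $\varphi$ but \emph{omit the channel-restricting gadgets}: those served only to confine variables to $[\frac{1}{2},2]$ and to rule out crossing configurations globally, neither of which is needed here, since isolation is local and near $(\bar z,\bar w)$ every flex gadget sits in a non-degenerate, non-crossing configuration (Claims~\ref{claim:A-frame},~\ref{claim:flex}). Let $G$ be the resulting penny graph and $r^\ast$ the penny realization corresponding to $(\bar z,\bar w)$; its coordinates are constructible (the gadgets use only $\sqrt 2$ and $\sqrt 3$) and polynomial-time computable, so $(G,r^\ast)$ is a legitimate instance of the penny-graph rigidity problem.

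\textbf{Correctness and the main obstacle.} Correctness reduces to the claim that, apart from the channel widths encoding the \ETRINV\ variables, the construction has \emph{no spurious degrees of freedom}: every bar gadget is a braced line of pennies and hence rigid; every fixed-angle vertex gadget is rigid; and every flexible-angle vertex gadget (notably the apex gadget) and every linkage gadget (flex, turn, cross-over, addition, inversion) has its configuration, in a neighbourhood of $r^\ast$, uniquely and continuously determined by the incoming channel widths --- this is what the proofs of Claims~\ref{claim:A-frame}--\ref{claim:inversion} together with the vertex-gadget analysis of Lemma~\ref{lem:linkpenny} establish, with discrete alternatives such as the leaning of an A-frame being locally constant and hence harmless. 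Consequently, in a neighbourhood of $r^\ast$ the realization space of $G$ modulo congruence is homeomorphic to the solution set of $\varphi$ near $(\bar z,\bar w)$; since the strict non-edge inequalities hold at $r^\ast$ and persist nearby, $r^\ast$ is rigid iff $(\bar z,\bar w)$ is isolated in the solution set of $\varphi$, iff the \ISO\ instance is a ``yes''. The main obstacle is exactly this ``no spurious freedom'' bookkeeping: one must revisit each gadget --- in particular every unconstrained elbow vertex, such as the one that let the discarded channel-restricting gadget flex --- and confirm it contributes no motion, and one must verify that the \ETRINV-encoding of $\bigwedge_j g_j=0$ is genuinely deterministic, so that isolation is transported faithfully through the reduction. (One could instead try to reduce from rigidity of non-crossing unit linkages~\cite{ADDELS25}, but that would first require extending Lemma~\ref{lem:linkpenny} to handle arbitrary vertex-angle ranges.)
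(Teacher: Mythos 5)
Your overall strategy -- reduce from \ISO, realize the target point as a penny configuration $r^*$, and argue that $r^*$ is rigid iff the target is isolated -- matches the paper's. The paper however compresses the hard part of this plan into a single off-the-shelf citation that you try to rebuild by hand, and that rebuild is where the gap lies.

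The paper first proves Lemma~\ref{lem:ISOcompact} so it can assume $V(f)$ is compact, and then applies \cite[Theorem~1]{AM19}, which is precisely an \emph{isolation-preserving} reduction: it produces an \ETRINV\ formula $\varphi$ over $[\frac12,2]^{n'}$ together with a distinguished rational point $p$ such that $0$ is isolated in $V(f)$ iff $p$ is isolated in $V(\varphi)$. After that, the only thing left to check is that the penny realization of $p$ has constructible coordinates, which follows from the circle-intersection structure of the gadgets. You instead shift by $1$ and propose a ``deterministic straight-line program'' encoding ``as in~\cite{AM19}'' so that the auxiliary variables are uniquely and continuously determined near $(\bar z,\bar w)$. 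This is exactly the content of \cite[Theorem~1]{AM19}, and you explicitly flag its verification as ``the main obstacle'' rather than resolving it. Without either citing the AM19 theorem (and then needing the compactness normalization you dropped) or actually constructing and proving a deterministic, isolation-preserving encoding, the reduction does not go through. That is the genuine gap.

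Two smaller points. (i) Your membership argument and the ``no spurious degrees of freedom'' bookkeeping are reasonable and are in the spirit of~\cite{S13}; the paper elides both, so this part is a welcome elaboration rather than a deviation. (ii) Omitting the channel-restricting gadgets is an unnecessary complication: those gadgets only add two-bar chains whose interior joints are discretely determined given their endpoints, so they contribute no continuous motion; keeping them (as the paper does) lets you reuse Theorem~\ref{thm:penniesER} verbatim, whereas omitting them forces you to re-derive the gadget correctness in a modified construction for no gain.
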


The proof follows the same pattern as the rigidity hardness proofs in~\cite{A08,S13}. For a family $f = (f_i)_{i \in [n]}: \RN^d \rightarrow \RN^n$ of (explicitly given) polynomials, let $V(f) = \{x \in \RN^d: f(x) = 0\}$.

We start with the problem \ISO: given a family $f = (f_i)_{i \in [n]}: \RN^d \rightarrow \RN^n$ of polynomials such that $0 \in V(f)$, is $0$ an \defn{isolated zero}? That is, is there an $\varepsilon > 0$ such that for every $x \in \RN^d$ with $0 < \lvert x \rvert < \varepsilon$ we have $x \not\in V(f)$? It was shown in~\cite{S13} that this problem is \VR-complete. To apply a result by Abrahamsen and Miltzow~\cite{AM19} later, we need $V(f)$ to be compact; as the following lemma shows, we can assume this to be the case. The proof of the lemma can be found in Appendix~\ref{app:ISOcompact}.

\begin{lemma}\label{lem:ISOcompact}
    Testing whether $0$ is an isolated zero of a family of polynomials $f: \RN^d\rightarrow \RN^n$ is \VR-complete even if $V(f)$ is a compact set.
\end{lemma}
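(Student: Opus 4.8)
\textbf{Proof proposal for Lemma~\ref{lem:ISOcompact}.}

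The plan is to reduce the general \ISO\ problem (known to be \VR-complete by~\cite{S13}) to the restricted version where $V(f)$ is compact, by appending to the given family $f$ one additional polynomial that forces all solutions to lie in a bounded set, without altering whether $0$ is an isolated zero. Concretely, given $f = (f_i)_{i\in[n]}: \RN^d \to \RN^n$ with $0 \in V(f)$, I would introduce a new variable $t$ and form a new family $g: \RN^{d+1} \to \RN^{n+1}$ consisting of the polynomials $f_1,\dots,f_n$ (now viewed as functions of $x$ only, ignoring $t$) together with the single extra equation
\[
t\cdot\bigl(1 + x_1^2 + \cdots + x_d^2\bigr) = 1 .
\]
This last equation defines a graph over the $x$-space: for every $x \in \RN^d$ there is exactly one $t$, namely $t = 1/(1+|x|^2)$, and conversely $t$ ranges over $(0,1]$. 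Hence $V(g)$ is the image of $V(f)$ under the homeomorphism $x \mapsto (x, 1/(1+|x|^2))$. Crucially, on $V(g)$ we have $t \in (0,1]$ and $x_i^2 \le 1/t - 1$, but that alone does not bound the set; instead I should pick the extra equation to directly clip the domain, e.g. replace it by a constraint that is satisfiable exactly on a ball. A cleaner choice: add the variable $t$ and the polynomial $|x|^2 + t^2 = (2\varepsilon_0)^2$ is wrong too since it changes the zero set near $0$.

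So the right move is the standard ``compactification by a bump'': since $0$ being isolated is a purely local property, fix any radius $\rho > 0$ and multiply each $f_i$ by nothing, but instead note we only need $V(f) \cap \{|x| \le \rho\}$ to capture isolatedness at $0$, and this intersection is already compact. The subtlety is that ``$V(f)$'' must literally be the zero set of the new family, so I cannot just restrict attention. The fix is to substitute a new coordinate system that maps $\RN^d$ diffeomorphically onto an open ball: let $x_i = y_i/(1 - |y|^2)$ for $|y| < 1$, but this introduces denominators. To stay polynomial, introduce an auxiliary variable $s$ with the equation $s\cdot(1-|y|^2) = 1$ is again unbounded. I would therefore instead use the classical trick from~\cite{AM19} / \cite{S13}: append the single polynomial equation
\[
x_1^2 + \cdots + x_d^2 = u(1-u), \qquad \text{with new variable } u,
\]
no --- the honest and simplest route is the one actually used in these papers: replace $f$ by the family $(f_1, \dots, f_n, \, h)$ where $h(x) = \bigl(|x|^2 - R^2\bigr) \cdot w^2 + (\text{nothing})$ is not it either.

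Let me state the approach I would actually carry out. Since every $f_i$ is an explicitly given polynomial, $V(f)$ is an algebraic set; whether $0$ is isolated in $V(f)$ depends only on $V(f) \cap B_\rho$ for small $\rho$. I would apply the inverse of stereographic projection (or the map $\phi(x) = x/\sqrt{1+|x|^2}$) which is a polynomial-free diffeomorphism of $\RN^d$ onto the open unit ball; to realize the transformed variety as the zero set of a \emph{polynomial} family, introduce one auxiliary variable $r$ and the polynomial relations $r^2(1+|x|^2) = 1$ together with rewriting nothing else, so the new variety $\widetilde V \subseteq \RN^{d+1}$ is $\{(x,r) : f(x) = 0, \ r^2(1+|x|^2) = 1, \ r>0\}$ --- but $r > 0$ is not polynomial, so instead parametrize by the \emph{squared} coordinates. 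The hard part of the write-up --- and the main obstacle --- is precisely this: producing a genuinely \emph{bounded} algebraic set (not merely a bounded semialgebraic set) whose structure near the image of $0$ is a homeomorphic copy of the structure of $V(f)$ near $0$, while keeping everything polynomial and the reduction polynomial-time. The standard resolution, which I would invoke, is: introduce new variables $\lambda$ and equations of the form $\lambda(1 + \sum_i f_i(x)^2 \cdot N_i) = $ (suitable bounded expression) --- more precisely, one replaces each unbounded ``would-be'' coordinate by a bounded surrogate and adds one defining polynomial per new variable. Once the compact family $g$ is built, the equivalence ``$0$ is isolated in $V(f)$ $\iff$ the corresponding point is isolated in $V(g)$'' follows because $\phi$ (or whichever bounded parametrization is used) is a homeomorphism fixing the origin and taking a neighborhood of $0$ in $V(f)$ to a neighborhood of $\phi(0)$ in $V(g)$. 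Compactness of $V(g)$ follows since it is a closed (being an algebraic set) and bounded (by construction) subset of $\RN^{d+k}$. \VR-hardness is inherited from~\cite{S13} via this polynomial-time reduction, and membership in \VR\ is clear since the restricted problem is a special case of the general one. I expect the bookkeeping to make the parametrization polynomial --- rather than the topological equivalence --- to be where all the real work lies, which is why the paper defers it to Appendix~\ref{app:ISOcompact}.
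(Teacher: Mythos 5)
Your proposal does not contain a proof; it is a sequence of attempts, several of which you yourself correctly reject mid-sentence, followed by an assertion that ``the standard resolution'' exists without ever stating it. The last paragraph acknowledges that ``producing a genuinely bounded algebraic set \dots\ while keeping everything polynomial'' is ``the main obstacle,'' and then leaves that obstacle unresolved. That obstacle is exactly the content of the lemma, so this is a genuine gap, not a deferred detail.

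Moreover, the strategy you gesture at is different from the paper's and runs into a real problem you do not address. You want a polynomial-time computable homeomorphism $\phi$ carrying $\mathbb{R}^d$ into a bounded region and carrying $V(f)$ to an \emph{algebraic} set $V(g)$ with $\phi(0)$ isolated iff $0$ is isolated in $V(f)$. But if $\phi$ maps $\mathbb{R}^d$ onto the open unit ball, then $\phi(V(f))$ is closed only in the open ball, not in $\mathbb{R}^d$; to make it an algebraic set (hence closed) you would have to pass to its Euclidean closure, adding boundary points that are not controlled by $f$ and are not obviously cut out by polynomials of size polynomial in $|f|$. You flag this (``$r>0$ is not polynomial,'' ``again unbounded'') but never get past it.

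The paper avoids this entirely by not trying to preserve the structure of the given ISO instance. It instead reduces from a different $\exists\mathbb{R}$-complete problem: deciding whether a degree-$2$ family $f$ has a zero in the unit ball. Given such an $f$, it \emph{homogenizes}: $g_i(x,y_1,y_2,y_3) = y_1^2 f_i(x_1/y_1,\dots,x_d/y_1)$, plus two more equations $\sum_i x_i^2 + y_2^2 = y_1^2$ and $y_1^2 + (1-y_3)^2 = 1$. The second extra equation confines $(y_1,y_3)$ to a circle, which in turn bounds $y_2$ and the $x_i$, so $V(g)$ is closed and bounded, hence compact, with no closure-taking or transcendental reparametrization. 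And $0 \in V(g)$ is isolated iff $f$ has no zero in the ball, because along any zero of $g$ with $y_1 \neq 0$, the rescaled point $(x_1/y_1,\dots,x_d/y_1)$ lies in $B^d(0,1)$ and is a zero of $f$, and conversely any zero of $f$ in the ball gives a one-parameter family of zeros of $g$ approaching $0$ as $y_1 \to 0$. This homogenization-plus-slicing trick is the missing idea; your compactification-by-homeomorphism plan would require a genuinely different (and substantially harder) argument to make the image algebraic.
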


Given a formula $\varphi(x)$ with free variables $x \in \RN^n$, we define
$V(\varphi)$, the  \defn{realization-space} of $\varphi$ as:
\[V(\varphi) := \{x \in \RN^n: \varphi(x)\}.\]

\begin{proof}[Proof of Theorem~\ref{thm:pennyrigVR}]
Suppose we are given a family $f: \RN^d\rightarrow \RN^n$ of polynomials such that $0 \in V(f)$ and $V(f)$ is compact. By Lemma~\ref{lem:ISOcompact} it is \VR-complete to tell whether $0$ is isolated in $V(f)$. Using~\cite[Theorem 1]{AM19} we can efficiently compute an instance $\Phi = (\exists x \in [\frac{1}{2},2]^{n'})\ \varphi(x)$ of \ETRINV\ and a rational point $p \in [\frac{1}{2},2]^{n'}$ such that $0$ is an isolated $0$ in $V(f)$ if and only if $p$ is an isolated point in $V(\varphi)$.  

Now using our reduction in Theorem~\ref{thm:penniesER} we can build a graph $G$ such that every point in $V(\varphi)$ corresponds to a realization of $G$ as a penny graph. It follows that the realization corresponding to $p$ is rigid if and only if $p$ is isolated in $V(\varphi)$. 

It remains to show that the realization corresponding to $p$ only requires constructible coordinates. But this follows from the nature of our gadgets: all points in our linkages can be obtained by specifying their distance from two other points (and possibly selecting one of two possible solutions), which corresponds to intersecting two circles (leading to square root expressions); for example, see how $C$ is obtained from $H$ and $A$ in Figure~\ref{fig:A-frame-proof}. Similarly, the locations of all pennies in our penny gadgets can be described with respect to the (constructible by induction) endpoints of the bars they belong to.
\end{proof}

\defn{Global rigidity} is a stronger rigidity notion that requires all realizations of a linkage (or penny graph) to be congruent. Abel, Demaine, Demaine, Eisenstat, Lynch, and Schardl~\cite{ADDELS25} showed that testing global rigidity of non-crossing matchstick graphs is \VR-complete. Our construction does not immediately yield this result; for example, Hart's A-frame is not globally rigid, since it can bend in one of two ways. It is likely that our linkage and penny graph gadgets can be made globally rigid, but we have not attempted to do so. 

For a recent paper discussing rigidity questions on penny (and marble) graphs, see Dewar et al.~\cite{DGKMN25}.



\section{Grid Embeddings}\label{sec:GE}

At a first glance asking for grid embeddings of penny graphs seems hopeless: an equilateral triangle requires at least one irrational coordinate in any embedding, so no graph containing a $3$-cycle is a penny graph with its vertices lying on a square grid. A second glance suggests two follow-up questions: $(1)$ what can we say about penny graphs on triangular grids, and $(2)$ what if we consider graphs without any $3$-cycles?

We do not have any answers to question $(1)$. It is conceivable that our reduction for proving Theorem~\ref{thm:penniesER} satisfies (or can be modified) to satisfy the property that the \ETRINV\ formula $\varphi$ has a solution in the rationals if and only if $G$ is a penny graph embeddable on a triangular grid. In this case, the triangular grid-embedding variant of penny graphs would be complete for \EQ, the existential theory of the rational numbers, which is not known to be decidable. Graph drawing problems for which it is known that their grid variants are \EQ-complete include the planar slope number~\cite{H17}, recognition of visibility graphs~\cite{CH17}, and right-angle crossing graphs~\cite{S23}. On the other hand, it is also conceivable that $G$ 
can be constructed from $\varphi$ in such a way that there always is a triangular grid embedding of the penny graph if $\varphi$ is satisfiable. In that case the problem would be \ER-complete.

Question $(2)$ can be sharpened: How hard is it to recognize whether $G$ is a penny graph if $G$ is bipartite or a tree? As we mentioned earlier, the problem is \NP-hard for trees, even in the fixed embedding variant~\cite{BDLRST15}. This suggests the question of whether a tree penny graph can always be realized on a grid, and if so, how small a grid? If we could show that the grid has polynomial-size, then recognizing tree penny graphs would be \NP-complete. Unfortunately, we seem to be very far from showing such a result, but we can show that a doubly-exponential size grid is sufficient for trees.

\begin{theorem}\label{thm:treegrid}
 A tree which is a penny graph can be embedded on a grid of size at most $2^{2^{O(n^k)}}$ where $n$ is the order of the tree.
\end{theorem}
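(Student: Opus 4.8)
The plan is to realize a tree penny graph by first normalizing its geometry so that all the non-trivial algebraic relations are isolated in a system of polynomial equations of controlled size, and then invoking a quantitative real-root / Positivstellensatz-type bound to replace the algebraic coordinates by rational ones on a grid.

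\medskip

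\textbf{Step 1: Encode the realization as a small polynomial system.} Given a tree $T$ on $n$ vertices that is a penny graph, write the realizability condition in the style of~\eqref{eq:Gpenny}: variables $x_i,y_i$ for the penny centers, one equation $(x_i-x_j)^2+(y_i-y_j)^2 = 1$ for each edge, and strict inequalities $(x_i-x_j)^2+(y_i-y_j)^2 > 1$ for non-edges. Since $T$ has at most $n-1$ edges, this is a system of $O(n)$ polynomial equalities and $O(n^2)$ inequalities in $2n$ variables, each of degree $2$ and with integer coefficients of bit-size $O(1)$. I would first observe that the \emph{equality} part alone (forgetting the non-edge inequalities) describes a semialgebraic set whose nonemptiness is exactly equivalent to $T$ being a \emph{non-crossing} realization plus possibly some overlaps; but because $T$ is a tree, the edge-length equations leave $n{+}2$ degrees of freedom (fixing one edge pins down $2$ coordinates, and then each further vertex adds one constraint, leaving essentially rotational freedom at each vertex), so the realization is very far from rigid.

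\medskip

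\textbf{Step 2: Pick a combinatorial "shape" and discretize angles.} The key idea is that a tree realization is determined by, for each edge, the angle it makes relative to its parent edge, and these angles range over open intervals determined by the non-crossing/non-contact constraints with nearby branches. I would root $T$, process vertices in BFS order, and argue that the set of valid angle-vectors is a nonempty semialgebraic set $S \subseteq \mathbb{R}^{n-1}$ cut out by $O(n^2)$ polynomial inequalities of degree $O(1)$ (the pairwise distance conditions, expressed via the cumulative rotation matrices — note these involve $\cos,\sin$ of sums of angles, so I would instead carry the unit direction vector $(c_e, s_e)$ of each edge with the single constraint $c_e^2 + s_e^2 = 1$, keeping everything polynomial of degree $O(n)$ after composing rotations, or better, degree $O(1)$ per step if I introduce auxiliary variables for each partial product). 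Since this semialgebraic set is \emph{full-dimensional and open} in the relevant coordinates (a genuine penny realization of a tree can always be perturbed slightly — trees are flexible), it contains a rational point, and a quantitative bound on the complexity of semialgebraic sets (e.g. the single-exponential bound of Basu–Pollack–Roy, or Grigoriev–Vorobjov) gives that it contains a rational point with coordinates of bit-size singly exponential in the input size $m = O(n^k)$, i.e. of size $2^{O(n^k)}$.

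\medskip

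\textbf{Step 3: From rational angles to a grid embedding.} Rational values for the direction cosines $(c_e,s_e)$ give, after composing $n-1$ rotations and scaling the penny diameter to a common denominator, rational coordinates for all centers with denominators that multiply up: each step multiplies the common denominator by a factor of size $2^{O(n^k)}$, and there are $O(n)$ steps, so the final common denominator is $\big(2^{O(n^k)}\big)^{O(n)} = 2^{O(n^{k+1})}$. Clearing this denominator places all centers on an integer grid of side length $2^{2^{O(n^k)}}$ (the outer exponential absorbing the polynomial blow-up into the constant $O(n^k)$). One must still check that rounding the direction vectors to nearby rationals does not destroy the strict non-contact inequalities; because $S$ is open and we have a quantitative lower bound $2^{-2^{O(n^k)}}$ on the "margin" of any realization (again from the complexity bounds on semialgebraic sets, applied to the minimum of the non-edge distance functions minus $1$ over $S$), a rational approximation within that margin still realizes $T$.

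\medskip

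\textbf{Main obstacle.} The hard part will be Step 2: cleanly arguing that the realization space of a tree penny graph, in the right coordinates, is \emph{open} (so that a rational point exists at all) and bounding its defining complexity, because the naive parametrization by composed rotations has degree growing with $n$, and one must be careful that introducing auxiliary variables to keep the degree bounded does not inflate the number of variables beyond $\mathrm{poly}(n)$ — it does not, but the bookkeeping of the non-crossing constraints (only $O(n)$ pairs of branches can actually come close, by planarity and the $\le 3n-6$ edge bound, so not all $\binom{n}{2}$ pairs matter) needs to be done carefully. Once the polynomial system is confirmed to have size $\mathrm{poly}(n)$, degree $O(1)$, and coefficient bit-size $O(1)$, the doubly-exponential grid bound follows from standard effective bounds in real algebraic geometry essentially as a black box.
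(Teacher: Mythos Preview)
Your overall strategy---parametrize the tree realization, argue the realization space is open, and invoke a quantitative rational-point bound from real algebraic geometry---matches the paper's. The gap is in Step~2, and you half-identify it yourself. You need the parameter space to be simultaneously (a) described by polynomial (in)equalities, so that the effective bounds apply, and (b) genuinely \emph{open} in some $\mathbb{R}^m$, so that a rational point exists at all. Angles give (b) but not (a); your proposed fix, carrying $(c_e,s_e)$ with the equality constraint $c_e^2+s_e^2=1$, gives (a) but destroys (b): the set now lives on a product of circles, which has empty interior in $\mathbb{R}^{2(n-1)}$, and the BPR/Grigoriev--Vorobjov/Khachiyan--Porkolab results you invoke do not promise a \emph{rational} point on such a set---they give a point of bounded algebraic description, and a generic point on $\prod S^1$ is irrational. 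So Step~3 has nothing rational to start from.

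The paper resolves this with one clean trick: for each directed tree edge $uv$, let the parameter $m_{uv}\in\mathbb{R}$ be the slope of the line through the bottommost point of $D_u$ and the topmost point of $D_v$. A short computation shows that, with $u$ at the origin, this forces $v=\bigl(\tfrac{2m}{1+m^2},\tfrac{m^2-1}{1+m^2}\bigr)$---exactly the Pythagorean/stereographic parametrization of the unit circle. Thus (i) all penny centers are rational functions of the $m_{uv}$, so the non-contact conditions are polynomial after clearing denominators; (ii) the realization space $S\subseteq\mathbb{R}^{n-1}$ is cut out only by \emph{strict} inequalities, hence open; and (iii) rational parameters give rational centers directly, so your Step~3 denominator-chasing is unnecessary. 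Then Khachiyan--Porkolab~\cite{KP97} gives a rational point in $S$ of singly-exponential bit-size, yielding the doubly-exponential grid. To repair your argument, replace each pair $(c_e,s_e)$ by a single parameter $t_e$ via $c_e=(1-t_e^2)/(1+t_e^2)$, $s_e=2t_e/(1+t_e^2)$; this is essentially the paper's move in different clothing.
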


The same result also holds for matchstick graphs with essentially the same proof.

\begin{proof}
The main trick in the proof is to parameterize the embeddings of $T$ in such a way that rational parameters correspond to a grid embedding. To do this we view $T$ as a rooted tree; to each directed edge $uv$ in the tree we associate a real number $m_{uv} \in \RN$. The collection of these numbers describes a potential penny graph drawing of $T$ as follows: let the root of $T$ be realized by a unit disk centered at the origin. Assuming the penny $D_u$ for $u$ has already been placed, we place the penny $D_v$ for $v$ such that the slope between the 
bottommost point of $D_u$ and the topmost point of $D_v$ has slope $m_{uv}$ and $D_u$ and $D_v$ touch; in an abuse of notation we write $u$ and $v$ for the centers of $D_u$ and $D_v$, respectively. See Figure~\ref{fig:treetwopennies}. We call this drawing {\em potential}, since disks belonging to non-adjacent vertices may overlap, which is not legal in a penny drawing, of course.
\begin{figure}[htbp]
    \centering
    \includegraphics[height=2in]{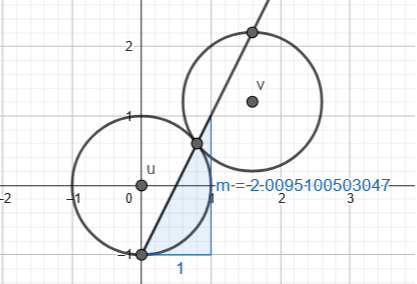}
    \caption{Two pennies $D_u$ and $D_v$ in a tree with parameter (slope) $m = m_{uv}$.}
    \label{fig:treetwopennies}
\end{figure}   
This parametrization is based on the parametrization of Pythagorean triangles. Assuming that $u$ has rational coordinates, we have the following property: $m_{uv}$ is rational if and only if $v$ has rational coordinates. This is because the contact point between the two disks has that property and $v$ is the reflection of $u$ in that point. We conclude that $(m_{uv})_{uv \in E(T)}$ describes a placement of pennies (which may not be legal) such that the centers of all pennies are rational if and only if all the $m_{uv}$ are rational. If there is a penny graph realization of $T$ on a grid, we can move the root of the tree to the origin, and (rationally) rotate the embedding such that there are no two center vertices with the same $x$-coordinate. This realization can then be described using rational parameters $(m_{uv})_{uv \in E(T)}$.

Let $S$ be the semialgebraic set consisting of all $(m_{uv})_{uv \in E(T)}$ which correspond to a legal penny graph drawing so any pair of disks not corresponding to an edge in $T$ have distance greater than $0$. It follows that $S$ is an open set. If $T$ is a penny graph, then $S$ is not empty, and, by a result of Kachiyan and Porkolab~\cite[Corollary 2.6]{KP97}, contains a rational point of at most exponential bit-complexity, which can be turned into an embedding of $T$ on a grid of size at most $2^{2^{O(n^k)}}$.
\end{proof}

The bound in Theorem~\ref{thm:treegrid} is unlikely to be optimal but we do not see any simple tools that would remove even one of the exponentiations, let alone both, which would imply an \NP\ upper bound on the penny graph recognition problem for trees. Can one construct trees  which require a single-exponential size grid in a penny graph embedding?

We do not expect that recognizing whether a tree is a penny graph is \ER-complete (though we have no supporting evidence), but what about bipartite graphs? The problem with bipartite graphs or any graph not containing $3$-cycles is that they do not seem to be rigid
which makes encoding hard. Our reduction certainly relies on the presence of $3$-cycles. So we again face the combinatorial question of whether bipartite penny graphs can always be embedded on a grid. The complexity of the question is also wide
open.

\section{Open Questions}

We saw that penny graph recognition is \ER-complete; our construction made essential use of $3$-cycles, to rigidify the realizations, and induced $4$-cycles, to simulate a degree-$3$ vertex with variable angles. Does the recognition problem remain \ER-hard if the penny graph does not contain $3$-cycles and/or induced $4$-cycles? (Eppstein~\cite{E18} showed that penny graphs without $3$-cycles are $2$-degenerate.) Encoding without these tools appears difficult; on the other hand, it is not clear that forbidding these substructures makes the problem easier to solve. We already discussed the tree case in Section~\ref{sec:GE}.  

Another restriction to consider is the maximum degree of the penny graph; the gadgets constructed in our reduction have maximum degree $6$; with a bit more care we can avoid degree-$6$ vertices so that penny graph recognition remains \ER-hard for maximum degree $5$. Since all graphs of maximum degree $2$ are penny graphs, this leaves open the cases of maximum degree $3$ and $4$. We conjecture that the penny graph recognition problems remains \ER-hard for maximum degree~$4$. The same observations apply to matchstick graphs.

Breu and Kirkpatrick~\cite{BK96} introduced the approximate penny graph recognition problem, where the disks have radii (not necessarily the same) in an interval $[1,\rho]$; they showed that this problem is \NP-hard. Using standard methods---see the case of approximate \RAC-drawings, for example~\cite[Corollary 18]{S23}---it can be shown that the problem remains \ER-hard for $\rho = 2^{-2^{n^c}}$ for some constant $c>0$, for $n$-vertex graphs, but the complexity of the approximate penny graph recognition problem in which $\rho$ itself is constant is open.

Penny graphs have been studied on surfaces other than the plane, e.g.\ Lorand showed that $K_5$ and $K_{3,3}$ are penny graphs on the flat torus~\cite{L24}. It is easy to see that our constructions can be extended to show that whether a graph is a penny graph on a flat torus is still \ER-complete (by constructing a gadget that blocks access to the topology). Less obvious is what happens on other surfaces, like the sphere, or hyperbolic surfaces. Hyperbolic unit disk intersection graphs are known to be \ER-complete~\cite{BBDJ23}.

\bibliographystyle{plainurl}
\bibliography{penny}

\clearpage
\appendix
\section{Hart's A-frame}\label{app:AFrame}

We prove Claim~\ref{claim:A-frame}.

\begin{figure}[h!]
    \centering    \includegraphics[width=0.45\linewidth]{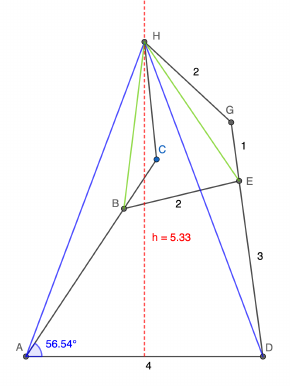}
    \ \ \ \includegraphics[width=0.35\linewidth]{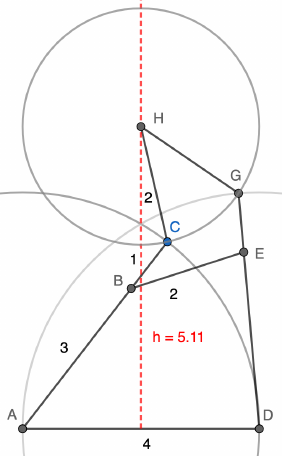}
    \caption{Hart's A-frame.}
    \label{fig:A-frame-proof}
\end{figure}

\begin{claim}
\label{claim:A-frame-proof}
The apex of Hart's A-frame moves along a straight line   that is a perpendicular bisector of the base.  
\end{claim}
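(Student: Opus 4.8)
The plan is to show that Hart's A-frame is, in the language of mechanisms, a \emph{Hart inversor}, and that the composition of the two inversions it performs is a homothety centered at $A$; combined with a collinearity fact this forces the apex onto a fixed line. First I would set up coordinates with $A$ at the origin and $D$ on the positive horizontal axis, so that the fixed base $AD$ is horizontal. Label the bar lengths as in Figure~\ref{fig:A-frame-proof}: the long crossed bars $AC$ and $DG$ (each of length $4$, with interior hinge points $B$ on $AC$ and $E$ on $DG$), the bar $BE$, and the two bars $CH$ and $GH$ meeting at the apex $H$. The key structural observation is that $ABEC$ (or the relevant quadrilateral formed by $A,B,E$ and the point diametrically placed) together with the constraint that $AC$, $DG$ stay straight makes the three points $A$, $H$ (apex), and a suitable third point collinear, while $\{H,\text{something}\}$ and $\{A,\text{something}\}$ are \emph{inverse pairs} with respect to a fixed circle. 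Concretely, I would verify the ``antiparallelogram / Hart cell'' identity: there is a fixed constant $k$ such that $A$, the apex $H$, and the base midpoint (or the point $C'$ symmetric to the setup) lie on a common line through $A$, and the product of the distances from $A$ to two moving points on that line is constant. This is the classical Peaucellier-type relation $\overline{AP}\cdot\overline{AP'} = k$.

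The main steps, in order, are: (1) Using the two straight bars $AC$, $DG$ and the cross-bar $BE$, show that the configuration is an ``A-frame linkage'' in which triangles $ABE$-type similarity forces $A$, $B$, $E$, $G$ (appropriately) into a trapezoid with $BE \parallel$ to a fixed direction; deduce that $H$ lies on the perpendicular bisector of $AD$ by a symmetry/reflection argument — the linkage is symmetric under reflection across the vertical line $x = |AD|/2$ combined with swapping the two halves, and $H$ is the unique fixed point of that symmetry on the bar $HC\cup HG$, hence $H$ has $x$-coordinate $|AD|/2$. (2) Alternatively, and more robustly (since a pure symmetry argument only works if the linkage is forced to stay symmetric, which needs justification), establish the inversor identity: let $P$ and $P'$ be the two natural ``poles'' of the Hart cell on the line $AH$; show $\overline{AP}\cdot\overline{AP'}$ is a constant determined only by the bar lengths, by expanding $|AP|^2|AP'|^2$ and using the bar-length constraints to cancel the moving parts (this is the one genuinely computational step — the ``Hart's identity''). (3) Conclude that the locus of $H$ is the inverse of a line or circle through $A$; check that the relevant driving curve (the circle traced by $C$ about $A$) passes through $A$, so its inverse is a line; finally pin down that this line is vertical, i.e., perpendicular to $AD$ and bisecting it, by evaluating at one convenient configuration (e.g.\ the symmetric ``tall'' position with $h = 4\sqrt2$) and using that a line is determined by its direction plus one point — the direction being forced by the inversion-of-a-line-through-the-pole computation.

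The part I expect to be the main obstacle is Step~(2), verifying the algebraic identity that makes the A-frame an exact straight-line linkage — i.e., showing $\overline{AP}\cdot\overline{AP'}$ is genuinely constant as the linkage flexes, with the specific bar-length bookkeeping of this particular A-frame (lengths $4$, interior hinges $B,E$, the cross-bar $BE$, and the apex bars). One has to be careful about which quadruple of points forms the ``Hart antiparallelogram'' and to confirm that the straightness constraints on $AC$ and $DG$ are exactly what is needed to reduce the general Hart cell to the A-frame shown. Once that identity is in hand, everything else (locus is a line, the line is the perpendicular bisector) is routine. I would likely present Step~(2) via the standard Hart-cell lemma cited from a linkages reference (e.g.~\cite{HowRound}) rather than reprove it from scratch, and spend the written proof mainly on verifying the hypotheses of that lemma for our specific A-frame and on the final normalization identifying the line as the perpendicular bisector of $AD$.
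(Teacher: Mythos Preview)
Your proposal takes a substantially harder and more indirect route than necessary, and as written it has a real gap. You frame the A-frame as a Peaucellier/Hart-cell \emph{inversor} and aim to establish a constant-product identity $\overline{AP}\cdot\overline{AP'}=k$, but you never identify which points play the roles of $P$ and $P'$ in this particular linkage, you are unsure which quadruple forms the ``antiparallelogram,'' and you propose to outsource the key identity to a citation. That is not a proof; it is a hope that the linkage fits a template. (Your own Step~(1) symmetry idea you correctly discard: individual configurations of the A-frame are \emph{not} reflection-symmetric --- they lean left or right --- so symmetry of the linkage as an abstract mechanism does not pin down $H$.) Moreover, although the name ``Hart's second inversor'' is traditional, the reason the A-frame traces a line is not most naturally an inversion-of-a-circle-through-the-pole argument; forcing it into that mold is what makes your Step~(2) feel slippery.

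The paper's argument is a three-line similar-triangles computation that directly proves $AH=DH$ in every configuration, which immediately places $H$ on the perpendicular bisector of $AD$. Concretely: (i) $\triangle HCA \sim \triangle BCH$ (shared angle at $C$, ratio $HC/CA=BC/CH=1/2$) gives $HB/HA=1/2$, and symmetrically $HE/HD=1/2$; (ii) hence $\triangle BHE \sim \triangle AHD$ with ratio $1/2$ (all three side ratios match, since $BE=2=\tfrac12 AD$), so $\angle BHE=\angle AHD$ and thus $\angle AHB=\angle DHE$; (iii) then $\triangle AHB \sim \triangle DHE$ (equal angle at $H$ and $HB/HE=HA/HD$), and since $AB=DE$ they are congruent, giving $AH=DH$. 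No inversion, no case analysis, no external lemma. I recommend replacing your plan with this direct argument.
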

\begin{proof}
We use notation as in Figure~\ref{fig:A-frame-proof}, and we use, e.g., $HA$ to mean both the bar and its length. 

(1) Triangles $HCA$ and $BCH$ are similar because the angle at $C$ is the same and $HC/CA = BC/CH = 1/2$.
Thus $HB/HA = 1/2$.
Similarly, $HE/HD = 1/2$.
This implies that $HB/HE = HA/HD$, a property we will use in step (3).

(2) Triangles $BHE$ and $AHD$ are similar because all three corresponding pairs of edges have ratio $\frac{1}{2}$.  Thus angle $BHE$ equals angle $AHD$.  Subtracting the common angle $BHD$ we get angle $AHB$ equals angle $DHE$.

(3) Triangles $AHB$ and $DHE$ are similar because the angle at $H$ is the same and $HB/HE = HA/HD$ (from (1)).  Since $AB = DE$, the triangles are in fact congruent.  Thus $AH = DH$.  So $H$ is the apex of an isosceles triangle on base $AD$, and $H$ moves on the perpendicular to the base.


We add one further point for use later:

(4) Since triangles $BHE$ and $AHD$ are similar, and $AHD$ is isoceles, so is $BHE$.  Thus $BH=EH$ and triangles $BHC$ and $EHG$ are congruent.  This implies that angle $CHG$ equals $AHD$.
\end{proof}

\begin{claim}
For each value of $h \in (4,4 \sqrt{2})$ there are exactly two configurations of Hart's A-frame (left-leaning and right-leaning) that put $H$ at height $h$. Both configurations are non-crossing. Furthermore, 
$h$ 
and all the angles of the A-frame are  continuous monotone functions of the angle~$CAD$.
\end{claim}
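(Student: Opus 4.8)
The plan is to parametrize everything by the angle $\theta = \angle CAD$ and check the claimed properties directly from the geometry established in Claim~\ref{claim:A-frame-proof}. First I would fix the base $AD$ horizontal with $A$ at the origin and $D$ at distance $8$ (using the edge lengths from Figure~\ref{fig:A-frame}, scaled as needed), so that $C$ lies on the circle of radius $8$ about $A$ at angle $\theta$ above the base. By step~(3) of Claim~\ref{claim:A-frame-proof}, $H$ lies on the vertical perpendicular bisector of $AD$, so the configuration is determined by the single real number $h = $ height of $H$; the content of the present claim is the precise correspondence between $\theta$ and $h$.

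The key computation is to express $h$ as a function of $\theta$. Since $H$ is on the perpendicular bisector and $AH = DH$, and $HB = \tfrac12 HA$ with $B$ the midpoint-type point on $AC$ (more precisely $HB/HA = 1/2$ from step~(1)), one gets an explicit algebraic relation: writing $HA^2$ in terms of $h$ (namely $HA^2 = h^2 + 16$), and using the triangle $AHC$ together with $AC = 8$, $HC = \tfrac12 HA$, the law of cosines at $A$ in triangle $AHC$ gives $HC^2 = HA^2 + AC^2 - 2\cdot HA\cdot AC\cos\angle HAC$. Combined with $\angle HAD$ determined by $h$ (it is $\arctan(h/4)$) and $\angle CAD = \theta$, so $\angle HAC = \angle HAD - \theta$ (for the right-leaning branch), this yields one equation relating $h$ and $\theta$. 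Solving, I expect a clean monotone expression of the form $h = h(\theta)$ on the relevant interval, strictly decreasing, with $h \to 4\sqrt2$ as $\theta \to 0$ (the symmetric ``standing'' position where $C$ is collinear with $A$ and $H$) and $h \to 4$ as $\theta$ grows to the degenerate position where $C$ becomes collinear with $A$ and $E$ (Figure~\ref{fig:A-frame}, right). Monotonicity of $h$ then follows from $dh/d\theta < 0$ on this open interval; since each of the other angles of the A-frame (at $B$, at $H$, etc.) is itself a continuous function of $h$ alone by the rigidity of the linkage given $h$, and these are each checked to be monotone in $h$ by inspection of the explicit formulas, all angles are continuous monotone functions of $\theta$. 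The ``exactly two configurations'' statement is then immediate: for each $h$ in the open interval $(4, 4\sqrt2)$ there is exactly one $\theta > 0$ on the right-leaning branch and, by the mirror symmetry across the perpendicular bisector, exactly one on the left-leaning branch, and these are distinct since $\theta \neq 0$.

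For the non-crossing claim I would argue that in the open interval $h \in (4, 4\sqrt2)$ none of the bars cross: the base $AD$ is disjoint from the rest since $H$ is strictly above it ($h > 4 > 0$); the two ``legs'' $AC$ (through $B$) and $DG$ (through $E$) together with $HC$, $HG$ bound a simple region as long as we stay strictly before the degenerate position $h = 4$ (where $DG$ becomes vertical and $HG$ horizontal, per Figure~\ref{fig:A-frame}, right) and strictly after the standing position; continuity and the fact that the endpoints $h = 4$ and $h = 4\sqrt2$ are the only degeneracies (already identified in Claim~\ref{claim:A-frame}) give non-crossing throughout the open interval. The main obstacle I anticipate is purely bookkeeping: getting the explicit formula $h(\theta)$ and its derivative sign right, and carefully tracking which angle is measured from which reference direction so that the monotonicity directions are consistent (in particular verifying, as asserted in the proof of Lemma~\ref{lem:angletypes}, that $\angle CHG = \angle AHD$ via step~(4) of Claim~\ref{claim:A-frame-proof}, which is what transfers monotonicity of $\angle AHD$ to the apex angle). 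None of this is conceptually deep, but the sign conventions are where an error would most easily creep in.
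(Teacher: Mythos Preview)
Your approach is genuinely different from the paper's and would work in outline, but it has one slip and one real gap.

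The slip: you write ``$HC = \tfrac12 HA$'', but that is not what step~(1) of Claim~\ref{claim:A-frame-proof} gives. The bar $HC$ has fixed length (equal to $\tfrac12\,CA$); it is $HB$ that equals $\tfrac12 HA$. In your law-of-cosines setup this matters: the correct constraint linking $h$ and $\theta$ is simply $\lvert HC\rvert^2 = (\text{const})$ with $C = (\lvert AC\rvert\cos\theta,\lvert AC\rvert\sin\theta)$ and $H = (\lvert AD\rvert/2,\,h)$, which is a clean quadratic in $h$. Your version would produce the wrong relation.

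The gap concerns ``exactly two configurations''. You parametrize by $\theta$ and then invoke mirror symmetry, but fixing $\theta$ (hence $C$, $B$, and $H$) does \emph{not} by itself determine the rest of the linkage: $G$ is the intersection of the circles of radii $\lvert DG\rvert$ and $\lvert HG\rvert$ about $D$ and $H$, so there are a priori two choices for $G$ as well, giving four candidate configurations per $h$. The paper's proof takes the opposite route---it starts from $H$ at height $h$, recovers $C$ and $G$ each as an intersection of two circles, and then observes that the remaining bar constraint $\lvert BE\rvert = 2$ holds if and only if the \emph{same} (left/right) choice is made for $C$ and for $G$. This is precisely the step that cuts four down to two, and your proposal does not supply an equivalent argument; ``rigidity of the linkage given $h$'' and ``mirror symmetry'' presuppose what needs to be shown. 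Once you add that check (it is short: compute $\lvert BE\rvert$ for the consistent versus mixed choices), your forward-from-$\theta$ approach and the paper's backward-from-$h$ approach become two sides of the same coin. For the monotonicity of the apex angle you correctly identify the key input, step~(4), which is exactly what the paper uses.
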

\begin{proof} 
We show that any point $H$ on the perpendicular bisector of $AD$ at height $h \in (4,4 \sqrt{2})$ yields two valid configurations of the A-frame.  We basically follow the above proof backwards. 
For each position of $H$ on the perpendicular bisector of $AD$, point $C$ is at the intersection of two circles as shown in Figure~\ref{fig:A-frame-proof}(\emph{right}). 
Similarly, $G$ is at the intersection of two circles. 
Edge 
$BE$ has length 2 iff we make the same choice (left-leaning or right-leaning) for $C$ and $G$.  

Clearly, $h$ is a continuous monotone function of angle $CAD$. As for the angles, the only non-obvious one is angle $CHG$: by point (4) in the proof of Claim~\ref{claim:A-frame-proof}, angle $CHG$ is equal to $AHD$, which changes monotonically.  
\end{proof}

\section{Proof of Lemma~\ref{lem:ISOcompact}}\label{app:ISOcompact}

Our goal is to show that testing whether $0$ is an isolated zero of $g$ is \VR-complete even if $V(g)$ is a compact set. We closely follow the proof of~\cite[Corollary 3.10]{S13}.

We make use of the result that testing whether a family of polynomials $f = (f_i)_{i \in [n]}: \RN^d \rightarrow \RN^n$ of total degree at most $2$ has a zero in the \defn{($d$-dimensional) unit ball} $B^d(0,1) := \{x \in \RN^d: \lvert x \rvert \leq 1\}$, that is $V(f) \cap B^d(0,1) \neq \emptyset$, is \ER-complete, by~\cite[Lemma 3.9]{S13}. We define a new family of polynomials as follows:
\[g_i(x_1, \ldots,x_d,y_1,y_2,y_3) = y_1^2f_i(x_1/y_1,\ldots,x_d/y_1),\]
where $i \in [n]$, and
\begin{align*}
g_{d+1}(x_1, \ldots,x_d,y_1,y_2,y_3) & = \sum_{i \in [d]} x_i^2 + y_2^2-y_1^2,\\
g_{d+2}(x_1, \ldots,x_d,y_1,y_2,y_3) & = y_1^2+(1-y_3)^2 - 1.
\end{align*}
Then $g = (g_i)_{i \in [d+2]}$ is a family of polynomials, the first $d$ of them homogeneous, such that $(x_1, \ldots, x_n, y_1,y_2,y_3) \in V(g)$ if and only if $y_1 = 0$ or $y_1 \neq 0$ and $(x_1/y_1,\ldots,x_d/y_1) \in V(f)$, and 
$\sum_{i \in [d]} x_i^2 + y_2^2-y_1^2 = 0$ as well as
$y_1^2+(1-y_3)^2 = 1$.
By construction, $0$ is a zero of $g$. Also, if $(x_1,\ldots, x_d, y_1,y_2,y_3) \in V(g)$, then the condition on $g_{d+2}$ implies that $(y_1,y_3) \in B_2(0,1)$, which, using $g_{d+1}$ then implies that $y_2 \in B_1(0,1)$ and $(x_1,\ldots, x_d) \in B_d(0,1)$, so $V(g) \subseteq B_{d+3}(0,1$ is bounded and therefore, by the continuity of $g$, compact. 

\begin{claim}
$f$ has a zero in $B^d(0,1)$ if and only if $0$ is not an isolated zero of $g$.
\end{claim}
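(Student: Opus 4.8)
The plan is to prove the two directions of the equivalence separately, exploiting the homogeneity of the first $d$ polynomials $g_1,\dots,g_d$. The key observation is that the construction is a projective-style ``coning'': a zero of $f$ in the unit ball gets scaled into a whole ray of zeros of $g$ (parametrized by $y_1$), while conversely any nonzero zero of $g$ near the origin must, after rescaling by $1/y_1$, land on a zero of $f$ in the ball.

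First I would prove the ``only if'' direction. Suppose $a = (a_1,\dots,a_d) \in V(f) \cap B^d(0,1)$, so $f_i(a) = 0$ for all $i$ and $|a| \le 1$. For a real parameter $t$ with $0 < t$ small, set $y_1 = t$, $x_j = t\,a_j$, $y_2 = \sqrt{t^2 - t^2|a|^2} = t\sqrt{1-|a|^2}$ (nonnegative since $|a| \le 1$), and $y_3 = 1 - \sqrt{1 - t^2}$. I would then check: $g_i = t^2 f_i(a) = 0$ for $i \in [d]$; $g_{d+1} = t^2|a|^2 + t^2(1-|a|^2) - t^2 = 0$; and $g_{d+2} = t^2 + (1-t^2) - 1 = 0$. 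As $t \to 0^+$ this gives a family of zeros of $g$ converging to $0$ but never equal to $0$ (since $y_1 = t \ne 0$), so $0$ is not isolated in $V(g)$.

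Next I would prove the ``if'' direction by contraposition: assume $f$ has \emph{no} zero in $B^d(0,1)$ and show $0$ is isolated in $V(g)$. Take any point $p = (x_1,\dots,x_d,y_1,y_2,y_3) \in V(g)$ sufficiently close to $0$. Because $V(g) \subseteq B^{d+3}(0,1)$ is compact and $0$ is a zero, I need an $\varepsilon$-neighborhood argument. If $y_1 = 0$ at such a point, then $g_{d+1}$ forces $\sum x_i^2 + y_2^2 = 0$, hence all $x_i = 0$ and $y_2 = 0$, while $g_{d+2}$ forces $(1-y_3)^2 = 1$, i.e.\ $y_3 \in \{0, 2\}$; near $0$ only $y_3 = 0$ survives, so $p = 0$. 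If instead $y_1 \ne 0$, then $g_1 = \dots = g_d = 0$ says $(x_1/y_1,\dots,x_d/y_1) \in V(f)$, and $g_{d+1} = 0$ says $\sum (x_i/y_1)^2 = 1 - (y_2/y_1)^2 \le 1$, so this point lies in $B^d(0,1) \cap V(f)$ --- contradicting our assumption. Hence the only zero of $g$ near $0$ is $0$ itself, i.e.\ $0$ is isolated.

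The main obstacle is making the ``$y_1 \ne 0$ is impossible near $0$'' step fully rigorous: I am implicitly using that if $f$ has no zero in the closed ball $B^d(0,1)$ then there is no sequence of zeros of $g$ with $y_1 \to 0$ but $y_1 \ne 0$, which requires knowing that the scaled points $(x_i/y_1)$ stay in a fixed compact set --- and that is exactly what $g_{d+1}$ guarantees (it bounds $\sum (x_i/y_1)^2$ by $1$ regardless of how small $y_1$ is). So no compactness-of-$V(f)$ is needed; the homogenization plus the sphere equation $g_{d+1}$ does all the work. I would state this cleanly and note that the degree-$2$ bound on $f$ (from \cite[Lemma 3.9]{S13}) is preserved: $g_i$ has degree $2$, and $g_{d+1}, g_{d+2}$ are visibly degree $2$.
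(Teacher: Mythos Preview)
Your proposal is correct and follows essentially the same approach as the paper: construct the ray of zeros $(ta_1,\dots,ta_d,t,t\sqrt{1-|a|^2},1-\sqrt{1-t^2})$ for the forward direction, and for the reverse direction use $g_{d+1}$ to force $p=0$ when $y_1=0$ and to bound $\sum(x_i/y_1)^2\le 1$ when $y_1\ne 0$. Your contrapositive framing is in fact slightly cleaner than the paper's, which phrases the reverse direction via a sequence $p_k\to 0$ and then takes a (superfluous) limit of the rescaled points $x'_k$ in $B^d(0,1)$, whereas you correctly observe that a single rescaled point already lies in $V(f)\cap B^d(0,1)$.
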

\begin{proof}
Suppose that $(x_1, \ldots, x_d) \in B^d(0,1)\cap V(f)$. Then $s := \sum_{i \in [d]} x_i^2 \leq 1$. For arbitrary $y_1 \in (0,1)$ 
define $x'_i = y_1x_i$, $i \in [d]$, $y_2 = y_1 \sqrt{1-s}$, and $y_3 = 1-\sqrt{1-y_1^2}$. Then
$(x'_1, \ldots x'_d,y_1,y_2,y_3) \in V(G)$
since $g_i(x'_1, \ldots x'_d,y_1,y_2,y_3) = y_1^2f(x_1,\ldots,x_d) = 0$ for $i \in [d]$, and 
\begin{align*}
   g_{d+1}(x'_1, \ldots x'_d,y_1,y_2,y_3) & = \sum_{i \in [d]} (x'_i)^2 +y_2^2-y_1^2 \\
            &= y_1^2 \sum_{i \in [d]} x_i^2 +y_2^2-y_1^2 \\
            &= y_1^2 s  +y_2^2-y_1^2\\
            & = 0
\end{align*}
as well as 
\[g_{d+2}(x'_1, \ldots x'_d,y_1,y_2,y_3) =y_1^2 + (1-y_3)^2 = 0.\]
So for every $y_1 \in (0,1)$ we have found a zero $(x'_1, \ldots x'_d,y_1,y_2,y_3)$ in $V(g)$ and since, by definition, $x'_i$, $y_2$ and $y_3$ converge to $0$ of $y_1$ does, $0$ is not an isolated zero of $g$.

For the other direction, let us assume that $0$ is not isolated in $V(g)$. So for every $k > 0$ there is a point $p_k = (x_{k,1},\ldots, x_{k,d},y_{k,1}, y_{k,2}, y_{k,3}) \in B^{d+3}(0,1/k) -\{0\}$ such that $f(p_k) = 0$. We will argue that the first $d$ coordinates $(x_{k,1},\ldots, x_{k,d})$ of $p_k$ converge to a point $x \in B^d(0,1)$. 

If  $y_{k,1} = 0$, then $g_{d+1}(p_k) = 0$ implies that $y_{k,2} = \sum_{i \in [d]} x_i^2 = 0$, and $g_{d+2}(p_k) = 0$ forces $y_{k,3} = 0$. So $p_k = 0$ which contradicts our choice of $p_k$. We conclude that $y_{k,1} \neq 0$, allowing
us to define $x'_k := (x_{k,1}/y_{k,1}, \ldots, x_{k,d}/y_{k,1})$.
Then
\[f_i(x'_k) = 0\]
for $i \in [d]$. Since $g_{d+1}(p_k) = 0$ we know that
$\sum_{i \in [d]}x_{k,i}^2 \leq y_{k,1}^2$, so $\sum_{i \in [d]}(x_{k,i}/y_{k,1})^2 \leq 1$ and $x'_k = (x_{k,1}/y_{k,1}, \ldots, x_{k,d}/y_{k,1}) \in B_d(0,1)$. We conclude that $(x'_k)$ is a sequence of zeros of $f$ in $B_d(0,1)$ and, since $B_d(0,1)$ is compact the sequence has a limit point $x \in B_d(0,1)$ which belongs to $V(f)$ by continuity.
\end{proof}

\section{Proof of Claim~\ref{claim:rigidgadgets}}\label{app:riggad}

 First of all, we observe that the regular tetrahedron is rigid (up to an isometry, possibly orientation-reversing). Similarly, the \defn{regular triangular bipyramid}, the union of two regular tetrahedra along a face is rigid, since the two apexes have to be distinct. 

 To see that the UFO gadget is rigid, we view it from the top, as seen in the left half of Figure~\ref{fig:smallUFOrigid}. We can then consider the plane angles (as in the projection from the top) between consecutive vertices of the wheel. Each of these angles must be at most $2\pi/6$: If one of the angles, say between $uv$ and $uw$ are seen from above were larger than $2\pi/6$, the actual angle between $uv$ and $uw$ would be larger than $2\pi/6$ implying that $vw$ cannot have the same length as $uv$ and $uw$ (see the right half of the figure). Since the plane angles must sum up to $2\pi$ we conclude that the angles are all exactly $2\pi/6$ which implies that the hexagon is flat, forcing the embedding to be rigid. 

\begin{figure}
    \centering    \includegraphics[height=2in]{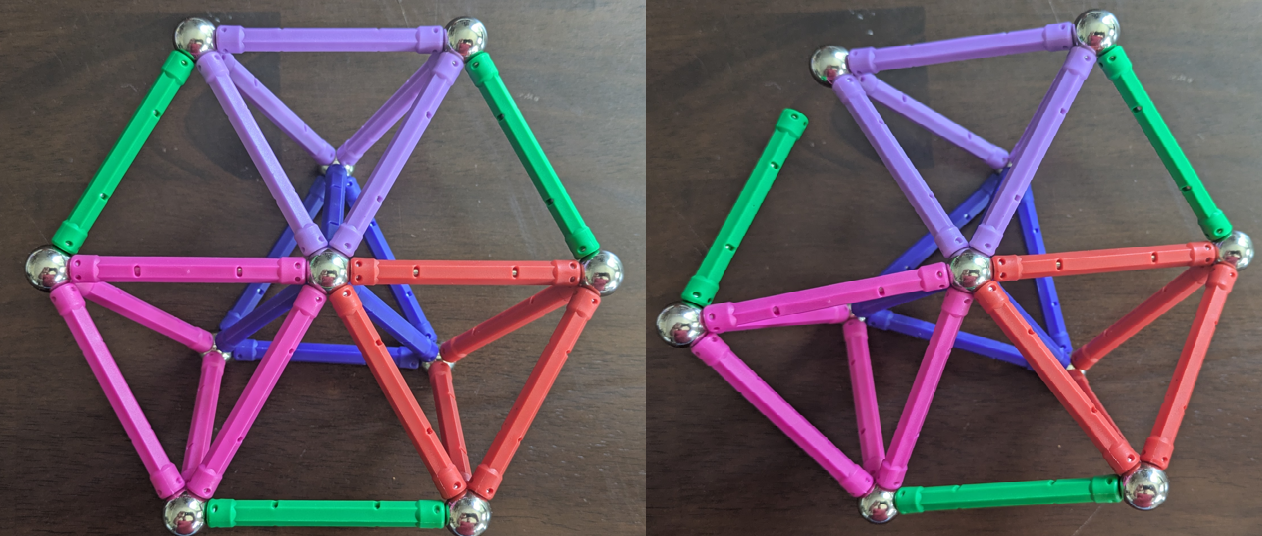}
    \caption{Rigidity of the UFO.}
    \label{fig:smallUFOrigid}
\end{figure}

 For the $k$-arm we work with the following observation: If two rigid graphs share a tetrahedron, then their union is rigid. Since we can think of a $k$-arm as the union of $k$ UFO gadgets, every two of which have a tetrahedron in common, the $k$-arms are rigid.

 The spinner is the union of four rigid graphs, two UFOs and two tetrahedra; to this union we can add three triangular bipyramids, two overlapping a UFO and the tetrahedron attached to it, and one overlapping the two UFOs; the union of these seven graphs is still the spinner, and the bipyraminds share a tetrahedron with each of the graphs they overlap, implying that the whole spinner is rigid. 

\end{document}